\definecolor{myurlcolor}{rgb}{0,0,0.7}
\definecolor{myrefcolor}{rgb}{0.1,0,0.9}
\newcommand\norm[1]{\left\lVert#1\right\rVert}
\newcommand{\bit}{\{0,1\}}
\newcommand{\SM}{SM}
\newcommand{\revA}[1]{{#1}}
\newtheorem{theorem}{Theorem}
\newtheorem{lemma}{Lemma}
\newcommand{\Eset}[1]{\underset{#1}{\mathbb{E}}}
\DeclareMathOperator{\E}{\mathbb{E}}
\renewcommand{\eqref}[1]{Eq.~(\ref{#1})} %
\def\app#1#2{%
  \mathrel{%
    \setbox0=\hbox{$#1\sim$}%
    \setbox2=\hbox{%
      \rlap{\hbox{$#1\propto$}}%
      \lower1.1\ht0\box0%
    }%
    \raise0.25\ht2\box2%
  }%
}
\newenvironment{proof}[1][\protect\proofname]{\par
	\normalfont\topsep6\p@\@plus6\p@\relax
	\trivlist
	\itemindent\parindent
	\item[\hskip\labelsep\scshape #1]\ignorespaces
}{%
	\endtrivlist\@endpefalse
}
\providecommand{\proofname}{Proof}
\newcommand{\bra}[1]{\langle #1|}
\newcommand{\ket}[1]{|#1 \rangle}
\newcommand{\braket}[2]{\langle #1 \vert #2 \rangle}
\newcommand{\idg}[1]{{\bfseries #1)}}
\newcommand{\tr}{\mathrm{tr}}
\providecommand{\factname}{Fact}
\providecommand{\claimname}{Claim}
\providecommand{\lemmaname}{Lemma}
\providecommand{\definitionname}{Definition}
\providecommand{\corollaryname}{Corollary}
\definecolor{KB}{rgb}{0.4,0.3,0.9}
\definecolor{THc}{rgb}{0.9,0.3,0.2}
\definecolor{daxcolor}{rgb}{0.9,0.3,0.2}
\newtheorem{definition}{\protect\definitionname}
\newtheorem{corollary}{\protect\corollaryname}
\def\negl{{\rm negl}}
\newcommand{\subfigimg}[3][,]{%
	\setbox1=\hbox{\includegraphics[#1]{#3}}%
	\leavevmode\rlap{\usebox1}%
	\rlap{\hspace*{2pt}\raisebox{\dimexpr\ht1-0.5\baselineskip}{{\bfseries \large\textsf{#2}}}}%
	\phantom{\usebox1}%
}
\newcommand{\sectionMain}[1]{
\let\oldaddcontentsline\addcontentsline%
\renewcommand{\addcontentsline}[3]{}%
\section{#1}
\let\addcontentsline\oldaddcontentsline
}
\newcommand{\prlsection}[1]{{\em {#1}.---}}
\newcommand{\eq}[1]{\hyperref[eq:#1]{(\ref*{eq:#1})}}
\newcommand{\ihpc}{Institute of High Performance Computing (IHPC), Agency for Science, Technology and Research (A*STAR), 1 Fusionopolis Way, $\#$16-16 Connexis, Singapore 138632, Republic of Singapore}
\newcommand{\qinc}{
Quantum Innovation Centre (Q.InC), Agency for Science, Technology and Research (A*STAR), 2 Fusionopolis Way, Innovis \#08-03, Singapore 138634, Republic of Singapore\looseness=-1}
\newcommand{\sutd}{Science, Mathematics and Technology Cluster, Singapore University of Technology and Design, 8 Somapah Road, Singapore 487372, Republic of Singapore\looseness=-1}
\begin{document}

\title{Pseudorandom quantum authentication}

\author{Tobias Haug}
\email{tobias.haug@u.nus.edu}
\affiliation{Quantum Research Center, Technology Innovation Institute, Abu Dhabi, UAE}

\author{Nikhil Bansal}
\email{nikhilbansaliiser@gmail.com}
\affiliation{\ihpc}

\author{Wai-Keong Mok}
\email{darielmok@caltech.edu}
\affiliation{Institute for Quantum Information and Matter, California Institute of Technology, Pasadena, CA 91125, USA}

\author{Dax Enshan Koh}
\email{dax\_koh@ihpc.a-star.edu.sg}
\affiliation{\ihpc}
\affiliation{\qinc}
\affiliation{\sutd}

\author{Kishor Bharti}
\email{kishor.bharti1@gmail.com}
\affiliation{\ihpc}
\affiliation{\qinc}

\begin{abstract}
We introduce the pseudorandom quantum authentication scheme (PQAS), an efficient method for encrypting quantum states that relies solely on the existence of pseudorandom unitaries (PRUs). The scheme guarantees that for any eavesdropper with quantum polynomial-time (QPT) computational power, the encrypted states are indistinguishable from the maximally mixed state. Furthermore, the receiver can verify that the state has not been tampered with and recover the original state with asymptotically unit fidelity.
Our scheme is cost-effective, requiring only polylogarithmic circuit depth and a single shared key to encrypt a polynomial number of states. 
Notably, the PQAS can potentially exist even without quantum-secure one-way functions, requiring fundamentally weaker computational assumptions than semantic classical cryptography. 
Additionally, PQAS is secure against attacks that plague protocols based on QPT indistinguishability from Haar random states, such as chosen-plaintext attacks (CPAs) and attacks that reveal meta-information such as quantum resources. We relate the amount of meta-information that is leaked to quantum pseudoresources, giving the concept a practical meaning. As an application, we construct important cryptographic primitives, such as verifiable pseudorandom density matrices (VPRDMs), which are QPT-indistinguishable from random mixed states while being efficiently verifiable via a secret key, as well as verifiable noise-robust EFI pairs and one-way state generators (OWSGs). Our results establish a new paradigm of quantum information processing with weaker computational assumptions.
\end{abstract}

\maketitle

 \let\oldaddcontentsline\addcontentsline%
\renewcommand{\addcontentsline}[3]{}%

\prlsection{Introduction} One of the most fundamental goals of cryptography is to enable parties to communicate securely over an open channel. But what does ``secure communication'' truly mean? At its core, it involves two key aspects~\cite{goldreich2005foundations,katz2020introduction}:  
\begin{enumerate}
    \item \emph{Secrecy}: Regardless of any prior knowledge an adversary may have, the encrypted text should not reveal any additional information about the underlying message. %
    \item \emph{Integrity}: The receiver should be able to verify that the message was sent by the claimed sender and that it has not been altered during transmission.
\end{enumerate}
Encryption protocols ensure secrecy, while message authentication codes (MACs) are used to establish message integrity~\cite{katz2020introduction}. Together, they form the foundation of secure communication. 
Once the security goal is set, the next step is to define the threat model, which outlines the attacker’s capabilities without assuming their strategy. 
A chosen-plaintext attack (CPA)~\cite{katz2020introduction}, an important threat model, allows the attacker access to the ciphertext for any plaintext of their choosing. For example, this attack proved highly effective during the Battle of Midway in World War II~\cite{symonds2011battle,nsa}. %
To achieve CPA security, encryption must incorporate randomness~\cite{katz2020introduction}. Deterministic systems, which generate the same output for the same input, are susceptible to such attacks. In contrast, randomized encryption produces different results for repeated encryptions of the same message, enhancing security against CPA~\cite{katz2020introduction}.

In fact, there exist encryption schemes that are perfectly secret, meaning they leak no information, even to an adversary with unlimited computational power~\cite{scarani2009security,goldreich2005foundations,katz2020introduction}. However, perfect secrecy is unnecessarily strong and expensive for practical use. Instead, computational security is usually sufficient, allowing a tiny, negligible information leak to adversaries with limited computational power. For example, a leakage probability of $10^{-15}$ would be more than sufficient for any practical purpose. 
This approach, known as semantic security, is the standard for defining cryptographic security in modern cryptography~\cite{katz2020introduction}. Notably, semantic security in classical cryptography relies on an assumption, namely the existence of (quantum-secure) one-way functions~\cite{katz2020introduction}. 
These functions are foundational to all non-trivial private-key cryptography, a key achievement of modern cryptography.  %
However, proving their existence remains unresolved, as it would separate $\mathsf{P}$ from $\mathsf{NP}$, a Millennium Prize problem.
This has driven efforts to base security on even weaker assumptions~\cite{computationalcomplexityImpagliazzosFive,impagliazzo1995personal,kretschmer2021quantum,hiroka2024quantum}, minimizing the reliance on unproven foundations.

Recent advances in quantum cryptography suggest the possibility to base cryptography on assumptions weaker than one-way functions~\cite{kretschmer2021quantum}. 
However, we find that the quantum domain introduces challenges not present in classical systems. An example is the leakage of meta-information about the quantum computing capabilities of the communicating parties, including the amount of quantum resources such as entanglement, magic or coherence. %
As secrecy demands that no additional (meta-)information should be revealed, a natural question arises:
\begin{quote}
\emph{Is it possible to develop a semantically secure quantum communication scheme for quantum messages that guarantees CPA security, prevents meta-information leakage, and relies on assumptions weaker than quantum-secure one-way functions?}
\end{quote}

Despite extensive research in quantum cryptography and communication, this question remains unanswered.  %
In fact, most research in quantum cryptography has centered on information-theoretic security. While protocols like the quantum one-time pad~\cite{ambainis2000private, boykin2003optimal,mosca2000private,brandao2012quantum,ambainis2004small,hayden2004randomizing,dickinson2006approximate,ambainis2009nonmalleable} or quantum MAC~\cite{barnum2002authentication,dupuis2012actively,portmann2017quantum,garg2017new,alagic2017quantum,portmann2022security} achieve perfect secrecy, they require significant shared randomness, making them often impractical. 
Pseudorandomness offers a more viable alternative towards a semantically secure quantum communication, assuming adversaries are limited to quantum polynomial-time (QPT) computational power. Quantum pseudorandom states (PRSs)~\cite{ji2018pseudorandom} and unitaries (PRUs)~\cite{ji2018pseudorandom,schuster2024randomunitariesextremelylow,ma2024construct} efficiently mimic Haar random states and have been employed in encryption, authentication~\cite{ananth2022cryptography,lu2023quantum,morimae2024quantum,ananth2024pseudorandom,ananth2022pseudorandom}, and hiding of quantum resources~\cite{bouland2022quantum,gu2023little,haug2023pseudorandom,bansal2024pseudorandomdensitymatrices,grewal2024pseudoentanglement}. 
However, existing schemes are not CPA-secure (or require quantum-secure one-way functions to achieve it)~\cite{lu2023quantum,ananth2024pseudorandom} or 
support only classical messages~\cite{ananth2022cryptography,morimae2024quantum}.
Despite these advances, no semantically secure quantum communication scheme exists that ensures CPA security, prevents meta-information leakage, and relies on assumptions weaker than one-way functions.

\begin{figure}[htbp]
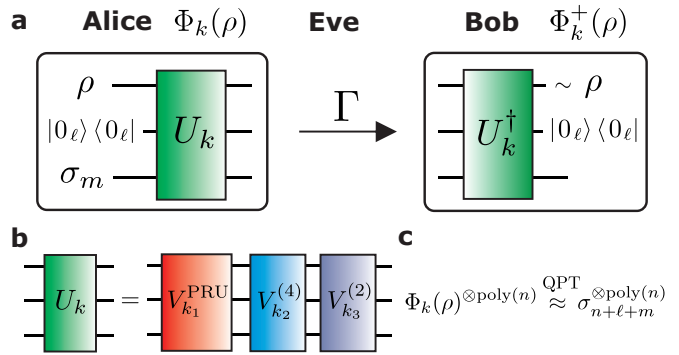

	\centering	
	\subfigimg[width=0.49\textwidth]{}{AliceBobEve.pdf}
	\caption{Sketch of the pseudorandom quantum authentication scheme (PQAS) to hide state $\rho$ from eavesdropper Eve while receiver Bob can verify the integrity of $\rho$. 
    \idg{a} %
    Alice encrypts $n$-qubit state $\rho$ with PQAS $\Phi_k(\rho)$, which adds $\ell$-qubit tag state $\ket{0_\ell}=\ket{0}^{\otimes\ell}$, $m$-qubit maximally mixed state $\sigma_m=I_m/2^{m}$, and scrambles via unitary $U_k$ with secret key $k$. The eavesdropper Eve with quantum polynomial-time (QPT) channel $\Gamma$ cannot learn any (meta-)information about $\rho$. Bob decrypts with $\Phi_k^\dagger(\rho)$ by applying the inverse $U_k^\dagger$ and projecting onto the tag state, which on success yields the original state $\rho$ with near-unit fidelity. 
    \idg{b} Unitary $U_k$ is composed of a pseudorandom unitary (PRU) $V_{k_1}^\text{PRU}$ to hide the state from Eve, followed by an approximate $4$-design $V_{k_2}^{(4)}$ and an exact $2$-design $V_{k_3}^{(2)}$, which are needed for authentication. 
    \idg{c} $t=\text{poly}(n)$ copies of $\Phi_k(\rho)$ are indistinguishable from a maximally mixed state $\sigma_{n+\ell+m}$ for any QPT algorithm when $m=\omega(\log n)$. This protects against chosen plain-text attacks (CPA) and meta-information leakage, in contrast to schemes based on indistinguishability from Haar random states. Notably, PQAS's security is guaranteed assuming the existence of PRUs, which is a weaker assumption than needed for classical schemes.
	}
	\label{fig:PQAS}
\end{figure}

Here, we propose the pseudorandom quantum authentication scheme (PQAS). 
Quantum states encrypted with PQAS are \emph{semantically secure}: They are indistinguishable from the \emph{maximally mixed state} for any eavesdropper with QPT computational power. This includes eavesdroppers possessing a \emph{polynomial number of copies} of the encrypted state, entangling all copies, holding their purification, and even CPA.
Further, the receiver can authenticate that the intended message has not been tampered with, guaranteeing that the decrypted state has asymptotically unit fidelity even under adversarial attacks.
Contrary to information-theoretic secure schemes, PQAS allows the secret key to be re-used a \emph{polynomial number of times} without the need for side channels. Further, PQAS is low cost, requiring only \emph{polylogarithmic depth}, and its security rests on \emph{weaker computational assumptions} than classical MAC and encryption. 
We also highlight the importance of demanding  indistinguishability from the maximally mixed state for quantum cryptography (i.e.\ non-deterministic): Schemes that are instead indistinguishable from Haar random states (i.e.\ deterministic) leak crucial meta-information to eavesdroppers, such as the quantum computing capabilities, the number of qubits, and are vulnerable to CPA. 
We show that meta-information leakage is related to quantum pseudoresources such as pseudoentanglement, showing that maximal security requires maximal pseudoresource gaps.
Finally, we apply our developed methods to propose the verifiable pseudorandom density matrices (VPRDM), efficiently preparable mixed quantum states that are QPT indistinguishable from random mixed states~\cite{bansal2024pseudorandomdensitymatrices} yet whose correct preparation can be verified via a secret key. As its application, we generate verifiable noise-robust EFI pairs and and one-way state generators (OWSGs) both of which are important cryptographic primitives. 
Our work enables fundamental quantum encryption tasks while relying on weaker complexity assumptions, with practical applications in quantum communication, cryptography and error correction. %

\prlsection{Model} Let us assume two parties, Alice and Bob share a single secret key $k\in\mathcal{K}=\{0,1\}^{\text{poly}(n)}$ and want to exchange many quantum states via a public channel, while not revealing any (meta-)information to an eavesdropper Eve, and Bob being able to authenticate the integrity of the received state. 
We now sketch the main idea of our scheme in the following (see also Fig.~\ref{fig:PQAS}), while the formal definition is deferred to Def.~\ref{def:EncryptPQAS}. 
Alice first encrypts the $n$-qubit state $\rho$ via the PQAS
\begin{equation} %
    \Phi_k(\rho)= U_k (\rho \otimes\ket{0_\ell}\bra{0_\ell}\otimes \sigma_m) U_k^\dagger\,.
\end{equation}
Here, PQAS appends the $\ell$-qubit tag state $\ket{0_\ell}=\ket{0}^{\otimes \ell}$ and an $m$-qubit maximally mixed state $\sigma_m=I_m/2^m$ to $\rho$. 
The tag-state is needed for authentication, while $\sigma_m$ (for $m=\omega(\log n)$) hides meta-information and secures against CPA.
Then, this $z=n+\ell+m$ qubit state is scrambled with unitary $U_k$. 
$U_k=V_{k_1}^{\mathrm{PRU}}V_{k_2}^{(4)}V_{k_3}^{(2)}$ is composed of three unitaries (each with keys $k=\{k_1,k_2,k_3\}$), which play different roles:  $V_{k_1}^{\mathrm{PRU}}$ is a PRU, which is an efficiently preparable unitary that is indistinguishable from a Haar random unitary for any QPT algorithm (see Supplemental Material (\SM{})~\ref{sec:definitions}), and essential for encryption~\cite{ji2018pseudorandom,schuster2024randomunitariesextremelylow}. 
The other two unitaries are needed for authentication. They are $t$-designs, which are unitaries that are statistically indistinguishable (up to relative error $\epsilon)$ from $t$ copies of a Haar random unitary.
$V_{k_2}^{(4)}$ is an relative error $\epsilon$-approximate $4$-design with $\epsilon=2^{-2m-\mathrm{polylog}(n)}$, and $V_{k_3}^{(2)}$ an exact (i.e.\ $\epsilon=0)$ $2$-design. $V_{k_3}^{(2)}$ makes sure that the authenticated state has high fidelity with the original state on average, while $V_{k_2}^{(4)}$ suppresses the variance to ensure that the fidelity is large for any chosen key and state, with high probability. %

To recover the original state $\rho$, Bob uses the decrypter $\Phi_k^+(\cdot)$, which applies the inverse $U_k^\dagger$ and projects onto the tag state $\ket{0_\ell}$ via
\begin{equation}\label{eq:decrypt}
    \Phi_k^{+}(\Phi_k(\rho))= \operatorname{tr}_{\overline{m}}(U_k^\dagger \Phi_k(\rho) U_k)=\rho\otimes \ket{0_\ell} \bra{0_\ell}\,.
\end{equation}
where $\operatorname{tr}_{\overline{m}}(\cdot)$ is the partial trace over the last $m$ qubits. 
Note that PQAS is highly cost-effective, and can be implemented using only $\text{polylog}(n)$ circuit depth as we show in \SM{}~\ref{sec:efficiency}. 

States encrypted with PQAS $\Phi_k(\rho)$ do not reveal any (meta-)information to QPT eavesdroppers. In particular, states encrypted with PQAS are indistinguishable from the maximally mixed state for any QPT algorithm $\mathcal{A}$ even when the PQAS is used $t=\text{poly}(n)$ times with the same key $k$ (Thm.~\ref{thm:PQAS} or \SM{}~\ref{sec:security})
\begin{equation}
\Big|\Pr_{k \gets \mathcal{K}} [\mathcal{A}(\Phi_k(\rho)^{\otimes t}) = 1] -\Pr [\mathcal{A}(\sigma_{n+\ell+m}^{\otimes t}) = 1] \Big| = \operatorname{negl}(n)\,.
\end{equation}
Here, $\operatorname{negl}(\cdot)$ are functions that decay faster than any inverse polynomial. %
Note that even when the eavesdropper holds the purification of the input state $\rho$ or entangles the $t=\text{poly}(n)$ input states, no useful information about the input state or the key can be learned.

\prlsection{Authentication}
Next, we discuss the ability of PQAS to authenticate quantum states~\cite{barnum2002authentication}. Here, we introduce the key-reusable MAC (Def.~\ref{def:MAC} or \SM{}~\ref{sec:robustness}): 
In particular, an attacker may try to tamper with the encrypted state by applying a QPT channel $\Gamma(\cdot)$, such that after decryption one yields the wrong state $\rho'$ without noticing. 
To authenticate, Bob applies the inverse encryption $U_k^\dagger$ and projects the tag qubits onto $\ket{0_\ell}$. If someone tried to tamper with the encrypted state, then the tag state is also affected. In this case, the projection can fail, upon which the receiver rejects the message. If the projection succeeds, PQAS guarantees that the decrypted state $\rho_0$ has asymptotically unit fidelity $F$ with  the original pure state $\rho$ (Thm.~\ref{thm:PQASmac})
\begin{equation}
    F(\rho,\rho_0)=1-\text{negl}(n)\,.
\end{equation}
Further, we show that the success probability $P_0$ of the projection only depends on the channel fidelity  $\mathcal{F}_{\operatorname{c}}(\Gamma)$ via 
\begin{equation}
    P_0=\operatorname{tr}(\Pi_0\Phi_k^+(\Gamma(\Phi_k(\rho))))=\mathcal{F}_{\operatorname{c}}(\Gamma)+\operatorname{negl}(n)\,.
\end{equation}
Thus, even when the adversary applies a relatively strong channel with channel fidelity $\mathcal{F}_{\operatorname{c}}(\Gamma)=1/\text{poly}(n)$, messages are accepted with probability $1/\text{poly}(n)$ and have asymptotically unit fidelity. 
We highlight that PQAS are key-reusable MACs, which are MACs that can unconditionally re-use the same key $\text{poly}(n)$ times. In contrast, previously proposed information-theoretic MACs  allow each key to be used only once, except under very stringent conditions~\footnote{Note that previously proposed information-theoretic MACs can allow for a weaker notion of re-using keys, which is called key recycling and works only under very specific conditions: A used key can be recycled to be used for the next message when i) only one message is encrypted with the key at the same time, ii) the message was accepted by the receiver, and iii) a side-channel exists between sender and receiver~\cite{portmann2017quantum,alagic2017quantum,garg2017new}}.

\prlsection{Chosen-plaintext attack (CPA)}
For an encryption protocol to be fully secure, it is not enough that the message cannot be learned directly. 
A more sophisticated attack is CPA~\cite{katz2020introduction}, where Eve has black-box access to the encryption oracle%
, but does not know the key $k$. %

CPA attacks can break deterministic encryption protocols. To give an explicit example, let us consider a modified PQAS with $m=0$, i.e.\ no classical randomness is injected via maximally mixed states.  Such a protocol encrypts state $\ket{\psi}$ with $U_k$, where the pure encrypted state $U_k\ket{\psi}$  becomes a PRS and thus QPT indistinguishable from Haar random states
\begin{equation}
    \left| \Pr_{k\gets\mathcal{K}} [\mathcal{A}(U_k\ket{\psi}^{\otimes t}) = 1] - \Pr_{\ket{\phi} \gets \mu} [\mathcal{A}(\ket{\phi}^{\otimes t}) = 1] \right| = \operatorname{negl}(n)\,.
\end{equation}
However, we find that such encryption does not erase all (meta-)information. In particular, this protocol generates pure states and is thus deterministic, i.e. the same input state always yields the same encrypted pure state.

CPA exploits this fact~\cite{katz2020introduction}: Let us assume Eve intercepts an encrypted message and has a hunch what the original message could be. Now, Eve applies the encryption oracle on the hunch. If Eve's guess is correct, this newly encrypted message is identical to the intercepted one, which can be efficiently confirmed with the SWAP test~\cite{barenco1997stabilization}. 
In contrast, for PQAS, CPA cannot reveal any information about the original state, as the encrypted state is randomized via the injection of $m$ maximally mixed qubits. We prove this in SM~\ref{sec:CPA} for non-adaptive applications of the encryption oracle, which we believe can be extended to adaptive attacks.

\prlsection{Meta-information attacks} %
Encryption based on indistinguishability from Haar random states is not only vulnerable to CPA, but also to various other classes of quantum attacks that reveal crucial meta-information.
One such meta-information are quantum resources such as entanglement, magic or coherence~\cite{chitambar2019quantum}: They are essential `fuel' needed to run non-trivial quantum tasks. As they are expensive to produce, their presence gives intricate insights into the quantum capabilities of the encrypter, and as such should remain hidden from Eve  (see \SM{}~\ref{sec:computeattack}).

In this context, pseudoresources were recently introduced: They are two state ensembles which are QPT indistinguishable, yet one ensemble has a lot of quantum resources of amount $f(n)$, and the other little resources $g(n)$~\cite{bouland2022quantum,bansal2024pseudorandomdensitymatrices}. Examples include pseudoentanglement~\cite{bouland2022quantum}, pseudomagic~\cite{gu2023little} and pseudocoherence~\cite{haug2023pseudorandom}.
Pseudoresources allow one to masquerade high resource states as low resource states, with the extent given by the pseudoresource gap between $f(n)$ and $g(n)$. To hide all information, the ensemble of encrypted states should have the maximal possible pseudoresource gap of~\cite{bansal2024pseudorandomdensitymatrices}
\begin{equation}\label{eq:maxgap}
g(n)=0 \,\text{ vs }\, f(n)=\Theta(n)\,.
\end{equation}
This implies that eavesdroppers cannot infer anything about quantum resources.

However, being QPT indistinguishable from Haar random states (such as PRS or PQAS with $m=0$) implies a tight lower bound on pseudoresource gap $g(n)=\omega(\log n)$ for magic~\cite{grewal2022low,gu2023little}, entanglement~\cite{bouland2022quantum}, coherence~\cite{haug2023pseudorandom}, circuit depth~\cite{hangleiter2023bell,wadhwa2024noise} and even $g(n)=\Omega(n)$ for the number of $T$-gates~\cite{grewal2023improved}. Thus, when Alice encrypts by rendering states QPT-indistinguishable from Haar random states,  Eve can efficiently determine that Alice has at least $g(n)$ quantum resources available, and learn critical information about Alice's quantum computing capabilities. 
Further, by measuring the purity $\text{tr}(\rho^2)$ of the encrypted state~\cite{barenco1997stabilization,haug2023pseudorandom}, Eve can determine the noise-level of the state and infer Alice's error correction capabilities.
We also find attacks that efficiently determine the number of qubits (\SM{}~\ref{sec:qubitattack}), the number of chosen states (\SM{}~\ref{sec:multistate}) and the channel used for communication (\SM{}~\ref{sec:multichannel}).

In contrast, PQAS renders the encrypted state indistinguishable from the trivial maximally mixed state and thus has the maximal pseudoresource gap of~\eqref{eq:maxgap}~\cite{bansal2024pseudorandomdensitymatrices}, which reveals no information about quantum resources and any other meta-information. 

\prlsection{Cryptography with weaker assumptions}
Finally, we apply our methods to enhance quantum cryptographic primitives based on pseudorandomness. 
Recently, pseudorandom density matrices (PRDMs) were introduced as efficiently preparable mixed states of rank $2^m$ that are QPT indistinguishable from the ensemble of random density matrices of the same rank~\cite{bansal2024pseudorandomdensitymatrices}. For $m=0$, they correspond to PRS, while for $m=\omega(\log n)$ they become indistinguishable from the maximally mixed state. 
However, the original definition of PRDM does not demand efficient verification of the state given the key, i.e. it may be difficult to check whether the PRDM has been correctly prepared~\cite{bansal2024pseudorandomdensitymatrices}. For many applications such as private key quantum money or bit commitments, efficient verification is a desirable property.
We now propose VPRDM as a PRDM that also can be efficiently verified (see Def.~\ref{def:VPRDM} or \SM{}~\ref{sec:VPRDM}). One can construct an $n$-qubit VPRDM $\rho_{k,m}$ via a modified PQAS, involving only the tag $\ket{0_{n-m}}$ and maximally mixed state $\sigma_m$
\begin{align}
\rho_{k,m}=U_k (\ket{0_{n-m}}\bra{0_{n-m}}\otimes \sigma_m) U_k^\dagger,\,\,\,\,\,k\in\{0,1\}^{\text{poly}(n)}\,,
\end{align}
where $U_k$ is a PRU and we do not need any $t$-design properties.  Given key $k$, one can efficiently verify the correctness of the state by applying the inverse $U_k^\dagger$ and projecting onto $\ket{0_{n-m}}$. For $m=0$, VPRDMs are the same as PRS, while for $m=\text{polylog}(n)$, VPRDMs are QPT indistinguishable from the maximally mixed state.
As we show in \SM{}~\ref{sec:OWSG}, VPRDMs are also one-way state generator (OWSG)~\cite{morimae2022quantum,morimae2022one} and one-time secure quantum digital signatures (QDSs)~\cite{morimae2022one}  with quantum public keys, both of which are important cryptographic primitives. Notably, we extend them to the mixed state case compared to previous pure state constructions, which enhances security by preventing meta-information leakage and noise-robustness.

Similarly, we can use our methods to construct EFI pairs which also can be verified and are noise robust. EFI pairs are efficiently preparable ensembles which are statistical far, yet computational indistinguishable~\cite{brakerski2022computational}, and serve as fundamental cryptographic primitives~\cite{yan2022general,bartusek2021one,ananth2021concurrent,ananth2022cryptography}. Our EFI pair construction (see \SM{}~\ref{sec:EFI}) combines for the first time, to our knowledge, three essential features: i) Construction with weaker assumptions than one-way functions, ii) EFI pairs remain EFI pairs even when subject to unital noise, and iii) efficient verification of correct preparation. 

\prlsection{Discussion} We introduce PQAS to encrypt and verify the integrity of quantum states.
PQAS renders states semantically secure against QPT eavesdroppers, requiring only $\text{polylog}(n)$ circuit depth.
Further, the receiver can verify that the state has near unit fidelity with the intended original state, even under adversarial tampering by an eavesdropper.

We also showed that QPT indistinguishability from Haar random states, which is the basis of various cryptographic algorithms~\cite{lu2023quantum,ananth2022cryptography}, is vulnerable to various attacks. In particular, an attacker can learn the quantum computing capabilities,  number of qubits, and perform CPA.
In contrast, PQAS and VPRDM protect against such attacks by enforcing indistinguishability from the maximally mixed state via injection of $m=\omega(\log n)$ maximally mixed qubits. While we prove CPA-security only for non-adaptive attacks, we believe future work can also prove CPA-security against adaptive attacks.

Notably, PQAS opens encryption and authentication with CPA-security  for both quantum and classical messages with  fundamentally weaker assumptions than classical cryptography. Computational classical cryptography (and its applications such as the internet) requires the existence of (quantum-secure) one-way functions~\cite{katz2020introduction}. 
In contrast, PRUs (and thus PQAS and VPRDM) can potentially still exist even when quantum-secure one-way functions do not exist~\cite{kretschmer2021quantum,kretschmer2023quantum,kretschmer2024quantumcomputableonewayfunctionsoneway,bostanci2024efficient}. Thus,  PQAS could encrypt and authenticate both quantum and classical messages even in a world without quantum-secure classical cryptography. 
We highlight that PQAS achieves CPA-security and does not leak meta-information (such as qubit number or quantum resources) \emph{assuming only the existence of PRUs}, while previous MACs explicitly require the existence of quantum-secure one-way functions to achieve that~\cite{ananth2024pseudorandom}.
Thus, PQAS offers full security relying only on assumptions from the quantum world. While so far known constructions of PRUs still rest on one-way functions, there are proposals to replace them with weaker assumptions originating from quantum complexity~\cite{bostanci2024efficient}.

A common theme in cryptography is to make schemes as general as possible by abstracting them  from their physical realization, while capturing the real-world nuances. However, cryptography operates in the physical world, where physics naturally plays a crucial role~\cite{brakerski2023black}.
In fact, we find that physical quantities such as quantum resources are fundamentally tied to the security of quantum cryptography.  %
Modern quantum cryptography has introduced the concept of the pseudoresource gap~\cite{haug2023pseudorandom}. Our work provides an operational and physical interpretation of this concept in the framework of quantum encryption. Unlike schemes that scramble messages to resemble Haar random states, which inadvertently reveal pseudoresource information, our scheme eliminates such leakage. We achieve that by ensuring that the encrypted states have the maximal pseudoresource gap, which is realized by being computationally indistinguishable from maximally mixed states. This also protects against various other types of meta-information attacks, such as on the qubit number, which we show in \SM{}~\ref{sec:qubitattack}.

In cryptography, one of the most expensive overhead costs is to share secret keys, e.g. via quantum key distribution~\cite{scarani2009security,van2014quantum,yang2023survey}. PQAS requires very little \emph{shared} randomness, requiring only a single key to secure polynomial many quantum states.
We contrast this with the information-theoretic secure quantum one-time-pad~\cite{brandao2012quantum,sharma2020conditional} or quantum MAC~\cite{barnum2002authentication}  which is also indistinguishable from the maximally mixed state, but requires a lot of shared randomness: For each state exchanged, Alice and Bob need to share a new secret key~\cite{ambainis2004small,hayden2004randomizing,dickinson2006approximate,ambainis2009nonmalleable}. Further, a different decryption operation is required for each state. 
Thus, Alice and Bob must agree which encrypted state belongs to which key, requiring additional communication in the presence of eavesdroppers~\cite{brandao2012quantum}, while for PQAS encryption and decryption is always the same operation. 

Future work could explore the potential of PQAS and VPRDM for other tasks such as bit commitment~\cite{ananth2022cryptography}, public key encryption~\cite{bouland2023public}, quantum money~\cite{aaronson_quantum_money_2012,ji2018pseudorandom,bansal2024pseudorandomdensitymatrices}, quantum error correction~\cite{crepeau2005approximate}, or obfuscation of circuits~\cite{bartusek2023obfuscation}.

\medskip

\begin{acknowledgments}
{{\em Acknowledgements---}}%
We thank John Preskill for helpful discussions. DEK acknowledges funding support from the Agency for Science, Technology and Research (A*STAR) Central Research Fund (CRF) Award. This research is supported by A*STAR C230917003.
\end{acknowledgments}

\bibliography{hprs}

\onecolumngrid
\vspace{2cm}
\begin{center}

\textbf{\large End Matter}
\end{center}
Here, we provide the formal definitions and theorems for the main text, while the formal proofs are deferred to the \SM{}.

\prlsection{PQAS} First, let us formally define PQAS:
\begin{definition}[Pseudorandom quantum authentication scheme (PQAS)]
\label{def:EncryptPQAS}
We have $n$-qubit input state $\rho$, mixing parameter $m=\omega(\log n)$ and robustness parameter $\ell=\omega(\log n)$. We also have $z$-qubit unitary $U_k=V_{k_1}^{\mathrm{PRU}}V_{k_2}^{(4)}V_{k_3}^{(2)}$ with $z=n+\ell +m$, security parameter $\lambda(n)=\mathrm{poly}(n)$ and key $k=\{k_1,k_2,k_3\}\in \mathcal{K}=\{0,1\}^{\lambda(n)}$, where $V_{k_1}^{\mathrm{PRU}}$ is a PRU, $V_{k_2}^{(4)}$ an relative error $\epsilon$-approximate $4$-design with $\epsilon=2^{-2m-\mathrm{polylog}(n)}$, and $V_{k_3}^{(2)}$ an exact $2$-design. PQAS are a keyed family of efficiently implementable quantum channels $\{\Phi_k(\rho) : (\mathbb{C}^2)^{\otimes n}\rightarrow (\mathbb{C}^2)^{\otimes n+\ell+m} \}_{k\in \{0,1\}^{\lambda(n)}}$
\begin{equation} \label{eq:padded}
    \Phi_k(\rho)= U_k \rho \otimes\ket{0_\ell}\bra{0_\ell}\otimes \sigma_m U_k^\dagger\,,
\end{equation}
with $m$-qubit maximally mixed state $\sigma_m=I_m/2^{m}$ and $\ell$-qubit tag state $\ket{0_\ell}=\ket{0}^{\otimes \ell}$.
Decrypter $\{\Phi_k^{+}(\rho) : (\mathbb{C}^2)^{\otimes n+\ell+m}\rightarrow (\mathbb{C}^2)^{\otimes n} \}_{k\in \{0,1\}^{\lambda(n)}}$ recovers the original input state $\rho$ via
\begin{equation}
    \Phi_k^{+}(\Phi_k(\rho))= \operatorname{tr}_{\overline{m}}(U_k^\dagger \Phi_k(\rho) U_k)=\rho\otimes \ket{0_\ell} \bra{0_\ell}
\end{equation}
where $\operatorname{tr}_{\overline{m}}(\cdot)$ is the partial trace over the last $m$ qubits. 
\end{definition}

PQAS are secure against QPT eavesdroppers, including eavesdroppers that may have access to the purification of input state $\rho$, or when all $t=\text{poly}(n)$ input states are all entangled. We denote the entangled purification as $\rho_\text{g}$:
\begin{theorem}[Semantic security of PQAS (\SM{}~\ref{sec:security})]\label{thm:PQAS}
Given PQAS $\Phi_k(\rho)$ for $n$-qubit state $\rho$, we consider the channel $(\Phi_k^{\otimes t}\otimes \mathcal{I}_{qt})(\rho_{\operatorname{g}})$ (with $qt$-qubit identity channel $\mathcal{I}_{qt}$ acting on the $qt$ purification qubits) acting on arbitrary $t(n+q)$-qubit state $\rho_{\operatorname{g}}$, where $\Phi_k^{\otimes t}$ acts on a $nt$-qubit subspace of $\rho_{\operatorname{g}}$ with $t=\operatorname{poly}(n)$.
The output of $(\Phi_k^{\otimes t}\otimes \mathcal{I}_{qt})(\rho_{\operatorname{g}})$ is indistinguishable from $\sigma_{z}^{\otimes t}\otimes\operatorname{tr}_{nt}(\rho_{\operatorname{g}})$, where $\sigma_{z}=I_{z}/2^{z}$ ($z=n+\ell+m$) is the maximally mixed state and $\operatorname{tr}_{nt}(.)$ the partial trace over $nt$ qubits, for any QPT algorithm $\mathcal{A}$, i.e.
\begin{align*}
\Big|&\Pr_{k \gets \mathcal{K}} [\mathcal{A}((\Phi_k^{\otimes t}\otimes \mathcal{I}_{qt})(\rho_{\operatorname{g}})) = 1] -  \Pr [\mathcal{A}(\sigma_z^{\otimes t}\otimes\operatorname{tr}_{nt}(\rho_\text{g})) = 1] \Big| = \operatorname{negl}(n)\,.
\end{align*}
As a special case, for $t$ unentangled input states $\rho$ (i.e. $\rho_\text{g}=\rho^{\otimes t}$ and $q=0$), the encrypted states are QPT indistinguishable from the maximally mixed state 
\begin{align*}
\Big|&\Pr_{k \gets \mathcal{K}} [\mathcal{A}(\Phi_k(\rho)^{\otimes t}) = 1] -\Pr [\mathcal{A}(\sigma_{n+\ell+m}^{\otimes t}) = 1] \Big| = \operatorname{negl}(n)\,.
\end{align*}
\end{theorem}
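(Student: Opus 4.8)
The plan is to prove the statement in two stages, isolating the single computational assumption (the PRU) from the information-theoretic content supplied by the injected mixed state $\sigma_m$; the $4$- and $2$-design factors, being needed only for authentication, will play no role. \emph{Stage 1 (replacing the PRU by Haar).} I would first argue that $U_k=V_{k_1}^{\mathrm{PRU}}V_{k_2}^{(4)}V_{k_3}^{(2)}$ may be replaced by a genuinely Haar-random $z$-qubit unitary $V$ at the cost of $\operatorname{negl}(n)$ in distinguishing advantage. Given an oracle $W$ that is either the PRU or Haar-random, a QPT reduction samples $k_2,k_3$, sets $U=W\,V_{k_2}^{(4)}V_{k_3}^{(2)}$, prepares the $t=\operatorname{poly}(n)$ encrypted registers $(\Phi_k^{\otimes t}\otimes\mathcal I_{qt})(\rho_{\mathrm g})$ by calling $W$ once per copy (admissible since $t$ is polynomial), and feeds the result to the purported distinguisher $\mathcal A$; any gap between the two cases would violate PRU security. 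By left-invariance of the Haar measure, for every fixed $k_2,k_3$ the product $V\,V_{k_2}^{(4)}V_{k_3}^{(2)}$ is again exactly Haar-distributed, so after the replacement $U_k$ is Haar-random and the design factors drop out of the analysis.

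\emph{Stage 2 (statistical indistinguishability from maximally mixed).} It then suffices to bound the trace distance between the Haar average and the target, since trace distance upper-bounds the advantage of \emph{any} distinguisher, in particular any QPT one. Writing $\xi=\rho\otimes\ket{0_\ell}\bra{0_\ell}\otimes\sigma_m$ for the padded single-copy input, with purity $\operatorname{tr}(\xi^2)=\operatorname{tr}(\rho^2)\,2^{-m}\le 2^{-m}$, the same $V$ acts on all $t$ copies, so I would expand the $t$-th moment $\mathbb{E}_V[(V\xi V^\dagger)^{\otimes t}]=\sum_{\pi\in S_t}c_\pi P_\pi$ over the permutation operators $P_\pi$ on $(\mathbb C^{d})^{\otimes t}$, $d=2^z$, via Schur--Weyl duality and the Weingarten calculus. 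The target $\sigma_z^{\otimes t}=d^{-t}I=d^{-t}P_e$ is exactly the identity-permutation term, so the entire discrepancy sits in the correction $c_e-d^{-t}$ and in the non-identity permutations $\pi\neq e$.

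\emph{Core estimate and purification.} The explicit second moment $\mathbb{E}_V[(V\xi V^\dagger)^{\otimes 2}]\approx d^{-2}I+\operatorname{tr}(\xi^2)\,d^{-2}\,\mathrm{SWAP}$ already shows the leading deviation is the transposition term, whose trace norm is $\operatorname{tr}(\xi^2)\,d^{-2}\,\|\mathrm{SWAP}\|_1=\operatorname{tr}(\xi^2)\le 2^{-m}$. For general $t$ the $\binom t2$ transpositions dominate, and using $\|P_\pi\|_1=d^{t}$ together with the Weingarten estimate $|c_\pi|\lesssim \operatorname{tr}(\xi^2)\,d^{-t}$ per transposition, the total trace distance is $O(\operatorname{poly}(t)\,\operatorname{tr}(\xi^2))=\operatorname{poly}(n)\,2^{-m}$, which is $\operatorname{negl}(n)$ because $m=\omega(\log n)$; this yields the unentangled ($q=0$) case. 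For the purified statement I would run the identical expansion while keeping the untouched reference $R$: the single-copy average decouples exactly, $\int dV\,(V\otimes I_R)\,M\,(V^\dagger\otimes I_R)=\sigma_z\otimes\operatorname{tr}_S(M)$, so $\sigma_z^{\otimes t}\otimes\operatorname{tr}_{nt}(\rho_{\mathrm g})$ is the leading term and the only deviations are again multi-copy cross terms, now governed by $\operatorname{tr}(\xi_S^2)\le 2^{-m}$ with $\xi_S$ the padded input together with its correlations to $R$.

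\emph{Main obstacle.} The hard part is making the Weingarten bound \emph{uniform}: controlling the full sum over $S_t$, not merely the transpositions, for arbitrary inputs that may be mixed and entangled with the reference, and verifying that the higher-order Weingarten coefficients do not spoil the $\operatorname{poly}(t)\,2^{-m}$ scaling. This is precisely where $m=\omega(\log n)$ is used, and care is needed so that the estimate holds simultaneously for \emph{every} $\rho_{\mathrm g}$ rather than merely on average.
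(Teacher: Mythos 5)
Your proposal is correct and takes essentially the same route as the paper: a hybrid argument replacing the PRU by a Haar-random unitary (with the design factors absorbed by Haar left-invariance, a step the paper leaves implicit), followed by a Weingarten-calculus bound showing the Haar-twirled padded state is within trace distance $O(t^2 2^{-m})=\operatorname{negl}(n)$ of $\sigma_z^{\otimes t}\otimes\operatorname{tr}_{nt}(\rho_{\mathrm{g}})$, with the reference system decoupling exactly. The ``main obstacle'' you flag is resolved in the paper precisely as you anticipate: the input-state factors $\operatorname{tr}\bigl(\operatorname{tr}_{(n+\ell)t}(\ket{\phi}\bra{\phi}(\pi_A^\dagger\otimes I_{qt}))\bigr)\le 1$ hold uniformly over all $\rho_{\mathrm{g}}$, while the sums over $S_t$ are controlled by $\sum_{\pi\in S_t}\lvert W_{\mathrm{g}}(\pi,d)\rvert\le d^{-t}(1+t^2/d)$ and $\sum_{\pi\neq I}d_B^{\#\text{cycles}(\pi)}=\tfrac{t(t-1)}{2}d_B^{t-1}+O(d_B^{t-2})$, so the suppression comes solely from the fresh $\sigma_m$ paddings rather than the purity of the input.
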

Here, $\operatorname{negl}(\cdot)$ are functions that decay faster than any inverse polynomial. 
Thm.~\ref{thm:PQAS} is proven in SM~\ref{sec:security} which we sketch in the following: 
Scrambling $\rho\otimes(\ket{0}\bra{0})^{\otimes \ell} \otimes \sigma_m$ with Haar random unitaries is statistically indistinguishable from the maximally mixed state when $m=\omega(\log n)$. Further, by definition scrambling with PRU is indistinguishable from scrambling with Haar random unitaries for any QPT algorithm. 
Then, Thm.~\ref{thm:PQAS} follows directly from the triangle inequality. Note that for security, we require only $U_k$ to be PRU, while the $2$-design and $4$-design property are only relevant for authentication. %

\prlsection{Authentication}
PQAS authenticate quantum states, where one can re-use the same key $t=\text{poly}(n)$ times. For this, we propose an extension of the usual definition of MAC~\cite{barnum2002authentication} to explicitly demand unconditional key reusability:
\begin{definition}[Key-reusable MAC]\label{def:MAC}
A key-reusable MAC uses an encoder $\Phi_k$, decoder $\Phi_k^+$ and key $k\in\{0,1\}^{\operatorname{poly}(n)}$. The scheme satisfies the following properties:
\begin{enumerate}
    \item Completeness: Given input state $\ket{\psi}$, we have
\begin{equation}
    \ket{\psi}\bra{\psi} \otimes \ket{0_\ell}\bra{0_\ell}=\Phi_k^+(\Phi_k(\ket{\psi}))\,.
\end{equation}
    \item Soundness: The probability of accepting a tampered state is negligible for any QPT channel $\Gamma(\cdot)$, i.e.
    \begin{equation}
        \operatorname{tr}( \Pi_{\operatorname{err}} \Phi_k^+(\Gamma(\Phi_k(\ket{\psi}))))=\operatorname{negl}(n)\,,
    \end{equation}
    where $\Pi_{\operatorname{err}}=(I_n-\ket{\psi}\bra{\psi})\otimes\ket{0_\ell}\bra{0_\ell}$.  This holds true even when the same key $k$ is re-used $t=\operatorname{poly}(n)$ times.
\end{enumerate}
\end{definition}
Note that PQASs are key-reusable MACs with even more performance guarantees: The probability of accepting the state is given by the channel fidelity channel fidelity $\mathcal{F}_{\operatorname{c}}(\Gamma)=\int_{U\in\mu}\text{d}U \bra{0}U^\dagger\Gamma(U\ket{0}\bra{0}U^\dagger)U\ket{0}$, and the accepted state has a high fidelity with the original state:
\begin{theorem}[PQAS are key-reusable MACs (\SM{}~\ref{sec:robustness})]\label{thm:PQASmac}
PQAS according to Def.~\ref{def:EncryptPQAS} are key-reusable MACs (Def.~\ref{def:MAC}) and satisfy the following properties with $1-\operatorname{negl}(n)$ probability for all keys $k$:
\begin{itemize}
    \item The acceptance probability is given by the channel fidelity $\mathcal{F}_\text{c}(\Gamma)$ of the attacker, i.e. 
    \begin{equation}
        P_0=\operatorname{tr}(\Pi_0\Phi_k^+(\Gamma(\Phi_k(\ket{\psi}))))=\mathcal{F}_{\operatorname{c}}(\Gamma)+\operatorname{negl}(n)
    \end{equation}
    with $\Pi_0=I_n\otimes \ket{0_\ell}\bra{0_\ell}$ and channel fidelity $\mathcal{F}_{\operatorname{c}}(\Gamma)=\int_{U\in\mu}\text{d}U \bra{0}U^\dagger\Gamma(U\ket{0}\bra{0}U^\dagger)U\ket{0}$.
\item For all QPT channels $\Gamma(\cdot)$ with $\mathcal{F}_{\operatorname{c}}(\Gamma)=\Omega(1/\operatorname{poly}(n))$,
the normalised accepted state
\begin{equation}
    \rho_0 \otimes \ket{0_\ell}\bra{0_\ell}=\frac{1}{P_0}\Pi_0\Phi_k^+(\Gamma(\Phi_k(\ket{\psi})))\Pi_0
\end{equation}
has near-unit fidelity with the original state $\ket{\psi}$
    \begin{equation}
        F(\rho_0,\ket{\psi})=1-O(2^{-\ell})=1-\operatorname{negl}(n)\,.
    \end{equation}
\end{itemize}
\end{theorem}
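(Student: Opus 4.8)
The plan is to control the acceptance probability $P_0$ and the error probability $P_{\operatorname{err}}=\operatorname{tr}(\Pi_{\operatorname{err}}\Phi_k^+(\Gamma(\Phi_k(\ket{\psi}))))$ through their first two moments over the random key, and then upgrade the resulting average statements to an all-keys statement by concentration. A preliminary observation is that, because $V_{k_3}^{(2)}$ is an exact $2$-design and $V_{k_2}^{(4)}$ a relative-error $\epsilon$-approximate $4$-design, the full scrambler $U_k=V_{k_1}^{\mathrm{PRU}}V_{k_2}^{(4)}V_{k_3}^{(2)}$ inherits both properties — the product of an (approximate) $t$-design with independent unitaries is again an (approximate) $t$-design — so I may evaluate second moments of $U_k$ exactly and fourth moments up to the relative error $\epsilon$. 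First I would expand the attacker channel in Kraus form, $\Gamma(\cdot)=\sum_a K_a(\cdot)K_a^\dagger$, and write $P_0$ and $P_{\operatorname{err}}$ as degree-$(2,2)$ polynomials in $(U_k,U_k^\dagger)$.

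Since these are quadratic in $U_k$ and in $U_k^\dagger$, their key-averages equal the exact Haar averages, which I evaluate with the exact $2$-design. The Haar twirl turns $\Gamma$ into a depolarizing channel $\bar\Gamma(\rho)=p\,\rho+(1-p)I_z/2^{z}$ whose parameter $p$ is fixed by the entanglement fidelity $F_e$, related to $\mathcal{F}_{\operatorname{c}}(\Gamma)$ by the standard average-to-entanglement fidelity identity $\bar F=(dF_e+1)/(d+1)$ with $d=2^{z}$. Tracing $\bar\Gamma(\ket{\psi}\bra{\psi}\otimes\ket{0_\ell}\bra{0_\ell}\otimes\sigma_m)$ against $\Pi_0=I_n\otimes\ket{0_\ell}\bra{0_\ell}$ gives $\mathbb{E}_k[P_0]=p+(1-p)2^{-\ell}=\mathcal{F}_{\operatorname{c}}(\Gamma)+O(2^{-\ell})+O(2^{-z})$, which is $\mathcal{F}_{\operatorname{c}}(\Gamma)+\operatorname{negl}(n)$ because $\ell=\omega(\log n)$. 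The identical computation with $\Pi_{\operatorname{err}}=(I_n-\ket{\psi}\bra{\psi})\otimes\ket{0_\ell}\bra{0_\ell}$, which is orthogonal to the encoded input, yields $\mathbb{E}_k[P_{\operatorname{err}}]=(1-p)(1-2^{-n})2^{-\ell}=O(2^{-\ell})$: the only acceptance weight on the corrupted subspace comes from the depolarizing leak onto the trap.

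For the fidelity bullet I would use $\Pi_0=\Pi_\psi+\Pi_{\operatorname{err}}$ with $\Pi_\psi=\ket{\psi}\bra{\psi}\otimes\ket{0_\ell}\bra{0_\ell}$ to write $\bra{\psi}\rho_0\ket{\psi}=1-P_{\operatorname{err}}/P_0$. In the regime $\mathcal{F}_{\operatorname{c}}(\Gamma)=\Omega(1/\operatorname{poly}(n))$, combined with $P_0=\mathcal{F}_{\operatorname{c}}(\Gamma)+\operatorname{negl}(n)$ and $P_{\operatorname{err}}=O(2^{-\ell})$, the ratio is $P_{\operatorname{err}}/P_0=O(2^{-\ell}\operatorname{poly}(n))=\operatorname{negl}(n)$, so $F(\rho_0,\ket{\psi})=1-O(2^{-\ell})$. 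The trap length $\ell=\omega(\log n)$ is exactly what makes the leak negligible relative to the inverse-polynomial acceptance probability.

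To promote these mean statements to hold for a $1-\operatorname{negl}(n)$ fraction of keys simultaneously, I would bound $\operatorname{Var}_k[P_0]$ and $\operatorname{Var}_k[P_{\operatorname{err}}]$ and apply Chebyshev. These variances are degree-$(4,4)$ in $(U_k,U_k^\dagger)$ and are therefore governed by the fourth moments of $U_k$, which I evaluate with the relative-error $\epsilon$-approximate $4$-design, each exact Haar fourth moment being replaced by one within a multiplicative $(1\pm\epsilon)$ factor. I expect this step to be the main obstacle. The fourth-moment expansion produces $4!$ Weingarten/permutation contributions carrying large dimension factors — in particular factors as large as $2^{2m}$ coming from the inverse purity $1/\operatorname{tr}(\sigma_m^2)=2^{m}$ of the injected mixed state entering twice — so the relative error must be taken as small as $\epsilon=2^{-2m-\operatorname{polylog}(n)}$ precisely to annihilate these terms. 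The delicate bookkeeping is to isolate the leading contribution so that it cancels against $(\mathbb{E}_k[P_0])^2$, and to verify that every surviving term is suppressed either by $\epsilon$ or by an inverse power of $2^{\ell}$ or $2^{z}$, leaving total variance $\operatorname{negl}(n)$. A union bound over the two concentration events then gives the all-keys guarantee, and the completeness and soundness clauses of Def.~\ref{def:MAC} follow as the special cases $\Gamma=\mathcal{I}$ and the $P_{\operatorname{err}}$ bound, respectively.
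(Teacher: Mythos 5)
There is a genuine gap, and it sits exactly where you flagged the ``main obstacle'': your plan to promote the mean statements to all-keys statements by bounding $\mathrm{Var}_k[P_0]$ and $\mathrm{Var}_k[P_{\operatorname{err}}]$ with the approximate $4$-design does not go through with the stated $\epsilon$. The only fourth-moment handle available is the relative-error property, and because $P_0$ and $P_{\operatorname{err}}$ interleave $U$ and $U^\dagger$ with $\Gamma$, you cannot apply the multiplicative sandwich $(1\pm\epsilon)\Phi_{\mathrm{Haar}}^{(4)}$ directly: rewriting $\operatorname{tr}(A\,U^\dagger\Gamma(U B U^\dagger)U)$ as a $4$-th-moment functional requires tracing against swap operators, which are not positive semidefinite, so one must pass through trace-norm duality. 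That route (the paper's twirling-error lemma, \SM{}~\ref{sec:robustness}) yields $\left|\mathbb{E}_{\mathcal{E}}f-\mathbb{E}_{\mathrm{Haar}}f\right|\leq 2\epsilon\operatorname{tr}^k(A)\operatorname{tr}^k(B)$, and for the second moments of $P_0$ and $P_{\operatorname{err}}$ the relevant observables lift to $I_n\otimes\ket{0_\ell}\bra{0_\ell}\otimes I_m$ and $(I_n-\ket{\psi}\bra{\psi})\otimes\ket{0_\ell}\bra{0_\ell}\otimes I_m$, with traces $\approx 2^{n+m}$; the resulting design error is of order $\epsilon\,2^{2(n+m)}=2^{2n-\mathrm{polylog}(n)}$, which is vacuous. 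The $2^{2m}$ blow-up you anticipated (from the inverse purity of $\sigma_m$) is the factor that arises for the \emph{unnormalized fidelity} $F'=2^m\operatorname{tr}(\rho_{\operatorname{ext}}U^\dagger\Gamma(U\rho_{\operatorname{ext}}U^\dagger)U)$, whose observable $A=2^m\rho_{\operatorname{ext}}$ has $\operatorname{tr}(A)=2^m$; the choice $\epsilon=2^{-2m-\mathrm{polylog}(n)}$ is calibrated to that quantity only, not to $P_0$ or $P_{\operatorname{err}}$.

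The paper's proof repairs exactly this step, and much of your scaffolding survives: your preliminary observation that the composite $U_k$ inherits the exact $2$-design and relative-error $4$-design properties is correct, your $2$-design mean computations match \eqref{eq:haarval_fprime} and \eqref{eq:haarval_p0}, and your $P_{\operatorname{err}}$ is literally the paper's variable $X=P_0-F'$, since $\ket{\psi}\bra{\psi}\otimes\ket{0_\ell}\bra{0_\ell}\otimes I_m=2^m\rho_{\operatorname{ext}}$ gives $P_0=F'+P_{\operatorname{err}}$. The paper never concentrates $P_0$ or $P_{\operatorname{err}}$ individually: because $F=F'/P_0\leq 1$ forces $X\geq 0$, and $\mathbb{E}_k X=\mathbb{E}_{\mathrm{Haar}}X=O(2^{-\ell})$ needs only the exact $2$-design, \emph{Markov's} inequality gives $\Pr[1-F>\delta]=\Pr[X>\delta P_0]\leq \mathbb{E}_{\mathrm{Haar}}X/(\delta F')$. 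The sole concentration statement required is then a high-probability lower bound on $F'$, obtained by Chebyshev with $\mathrm{Var}_{\mathcal{E}}F'\leq \mathrm{Var}_{\mathrm{Haar}}F'+2\epsilon\,2^{2m}$, where the Haar variance is bounded as $O(1/d)$ via Levy's lemma and a Lipschitz estimate ($F'$ is $4$-Lipschitz in $U$) rather than the Weingarten bookkeeping over $4!$ permutation terms you proposed. This asymmetric treatment---Markov on the nonnegative $X$ plus concentration of $F'$ alone---is the missing idea; it also yields your first bullet, since $P_0=F'+X$ is then pinned to $\mathcal{F}_{\operatorname{c}}(\Gamma)+\operatorname{negl}(n)$ once the Chebyshev slack parameter is taken suitably small, while the completeness and soundness clauses follow as you indicated.
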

We note that our results hold even when the attacker chooses $\Gamma(\cdot)$ adaptively over polynomial many uses of PQAS. 
Notably, the fidelity of the accepted state is asymptotically close to one, and only for negligible acceptance probabilities (which occurs for $F_\text{c}(\Gamma)=\negl(n)$) the fidelity degrades via a phase transition~\cite{turkeshi2024error}.

\prlsection{VPRDM}
Finally, we formally define verifiable pseudorandom density matrices (VPRDM):
\begin{definition}[VPRDM]\label{def:VPRDM}
  Let $\lambda=\operatorname{poly}(n)$ be the security parameter with keyspace $\mathcal{K}=\{0,1\}^{\lambda}$. A keyed family of $n$-qubit density matrices $\{\rho_{k,m}\}_{k \in \mathcal{K}}$ is defined as a VPRDM with mixedness parameter $m$ if:
    \begin{enumerate}
        \item {Efficiently preparable}: There exists an efficient quantum algorithm $\mathcal{G}$ such that $\mathcal{G}(1^{\lambda}, k,m) = \rho_{k,m}$.
        \item {Computational Indistinguishability}: $t=\mathrm{poly}(n)$ copies of $\rho_{k,m}$ are computationally indistinguishable from the GHSE $\eta_{n,m}$. In particular, for any QPT algorithm $\mathcal{A}$ we have
        \begin{equation}
            \Big{|}\Pr_{k \leftarrow \mathcal{K}}[\mathcal{A}(\rho_{k,m}^{\otimes t}) = 1] - \Pr_{\rho \leftarrow \eta_{n,m}}[\mathcal{A}(\rho^{\otimes t}) = 1]\Big{|} = \operatorname{negl}(\lambda).
        \end{equation}
    \item {Efficient verification}: There is a QPT algorithm $\mathcal{V}(\rho, k,m)$ to verify that $\rho_{k,m}$ is indeed the VPRDM generated by key $k$. In particular, we have the completeness condition (i.e. correct states are accepted)
    \begin{equation}
    \mathcal{V}(\rho_{k,m}, k,m)=1
     \end{equation}
     and soundness condition (i.e. wrong states are rejected with high probability)
     \begin{equation}
            \Pr_{k' \leftarrow \mathcal{K}/\{k\}}[\mathcal{V}(\rho_{k,m}, k',m) = 1] = \operatorname{negl}(\lambda)\,.
    \end{equation}
    \end{enumerate}
\end{definition}
Note that the third condition of Def.~\ref{def:VPRDM} is not present in standard PRDM definitions~\cite{bansal2024pseudorandomdensitymatrices}, while the first and second definitions are identical.

\let\addcontentsline\oldaddcontentsline

\appendix

\onecolumngrid

\newpage 

\setcounter{secnumdepth}{2}
\setcounter{equation}{0}
\setcounter{figure}{0}
\setcounter{section}{0}

\renewcommand{\thesection}{\Alph{section}}
\renewcommand{\thesubsection}{\arabic{subsection}}
\renewcommand*{\theHsection}{\thesection}

\clearpage
\begin{center}

\textbf{\large Supplemental Material}
\end{center}
\setcounter{equation}{0}
\setcounter{figure}{0}
\setcounter{table}{0}

\makeatletter

We provide proofs and additional details supporting the claims in the main text.

\makeatletter
\@starttoc{toc}
\makeatother

\section{Definitions}

\label{sec:definitions}
First, let us define negligible functions $\text{negl}(n)$ as:
\begin{definition}[Negligible function]
    A \revA{positive real-valued function $\mu:\mathbb \mathbb{N} \to \mathbb R$} is negligible if and only if $\forall c \in \mathbb{N}$, $\exists n_0 \in \mathbb{N}$ such that $\forall n > n_0$, $\mu(n) < n^{-c}$.
\end{definition}

Next, an enemseble of unitaries $\mathcal{E}$ is an approximate unitary $k$-design when it  approximately matches Haar-random unitaries up to the $k$th moments. This is a statistical notion of closeness, i.e. there exists no POVM (even inefficient POVMs that need exponential computational time and space to be implemented) that can distinguish the ensembles.
There are multiple ways to quantify the error $\epsilon$ between the ensembles, where here we use the relative  error~\cite{brandao2016local}:

\begin{definition}[Approximate unitary design] \label{def:unitarydesign}
An ensemble $\mathcal{E}$ is an approximate unitary $k$-design with relative error $\epsilon$ if
\begin{align}
    (1-\epsilon)\Phi_\mathrm{Haar}^{(k)} \preceq \Phi_{\mathcal{E}}^{(k)} \preceq (1+\epsilon) \Phi_{\mathrm{Haar}}^{(k)}
\label{eq:defn_approxdesign}
\end{align}
where
\begin{equation}
    \Phi_{\mathrm{Haar}}^{(k)}(\boldsymbol{\cdot}) = \Eset{U \sim \mathrm{Haar}}[ U^{\otimes k} \boldsymbol{\cdot} U^{\dag, \otimes k}]
\end{equation}
and 
\begin{equation}
    \Phi_{\mathcal{E}}^{(k)}(\boldsymbol{\cdot}) = \Eset{U \sim \mathcal{E}}[ U^{\otimes k} \boldsymbol{\cdot} U^{\dag, \otimes k}]
\end{equation}
are the moment superoperators (also called $k$-fold twirling channels) for the Haar ensemble and $\mathcal{E}$ respectively. The notation $\Phi_1 \preceq \Phi_2$ for superoperators implies that $\Phi_2 - \Phi_1$ is completely positive. 
\end{definition}
We note that Def.~\ref{def:unitarydesign} is a strong notion of approximation since a relative error of $\epsilon$ implies an additive error of $\norm{\Phi_{\mathcal{E}}^{(k)} - \Phi_{\text{Haar}}^{(k)}}_\diamond \leq 2\epsilon$.

Next, we introduce pseudorandom Unitaries (PRUs), which are efficiently implementable unitaries that are indistinguishable from Haar random unitaries for any QPT algorithm: 
\begin{definition}[Pseudorandom Unitaries (PRUs)~\cite{ji2018pseudorandom}]
Consider the $n$-qubit Hilbert space $\mathcal{H}$ and key space $\mathcal{K}=\{0,1\}^{\lambda}$ which depend on a security parameter $\lambda=\text{poly}(n)$. We denote the Haar measure on the unitary group $U(\mathcal{H})$ as $\mu$.
A family of unitaries $\{U_k \in U(\mathcal{H})\}_{k \in \mathcal{K}}$ is pseudorandom when:
\begin{enumerate}
\item \textit{Efficient computation}: There exists a QPT algorithm $\mathcal{G}$, such that for all $k$ and any $\ket{\psi} \in \mathcal{S}(\mathcal{H})$, $\mathcal{G}(k, \ket{\psi}) = U_k \ket{\psi}$.
\item \textit{Pseudorandomness}: $U_k$ with a random key $k$ is computationally indistinguishable from Haar random unitaries. More precisely, for any QPT algorithm $\mathcal{A}$ that makes at most polynomially many queries to the oracle, we have
\[
\left| \Pr_{k \gets \mathcal{K}}[\mathcal{A}^{U_k}(1^\lambda) = 1] - \Pr_{U \gets \mu}[\mathcal{A}^U(1^\lambda) = 1] \right| \leq \operatorname{negl}(\lambda).
\]
\end{enumerate}
\end{definition}
\revA{Here, the first term describes the observer using any QPT algorithm $\mathcal{A}$ which has knowledge of key size $1^{\lambda}$ in unary (but not the key itself) and oracle access to pseudorandom unitary $U_k$ with random key $k$, while in the second term the observer is given Haar random unitaries $U$.}
This definition states that a family of unitary operators is considered pseudorandom if it is both efficiently computable and indistinguishable from Haar random unitaries for any QPT observer.
PRUs can be implemented in $\text{polylog}(n)$ depth via the construction of Ref.~\cite{schuster2024randomunitariesextremelylow,ma2024construct}.
We note that PRUs exist assuming quantum-secure one way functions exist~\cite{ji2018pseudorandom,zhandry2012construct}, while potentially weaker assumptions have been proposed~\cite{kretschmer2021quantum,kretschmer2023quantum,morimae2022quantum,ananth2022cryptography}.

\begin{definition}[Pseudorandom Quantum States (PRSs)]\label{def:PRS}
Let $\lambda=\text{poly}(n)$ be the security parameter. Consider Hilbert space $\mathcal{H}$ and key space $\mathcal{K}$ which depend on $\lambda$. A keyed family of quantum states ${ \ket{\phi_k} \in \mathcal{S}(\mathcal{H}) }_{k \in \mathcal{K}}$ is pseudorandom when:
\begin{enumerate}
\item \textit{Efficient generation}: There exists a polynomial-time quantum algorithm $\mathcal{G}$ that generates the state $\ket{\phi_k}$ when given the input $k$. In other words, for every $k \in \mathcal{K}$, $\mathcal{G}(k) = \ket{\phi_k}$.
\item \textit{Pseudorandomness}: When given the same random $k \in \mathcal{K}$, a polynomial number of copies of $\ket{\phi_k}$ are computationally indistinguishable from Haar random states. More specifically, for any QPT algorithm $\mathcal{A}$ and any $m \in \operatorname{poly}(\lambda)$,
\[\left| \Pr_{k \gets \mathcal{K}} [\mathcal{A}(\ket{\phi_k}^{\otimes m}) = 1] - \Pr_{\ket{\psi} \gets \mu} [\mathcal{A}(\ket{\psi}^{\otimes m}) = 1] \right| = \operatorname{negl}(\lambda),\]
where $\mu$ represents the Haar measure on $\mathcal{S}(\mathcal{H})$.
\end{enumerate}
\end{definition}
In this definition, a keyed family of quantum states is pseudorandom if it can be generated efficiently and appears indistinguishable from Haar random states to any QPT observer.

\section{Low circuit depth of PQAS}\label{sec:efficiency}
Here, we show that PQAS can be implemented efficiently with very low circuit depth.
\begin{lemma}[Low depth]\label{lem:depth_sup}
PQAS $\Phi_k$ and decrypter $\Phi_k^{+}$ of Def.~\ref{def:EncryptPQAS} can be implemented in $\text{polylog}(n)$ circuit depth. %
\end{lemma}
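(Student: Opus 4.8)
The plan is to decompose both $\Phi_k$ and the decrypter $\Phi_k^{+}$ into their elementary stages and bound the depth of each stage separately, using that circuit depth is additive under composition. The encryption $\Phi_k$ consists of (i) preparing the tag and mixed ancilla registers and (ii) applying the scrambler $U_k=V_{k_1}^{\mathrm{PRU}}V_{k_2}^{(4)}V_{k_3}^{(2)}$, so it suffices to bound each by $\text{polylog}(n)$. Stage (i) is essentially free: the tag $\ket{0_\ell}$ comes from initializing $\ell$ fresh qubits, while the maximally mixed state $\sigma_m=I_m/2^m$ is prepared in $O(1)$ depth by adjoining $m$ reference qubits, applying one layer of Hadamards and one layer of CNOTs to form $m$ Bell pairs, and tracing out the reference register. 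Hence stage (i) has depth $O(1)$.

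For stage (ii) I would invoke the low-depth design and PRU constructions of Refs.~\cite{schuster2024randomunitariesextremelylow,ma2024construct}. The factor $V_{k_1}^{\mathrm{PRU}}$ is $\text{polylog}(n)$-depth by assumption. For the relative-error $\epsilon$-approximate $4$-design $V_{k_2}^{(4)}$ the crucial point is that the depth of these constructions grows only polylogarithmically in the inverse error, i.e. as $\text{poly}(\log n,\log(1/\epsilon),t)$ for a $t$-design. Since $\epsilon=2^{-2m-\text{polylog}(n)}$ and the intended regime is $m=\text{polylog}(n)$ (a choice such as $m=\log^2 n$ that is simultaneously $\omega(\log n)$ and $\text{polylog}(n)$), we get $\log(1/\epsilon)=2m+\text{polylog}(n)=\text{polylog}(n)$ and $t=4=O(1)$, so the $4$-design depth stays $\text{polylog}(n)$. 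The remaining factor $V_{k_3}^{(2)}$ is the part I expect to need the most care, since the textbook realization of an exact $2$-design by a uniformly random Clifford does not obviously run in polylogarithmic depth; I would realize it instead by a low-depth exact $2$-design ensemble, e.g. a random Clifford compiled via its canonical form into $O(1)$ linear-reversible (CNOT) layers, each implementable in $O(\log z)$ depth with ancillas and all-to-all connectivity, using $z=n+\ell+m=n+\text{polylog}(n)$ so that $\log z=O(\log n)$. Summing the three contributions gives $U_k$ in $\text{polylog}(n)$ depth.

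Finally, the decrypter $\Phi_k^{+}$ applies the gate-reversed circuit $U_k^\dagger$, which has the same depth as $U_k$, followed by a projection onto $\ket{0_\ell}$ and a partial trace over the last $m$ qubits; both of these are depth-$O(1)$ local operations. Therefore $\Phi_k^{+}$ also has $\text{polylog}(n)$ depth, and the lemma follows.

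The genuine obstacles are twofold and both concern stage (ii). First, one must verify that the cited constructions achieve the strong \emph{relative}-error notion of Def.~\ref{def:unitarydesign} (not merely additive diamond-norm error) with depth depending on $\epsilon$ only through $\log(1/\epsilon)$, so that the exponentially small target error does not inflate the depth; this is what pins down the $m=\text{polylog}(n)$ requirement. Second, one must confirm a $\text{polylog}(n)$-depth \emph{exact} $2$-design, for which I would lean on low-depth Clifford synthesis with ancillas rather than a generic random-Clifford implementation. Everything else is routine additivity of depth.
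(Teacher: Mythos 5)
Your proposal is correct and follows essentially the same route as the paper's proof: constant-depth preparation of the tag and of $\sigma_m$ via Bell pairs, polylogarithmic-depth PRU and relative-error $4$-design from Refs.~\cite{schuster2024randomunitariesextremelylow,ma2024construct} with the same key observation that the depth depends on $\epsilon$ only through $\log(1/\epsilon)$ (forcing $m=\text{polylog}(n)$ so that $\log(1/\epsilon)=\text{polylog}(n)$), and additivity of depth, with the decrypter handled as the reversed circuit plus a cheap projection. The only deviation is the exact $2$-design, where the paper directly cites the polylog-depth Clifford construction of Ref.~\cite{cleve2015near} while you instead synthesize a random Clifford in $O(\log z)$ depth via its canonical form using ancillas and all-to-all connectivity; both are valid, so your argument stands.
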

\begin{proof}
First, we note that the tag state $\ket{0}^{\otimes\ell}$ and maximally mixed state $\sigma_m$ can be efficiently prepared in constant depth. In particular, $\sigma_m$ can be prepared via $m$ Bell pairs and tracing out half of the qubits.

Now, we discuss how to implement $U_k=V_{k_1}^{\text{PRU}}V_{k_2}^{(4)}V_{k_3}^{(2)}$.
The scrambling unitary $U_k$ consists of three components.

First, PRU $V_{k_1}^{\text{PRU}}$ is indistinguishable from Haar random scrambling for any efficient observer. PRUs are necessary in PQAS for encryption, i.e. rendering the state indistinguishable from the maximally mixed state. 
Recently, it has been shown that PRU can be implemented in very short depth $D$~\cite{schuster2024randomunitariesextremelylow,ma2024construct}. 
For quantum computers with one-dimensional connectivity, one can implement PRUs in $D=O(\text{polylog } n)$,
and for all-to-all connectivity in $D=O(\text{polylog} \text{ log } n)$.

Next, we regard the approximate $4$-design $V_{k_2}^{(4)}$. The $4$-design property is needed to achieve authentication with high probability for any key, i.e. the authenticated state has high fidelity with the original state. Without the $4$-design, the authenticated state may be large variance of the fidelity, i.e. there may be a non-negligible amount of keys for which the authenticated state has low fidelity. 
To suppress this variance, it is sufficient to choose $\epsilon=2^{-\text{polylog}(n)}$ and $m=\text{polylog}(n)$.
Ref.~\cite{schuster2024randomunitariesextremelylow} showed that relative-error approximate $t$-designs can be implemented in depth $D=O(\log(n/\epsilon) t \text{polylog}(t))$ in one-dimensional circuits. For our choice of parameters, we get $D=\text{polylog}(n)$.

Finally, we need to implement exact $2$-design $V_{k_3}^{(2)}$ which is essential to achieve authentication. In Ref.~\cite{cleve2015near}, an implementation in $D=\text{polylog}(n)$ was proposed using Clifford circuits.

As all component unitaries have $D=\text{polylog}(n)$, $U_k$ has in total $D=\text{polylog}(n)$ depth.
\end{proof}

\section{Security of PQAS}\label{sec:security}
Here, we show that PQAS renders arbitrary input states (including when the $t=\text{poly}(n)$ copies are entangled, and the adversary holds their purification in $qt$ qubits) indistinguishable from the maximally mixed state for any QPT eavesdropper.
We restate Thm.~\ref{thm:PQAS} of the main text:
\begin{theorem}[Semantic security of PQAS]\label{thm:PQAS_supp}
Given PQAS $\Phi_k(\rho)$ for $n$-qubit state $\rho$, we consider the channel $(\Phi_k^{\otimes t}\otimes \mathcal{I}_{qt})(\rho_{\operatorname{g}})$ (with $qt$-qubit identity channel $\mathcal{I}_{qt}$ acting on the $qt$ purification qubits) acting on arbitrary $t(n+q)$-qubit state $\rho_{\operatorname{g}}$, where $\Phi_k^{\otimes t}$ acts on a $nt$-qubit subspace of $\rho_{\operatorname{g}}$ with $t=\operatorname{poly}(n)$.
The output of $(\Phi_k^{\otimes t}\otimes \mathcal{I}_{qt})(\rho_{\operatorname{g}})$ is indistinguishable from $\sigma_{z}^{\otimes t}\otimes\operatorname{tr}_{nt}(\rho_{\operatorname{g}})$, where $\sigma_{z}=I_{z}/2^{z}$ ($z=n+\ell+m$) is the maximally mixed state and $\operatorname{tr}_{nt}(.)$ the partial trace over $nt$ qubits, for any QPT algorithm $\mathcal{A}$, i.e.
\begin{align*}
\Big|&\Pr_{k \gets \mathcal{K}} [\mathcal{A}((\Phi_k^{\otimes t}\otimes \mathcal{I}_{qt})(\rho_{\operatorname{g}})) = 1] -  \Pr [\mathcal{A}(\sigma_z^{\otimes t}\otimes\operatorname{tr}_{nt}(\rho_\text{g})) = 1] \Big| = \operatorname{negl}(n)\,.
\end{align*}
As a special case, for $t$ unentangled input states $\rho$ (i.e. $\rho_\text{g}=\rho^{\otimes t}$ and $q=0$), the encrypted states are QPT indistinguishable from the maximally mixed state 
\begin{align*}
\Big|&\Pr_{k \gets \mathcal{K}} [\mathcal{A}(\Phi_k(\rho)^{\otimes t}) = 1] -\Pr [\mathcal{A}(\sigma_{n+\ell+m}^{\otimes t}) = 1] \Big| = \operatorname{negl}(n)\,.
\end{align*}
\end{theorem}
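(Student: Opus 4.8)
The plan is to prove the general (entangled, purified) statement through two hybrid steps joined by the triangle inequality, exactly as sketched after the theorem: first replace the pseudorandom unitary by a Haar-random one at the cost of a negligible \emph{computational} error, and then show that the resulting Haar-scrambled state is \emph{statistically} close to the target $\sigma_z^{\otimes t}\otimes\operatorname{tr}_{nt}(\rho_{\mathrm{g}})$. Since statistical (trace-distance) closeness bounds the advantage of every distinguisher, including the QPT $\mathcal{A}$, combining the two bounds yields the claim.

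For the first step I would define the intermediate object obtained by replacing $U_k$ with a single Haar-random $U$ applied to each of the $t$ encrypted registers. I would first observe that $U_k=V_{k_1}^{\mathrm{PRU}}V_{k_2}^{(4)}V_{k_3}^{(2)}$ is \emph{itself} a PRU: for fixed $k_2,k_3$ the factor $V_{k_2}^{(4)}V_{k_3}^{(2)}$ is an efficiently implementable unitary, and by invariance of the Haar measure post-composing a Haar-random unitary by its inverse is again Haar-random, so the design factors are irrelevant to pseudorandomness (they matter only for authentication). To bound the distinguishing advantage between the genuine output and this hybrid I would build a reduction $\mathcal{B}$ with oracle access to a unitary $W$: given the (worst-case) input $\rho_{\mathrm{g}}$ as advice, $\mathcal{B}$ appends the tag and $\sigma_m$ registers, applies $W$ to each of the $t$ encrypted registers ($t=\operatorname{poly}(n)$ queries), and runs $\mathcal{A}$ on the result. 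Setting $W=U_k$ reproduces $(\Phi_k^{\otimes t}\otimes\mathcal{I}_{qt})(\rho_{\mathrm{g}})$, while $W$ Haar reproduces the hybrid, so the PRU definition caps this advantage by $\operatorname{negl}(n)$.

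For the second step I would bound $\big\|\,\mathbb{E}_{U}[(U^{\otimes t}\otimes I_{qt})\,\tilde\rho_{\mathrm{g}}\,(U^{\dagger\otimes t}\otimes I_{qt})]-\sigma_z^{\otimes t}\otimes\operatorname{tr}_{nt}(\rho_{\mathrm{g}})\,\big\|_1$, where $\tilde\rho_{\mathrm{g}}$ is $\rho_{\mathrm{g}}$ with the per-copy ancillas $\ket{0_\ell}\bra{0_\ell}\otimes\sigma_m$ appended. The single-copy case is \emph{exact}: by Schur's lemma $\mathbb{E}_U[(U\otimes I)\xi_{AB}(U^\dagger\otimes I)]=\sigma_z\otimes\operatorname{tr}_A(\xi_{AB})$, so the scrambled register is maximally mixed and decoupled from the purification, and \emph{all} deviation originates from applying the \emph{same} $U$ across the $t$ copies. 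Evaluating the $t$-fold Haar twirl through the permutation/Weingarten expansion, the leading (trivial-permutation) contribution reproduces $\sigma_z^{\otimes t}\otimes\operatorname{tr}_{nt}(\rho_{\mathrm{g}})$, while every nontrivial permutation term is suppressed by a collision factor controlled by the purity of each per-copy input. Crucially, the $\sigma_m$ factor caps this purity by $\operatorname{tr}[(\rho^{(i)}\otimes\ket{0_\ell}\bra{0_\ell}\otimes\sigma_m)^2]\le\operatorname{tr}(\sigma_m^2)=2^{-m}$ \emph{independently of any entanglement} with the purification register. Summing the $O(t^2)$ pairwise-collision contributions then yields a trace-distance bound of order $\operatorname{poly}(t)\,2^{-m}$, which is $\operatorname{negl}(n)$ precisely because $m=\omega(\log n)$ and $t=\operatorname{poly}(n)$.

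I expect the main obstacle to be this statistical step in full generality, namely controlling the higher Haar moments so that the off-diagonal permutation terms remain suppressed by the per-copy purity $2^{-m}$ even after the $qt$ purification qubits are correlated with the encrypted registers; the content is that the $\sigma_m$ factor enforces this suppression regardless of the entanglement structure of $\rho_{\mathrm{g}}$, preserving both maximal mixedness and decoupling. The special unentangled case $\rho_{\mathrm{g}}=\rho^{\otimes t}$, $q=0$ then follows immediately, and a final triangle inequality between the genuine output, the Haar hybrid, and $\sigma_z^{\otimes t}\otimes\operatorname{tr}_{nt}(\rho_{\mathrm{g}})$ completes the argument.
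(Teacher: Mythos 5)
Your proposal is correct and follows essentially the same route as the paper's proof: a triangle-inequality hybrid replacing the PRU with a Haar-random unitary (your explicit oracle reduction is in fact slightly more careful than the paper's one-line appeal to the PRU definition, and your observation that the design factors can be absorbed by Haar invariance matches the paper's remark that only the PRU property matters for security), followed by the same Weingarten/permutation expansion of the $t$-fold twirl, in which the identity permutation reproduces $\sigma_z^{\otimes t}\otimes\operatorname{tr}_{nt}(\rho_{\mathrm{g}})$ and every nontrivial permutation is suppressed via $\operatorname{tr}(\sigma_m^{\otimes t}\pi_B^\dagger)\le 2^{-m}$, giving the paper's bound $O(t^2/2^m)$. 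The purification-entanglement subtlety you flag is resolved in the paper exactly as you anticipate: the $\sigma_m$ coefficient factorizes out of the twirl independently of the structure of $\rho_{\mathrm{g}}$, so the $2^{-m}$ suppression survives arbitrary correlations with the $qt$ purification qubits.
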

\begin{proof}

For our proof, we first show that a modified PQAS
which uses Haar random states $U$ drawn from the Haar measure $\mu$
\begin{equation}
\Phi_U(\rho)= U \rho \otimes(\ket{0}\bra{0})^{\otimes \ell}\otimes \sigma_m U^\dagger\,.
\end{equation}
makes $n$-qubit input state $\rho$ indistinguishable from the maximally mixed state $\sigma_{z}=I_{z}/2^{z}$ for $m=\omega(\log n)$  when $\Phi_U(\rho)$ is queried at most $t=\text{poly}(n)$ times and we have $z=n+\ell+m$.
We find that this includes the case where one uses the PQAS $t$ times. Further, we also allow the $n$-qubit input to the PQAS to be mixed, and an adversary to hold the purification of the state, which is encapsulated in the $qt$ purification qubit. In this case, we find that the purification is simply traced out and does not reveal any information about the $n$-qubit input state or the key. %
This general case of entanglement and purification is expressed by considering an arbitrary $t(n+q)$-qubit input state $\rho_\text{g}$ and channel $(\Phi^{\otimes t}_U\otimes I_{qt})(\rho_\text{g})$. After applying the PQAS, the resulting state is computationally indistinguishable from $\sigma_{z}^{\otimes t}\otimes\operatorname{tr}_{nt}(\rho_{\operatorname{g}})$.

Then, we can show Thm.~\ref{thm:PQAS_supp} as follows: By definition, PRU are indistinguishable from Haar random unitaries for any QPT algorithm. Thus, by applying the triangle inequality it follows the computational security of PQAS.

Now, what is left to do is proving the statistical closeness of scrambling with Haar random unitaries and the maximally mixed state.
Without loss of generality, we use $t(n+q)$-qubit pure input states $\rho_\text{g}=\ket{\chi}\bra{\chi}$.

The channel $(\Phi^{\otimes t}_U\otimes I_{qt})(\rho_\text{g})$ first appends the tag state and maximally mixed state, yielding $t(n+\ell+m+q)$-qubit state $(\ket{0}\bra{0})^{\otimes t\ell}\otimes\sigma_m^{\otimes t}\otimes\ket{\chi}\bra{\chi}$. Then, the channel scrambles a $t(n+m+\ell)$-qubit subspace  with $t$ Haar random unitaries $U$ with $n+\ell+m$ qubits each, while we act with the identity on the rest $qt$-qubits:  
\begin{equation}
    \rho^{(t)}=\int_{U\in\mu}\text{d}U (\Phi_U^{\otimes t}(\rho_\text{g})\otimes I_{qt})=\int_{U\in\mu}\text{d}U (U^{\otimes t} \otimes I_{qt})((\ket{0}\bra{0})^{\otimes t\ell}\otimes\sigma_m^{\otimes t}\otimes\ket{\chi}\bra{\chi})({U^\dagger}^{\otimes t}\otimes I_{qt}).
\end{equation}
Here, we note that each $n+\ell+m$ qubit $U$ acts on a unique subset of $n$ qubits of $\ket{\chi}\bra{\chi}$, as 
 well as $(\ket{0}\bra{0})^{\otimes \ell}$ and $\sigma_m$. 
To get statistical closeness, we need to have
\begin{equation}
    \left\Vert \rho^{(t)}- \sigma_{n+\ell+m}^{\otimes t}\otimes \text{tr}_{nt}(\ket{\chi}\bra{\chi}) \right\Vert_1=\operatorname{negl}(n),
\end{equation}
where we use the trace norm $\Vert O\Vert_1=\text{tr}(\sqrt{ O^\dagger O})$.

Now, we evaluate the Haar integral for $\rho^{(t)}$. To simplify notation, we define the $t(\ell+n+q)$ qubit state $\ket{\phi}\bra{\phi}\equiv(\ket{0}\bra{0})^{\otimes t\ell}\otimes\ket{\chi}\bra{\chi}$.  %
We can rewrite the Haar averaged state as follows~\cite{mele2024introduction}
\begin{equation}
    \rho^{(t)}=\sum_{\pi,\eta\in S_t}W_\text{g}(\eta^{-1}\pi,d)\text{tr}(\sigma_m^{\otimes t}\pi_B^\dagger)\eta_A\otimes\eta_B\otimes \text{tr}_{(n+\ell)t}(\ket{\phi}\bra{\phi}(\pi_A^\dagger\otimes I_{qt})),
\end{equation}
where $S_t$ is the symmetric group, $d_A=2^{n+\ell}$, $d_B=2^{m}$, $d=d_Ad_B$, and $W_g$ the Weingarten coefficients.
We denote $A$ as the subspace over the first $(n+\ell)t$-qubits of $\ket{\phi}\bra{\phi}=(\ket{0}\bra{0})^{\otimes t\ell}\otimes\ket{\chi}\bra{\chi}$, and $B$ the subspace over $\sigma_m^{\otimes t}$ and we used the fact that permutation operators $\eta=\eta_A\otimes\eta_B$ can be split into tensor products over their respective subspace.
Now, we note that 
\begin{gather}
\text{tr}(\text{tr}_{(n+\ell)t}(\ket{\phi}\bra{\phi}(\pi_A^\dagger\otimes I_{qt})))\leq 1\label{eq:psi_permute}\\
    \text{tr}(\sigma_m^{\otimes t}\pi_B^\dagger)=d_B^{-t}d_B^{\#\text{cycles}(\pi)}\leq 1 \\
    \Vert \eta_A\otimes\eta_B\otimes \text{tr}_{(n+\ell)t}(\ket{\phi}\bra{\phi}(\pi_A^\dagger\otimes I_{qt}))\Vert_1=(d_Ad_B)^t\,.
\end{gather} 
Note that for the case $\pi=I$ we have $\text{tr}_{(n+\ell)t}(\ket{\phi}\bra{\phi}(\pi_A^\dagger\otimes I_{qt}))=\text{tr}_{tn}(\ket{\chi}\bra{\chi})$. 
We also note the following important bound on Weingarten functions~\cite{aharonov2021quantum,schuster2024randomunitariesextremelylow}
\begin{equation}\label{eq:sum_weingarten_coeff}
    \sum_{\pi\in S_t} \vert W_\text{g}(\pi,d)\vert =\frac{(d-t)!}{d!}\leq d^{-t}\left(1+\frac{t^2}{d}\right)\,.
\end{equation}
where the last inequality is valid for $t^2\leq d$.
Finally, we note that $W_\text{g}(I,d)= O(1/d^t)$ and $d_B^{-t}d_B^{\#\text{cycles}(I)}=1$, and $d_B^{-t}d_B^{\#\text{cycles}(\pi)}\leq d_B^{-1}$ for $\pi\in S_t/\{I\}$. Here, $S_t/\{I\}$ is the symmetric subgroup without the identity $I$.
Also we note~\cite{harrow2013church}
\begin{equation}\label{eq:sum_cycles}
\sum_{\pi\in S_t/I}d_B^{\#\text{cycles}(\pi)}=\frac{(d_B+t-1)!}{(d_B-1)!}-d_B^{t}= \frac{t(t-1)}{2}d_B^{t-1} +O(d_B^{t-2})\,.
\end{equation}
Now, we consider
\begin{align*}
&\left\Vert \rho^{(t)}- \sigma_{n+\ell+m}^{\otimes t}\otimes \text{tr}_{nt}(\ket{\chi}\bra{\chi}) \right\Vert_1  \\
 &=\left\Vert \sum_{\pi,\eta\in S_t}W_\text{g}(\eta^{-1}\pi,d)\text{tr}(\sigma_m^{\otimes t}\pi_B^\dagger)\eta_A\otimes\eta_B\otimes \text{tr}_{(n+\ell)t}(\ket{\phi}\bra{\phi}(\pi_A^\dagger\otimes I_{qt})) -\sigma_{n+\ell+m}^{\otimes t}\otimes \text{tr}_{nt}(\ket{\chi}\bra{\chi})
 \right\Vert_1\\
  &=\left\Vert \sum_{\eta\in S_t/\{I\}}\sum_{\pi\in S_t}W_\text{g}(\eta^{-1}\pi,d)\text{tr}(\sigma_m^{\otimes t}\pi_B^\dagger)\eta_A\otimes\eta_B\otimes \text{tr}_{(n+\ell)t}(\ket{\phi}\bra{\phi}(\pi_A^\dagger\otimes I_{qt}))  \right.\\
  &\qquad +\left.\sum_{\pi\in S_t /\{I\}}W_\text{g}(\pi,d)\text{tr}(\sigma_m^{\otimes t}\pi_B^\dagger) I_{n+\ell+m}^{\otimes t} \otimes \text{tr}_{(n+\ell)t}(\ket{\phi}\bra{\phi}(\pi_A^\dagger\otimes I_{qt}))
 \right\Vert_1 \\
   &\leq\sum_{\eta\in S_t/\{I\}}\sum_{\pi\in S_t}\left\Vert W_\text{g}(\eta^{-1}\pi,d)\text{tr}(\sigma_m^{\otimes t}\pi_B^\dagger)\eta_A\otimes\eta_B\otimes \text{tr}_{(n+\ell)t}(\ket{\phi}\bra{\phi}(\pi_A^\dagger\otimes I_{qt}))\right\Vert_1 \\
  &\qquad +\sum_{\pi\in S_t /\{I\}}\left\Vert W_\text{g}(\pi,d)\text{tr}(\sigma_m^{\otimes t}\pi_B^\dagger) I_{n+\ell+m}^{\otimes t}\otimes \text{tr}_{(n+\ell)t}(\ket{\phi}\bra{\phi}(\pi_A^\dagger\otimes I_{qt}))
 \right\Vert_1 \\
    &\leq d_B^{-t}\sum_{\eta\in S_t/\{I\}}\sum_{\pi\in S_t}\left\Vert W_\text{g}(\eta^{-1}\pi,d)d_B^{\#\text{cycles}(\pi)}\eta_A\otimes\eta_B \otimes \text{tr}_{(n+\ell)t}(\ket{\phi}\bra{\phi}(\pi_A^\dagger\otimes I_{qt}))\right\Vert_1 \\
    &\qquad +
    d_B^{-t}\sum_{\pi\in S_t /\{I\}}\left\Vert W_\text{g}(\pi,d)d_B^{\#\text{cycles}(\pi)} I_{n+\ell+m}^{\otimes t}\otimes \text{tr}_{(n+\ell)t}(\ket{\phi}\bra{\phi}(\pi_A^\dagger\otimes I_{qt}))
 \right\Vert_1 \\
 &=d_A^{t}\sum_{\eta\in S_t/\{I\}}\sum_{\pi\in S_t}d_B^{\#\text{cycles}(\pi)}\left\vert W_\text{g}(\eta^{-1}\pi,d) \right\vert +d_A^{t}\sum_{\pi\in S_t /\{I\}}d_B^{\#\text{cycles}(\pi)}\left\vert W_\text{g}(\pi,d)
 \right\vert\\
  &\leq d_A^{t}\sum_{\pi\in S_t/{I}}d_B^{\#\text{cycles}(\pi)}\sum_{\eta\in S_t/\{I\}}\vert W_\text{g}(\eta^{-1}\pi,d) \vert + d_A^{t}d_B^{\#\text{cycles}(I)}\sum_{\eta\in S_t/\{I\}}\vert W_\text{g}(\eta^{-1},d) \vert +d_A^{t}d_B^{t-1}\sum_{\pi\in S_t /\{I\}}\left\vert W_\text{g}(\pi,d)
 \right\vert\\
  &\leq\sum_{\pi\in S_t/\{I\}}d_B^{\#\text{cycles}(\pi)} d_B^{-t}( 1 + \frac{t^2}{d_A d_B} )+ \frac{t^2}{d_A d_B} +\frac{t^2}{d_A d_B^2}
  \\ 
  &\leq \frac{t(t-1)}{2}d_B^{-1}( 1 + \frac{t^2}{d_A d_B} ) +O(d_B^{t-2}) + \frac{t^2}{d_A d_B} +\frac{t^2}{d_A d_B^2} .
\end{align*}
Thus, we finally get
\begin{equation}\label{eq:closeness}
\Vert \rho^{(t)}- \sigma_{n+\ell+m}^{\otimes t}\otimes \text{tr}_{nt}(\ket{\chi}\bra{\chi}) \Vert_1= O\left(\frac{t^2}{2^m}\right)\,.
\end{equation}
In particular, for $t=\text{poly}(n)$ and $m=\omega(\log n)$, we have
\begin{gather}
\text{TD}(\rho^{(t)},\sigma_{n+\ell+m}^{\otimes t}\otimes \text{tr}_{nt}(\ket{\chi}\bra{\chi}))=\frac{1}{2}\left\Vert \rho^{(t)}- \sigma_{n+\ell+m}^{\otimes t}\otimes \text{tr}_{nt}(\ket{\chi}\bra{\chi}) \right\Vert_1= \text{negl}(n)\,,
\end{gather}
which concludes our proof. 
\end{proof}
We note that the case of $t=\text{poly}(n)$ identical and unentangled $n$-qubit input states $\rho$, i.e. $\rho_\text{g}=\rho^{\otimes t}$ was shown previously in Ref.~\cite{childs2007weak}.

\section{Authentication of PQAS}\label{sec:robustness}

We have $n$-qubit input state $\ket{\psi}\bra{\psi}$. To encrypt with PQAS, we first append the $m$-qubit maximally mixed state $\sigma_m = I^{\otimes m}/2^m$, and $(\ket{0}\bra{0})^{\otimes l}$ as the $\ell$-qubit tag state.
This gives us the extended state \begin{equation}
\rho_\text{ext}=\ket{\psi}\bra{\psi} \otimes (\ket{0}\bra{0})^{\otimes l} \otimes \sigma_m\,.
\end{equation}
Then, we apply unitary $U$ to scramble this state via $U\rho_\text{ext}U^\dagger$.
The complete PQAS is given by $\Phi_k(\ket{\psi}) = U \ket{\psi}\bra{\psi} \otimes (\ket{0}\bra{0})^{\otimes l} \otimes \sigma_m U^\dagger$. %
Let us also define the dimension of the Hilbert space of $\rho$ as $d=2^{n+\ell+n}$. 

We now have arbitrary quantum channel $\Gamma(\boldsymbol{\cdot}) = \sum_{i} K_i \boldsymbol{\cdot} K_i^\dag$ with Kraus operators $K_i$ which is applied on the encrypted state. To characterize the channel, we use the entanglement fidelity
\begin{equation}
\mathcal{F}_\text{e}(\Gamma)=\bra{\Psi}\Gamma\otimes I \ket{\Psi}=d^{-2} \sum_i |\tr(K_i)|^2
\end{equation}
with maximally entangled  
state $\ket{\Psi}=d^{-1}\sum_{i}\ket{ii}$.
We note that the entanglement fidelity is asymptotically equivalent to the channel fidelity~\cite{nielsen2002simple}
\begin{equation}\label{eq:channelfid}
\mathcal{F}_\text{c}(\Gamma)=\int_{U\in\mu}\text{d}U \bra{0}U^\dagger\Gamma(U\ket{0}\bra{0}U^\dagger)U\ket{0}=\frac{d^{-1} \sum_i |\tr(K_i)|^2+1}{d
+1}= \mathcal{F}_\text{e}(\Gamma) + O(d^{-1}).
\end{equation}
To decrypt, we first apply the inverse $U^\dagger$, then project onto the tag state.
This corresponds to applying the projector
\begin{equation}
    \Pi_0 = I^{\otimes n} \otimes (\ket{0}\bra{0})^{\otimes l} \otimes I^{\otimes m}
\end{equation}
on the state
\begin{equation}
    \rho_{\text{dec}} = U^\dag \Gamma(U \rho_\text{ext} U^\dag) U.
\end{equation}
The normalized projected state, where we also trace out the $\ell$ tag qubits and $m$ qubits of the maximally mixed state, is given by
\begin{equation}
    \rho_0= 1/P_0 \text{tr}_{\ell+m}(\Pi_0 U^\dagger \Gamma(U \rho_\text{ext} U^\dagger) U \Pi_0)\,.
\end{equation}
The fidelity of the projected state $\rho_0$ with respect to the initial state $\ket{\psi}$ is given by
\begin{equation}
F(\rho_0,\ket{\psi})=\bra{\psi}\rho_0\ket{\psi} = \frac{F^\prime}{P_0},
\end{equation}
where
\begin{equation}
    F^\prime = \tr(\ket{\psi}\bra{\psi}\Pi_0 U^\dag \Gamma(U\rho_\text{ext} U^\dag)U\Pi_0) = 2^m \tr(\rho_\text{ext} U^\dag \Gamma(U\rho_\text{ext} U^\dag)U)
\end{equation}
is the unnormalized fidelity, and
\begin{equation}
    P_0 = \tr(\Pi_0 U^\dag \Gamma(U\rho_\text{ext} U^\dag)U)
\end{equation}
is the success probability of the projection. The Haar average values (which are fulfilled for $2$-designs) are
\begin{equation}
    \Eset{U \sim \text{Haar}} F^\prime = \left(1 - \frac{1}{2^{n+l}}\right)\left(1 + O(d^{-4})\right) \mathcal{F}_\text{c}(\Gamma) + \frac{1}{2^{n+l}} + O(d^{-2}) = \mathcal{F}_\text{c}(\Gamma) + O(2^{-(n+\ell)}),
\label{eq:haarval_fprime}
\end{equation}
and
\begin{equation}
    \Eset{U \sim \text{Haar}} P_0 = \left(1 - \frac{1}{2^l}\right) \left(1 + O(d^{-4})\right) \mathcal{F}_\text{c}(\Gamma) + \frac{1}{2^l}\left(1 + O(d^{-2})\right) = \mathcal{F}_\text{c}(\Gamma) + O(2^{-l})\,,
\label{eq:haarval_p0}
\end{equation}
in the regime where we choose $l = \omega(\log n)$ and $\mathcal{F}_\text{c}(\Gamma) = \Omega(1/\text{poly}(n))$. The latter condition ensures that the success probability is at least inverse polynomial in $n$, on average.

\begin{theorem}[High-fidelity recovery]
    Let $U$ be sampled from the ensemble $\mathcal{E}$, where $\mathcal{E}$ is an exact unitary $2$-design and also an $\epsilon$-approximate unitary $4$-design (in relative error) with $\epsilon = 2^{-2m-\operatorname{polylog}(n)}$, $\ell = \omega(\log n)$ and $\mathcal{F}_\text{c}(\Gamma) = \Omega(1/\operatorname{poly}(n))$. The fidelity between input state $\ket{\psi}$ and normalized projected state $\rho_0$ is given by
    \begin{equation}
        F(\rho_0,\ket{\psi}) > 1 - \operatorname{negl}(n)
    \end{equation}
    with probability at least $1 - \operatorname{negl}(n)$.
\end{theorem}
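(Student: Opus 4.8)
The plan is to upgrade the average-case identities \eqref{eq:haarval_fprime} and \eqref{eq:haarval_p0} into a statement that holds with overwhelming probability over the key, by controlling the fluctuations of the unnormalized fidelity $F'$ and the acceptance probability $P_0$ about their means. Since $\mathcal{E}$ is an exact $2$-design, the means $\mathbb{E}_\mathcal{E}[F'] = \mathcal{F}_\text{c}(\Gamma) + O(2^{-(n+\ell)})$ and $\mathbb{E}_\mathcal{E}[P_0] = \mathcal{F}_\text{c}(\Gamma) + O(2^{-\ell})$ are exactly the Haar values already computed, so their ratio is $\mathbb{E}_\mathcal{E}[F']/\mathbb{E}_\mathcal{E}[P_0] = 1 - O(2^{-\ell})/\mathcal{F}_\text{c}(\Gamma) = 1 - \operatorname{negl}(n)$, using $\ell = \omega(\log n)$ and $\mathcal{F}_\text{c}(\Gamma) = \Omega(1/\operatorname{poly}(n))$. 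It therefore suffices to show that $F'$ and $P_0$ each concentrate tightly enough around these means that, with probability $1 - \operatorname{negl}(n)$ over $k$, the ratio $F = F'/P_0$ inherits the bound.

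The central step is bounding the variances $\operatorname{Var}(F')$ and $\operatorname{Var}(P_0)$. Both $F'$ and $P_0$ are degree-$(2,2)$ polynomials in $(U, U^\dagger)$ (two factors of $U$ and two of $U^\dagger$, since $\Gamma$ contributes one conjugation by each), so their squares are degree-$(4,4)$ and the second moments $\mathbb{E}[F'^2]$, $\mathbb{E}[P_0^2]$ are fourth-moment quantities, which is exactly where the $4$-design property enters. I would first compute the Haar second moments via Weingarten calculus and verify that $\operatorname{Var}_\text{Haar}(F')$ and $\operatorname{Var}_\text{Haar}(P_0)$ are suppressed by inverse powers of $d = 2^{n+\ell+m}$. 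Next I would pass from Haar to $\mathcal{E}$ using the relative-error $4$-design bound of Def.~\ref{def:unitarydesign}: writing $\Phi_\mathcal{E}^{(4)} = \Phi_\text{Haar}^{(4)} + \Delta$ with $\|\Delta\|_\diamond \le 2\epsilon$ (the additive form of the relative-error bound), the deviation $|\mathbb{E}_\mathcal{E}[F'^2] - \mathbb{E}_\text{Haar}[F'^2]|$ is controlled by $2\epsilon$ times the scale of $F'^2$.

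The crux, and the main obstacle, is the overall factor $2^m$ in $F' = 2^m \tr(\rho_\text{ext} U^\dagger \Gamma(U\rho_\text{ext}U^\dagger)U)$, which arises because $\rho_\text{ext}$ carries the rank-$2^m$ maximally mixed block $\sigma_m$ of purity $2^{-m}$. Squaring produces a prefactor $2^{2m}$ that multiplies both the Haar variance and the design-error term, so a naive approximate design would not suffice. The relative error $\epsilon = 2^{-2m - \operatorname{polylog}(n)}$ is calibrated precisely to cancel this amplification, since $2^{2m}\epsilon = 2^{-\operatorname{polylog}(n)} = \operatorname{negl}(n)$. Careful Weingarten bookkeeping of which index contractions survive the $2^{2m}$ scaling, together with this choice of $\epsilon$, should yield $\operatorname{Var}_\mathcal{E}(F') = \operatorname{negl}(n)$ and likewise $\operatorname{Var}_\mathcal{E}(P_0) = \operatorname{negl}(n)$. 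The delicate points are that the operator whose expectation gives $F'^2$ is not manifestly positive, so one must use the diamond-norm form of the bound rather than the completely-positive ordering directly, and that one must check the Weingarten corrections to $\operatorname{Var}_\text{Haar}$ do not carry an uncompensated $2^{2m}$.

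Finally, I would apply Chebyshev's inequality to each quantity and a union bound: choosing a threshold $\delta = \operatorname{negl}(n)$ that still dominates the standard deviations, one obtains $|F' - \mathbb{E} F'| \le \delta$ and $|P_0 - \mathbb{E} P_0| \le \delta$ simultaneously with probability $1 - \operatorname{negl}(n)$. On this event $P_0 \ge \mathcal{F}_\text{c}(\Gamma) - \delta = \Omega(1/\operatorname{poly}(n))$ stays bounded away from zero, so $F = F'/P_0 \ge (\mathbb{E}F' - \delta)/(\mathbb{E}P_0 + \delta) = 1 - O(2^{-\ell}) - \operatorname{negl}(n) = 1 - \operatorname{negl}(n)$, which is the claim.
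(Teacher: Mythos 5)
Your overall architecture---concentrate both $F'$ and $P_0$ around their exact $2$-design means via Chebyshev, then take the ratio---is genuinely different from the paper's proof, and it breaks at exactly the step you pass over with ``likewise $\operatorname{Var}_{\mathcal{E}}(P_0)=\operatorname{negl}(n)$''. The Haar-side variance of $P_0$ is indeed fine ($P_0$ is Lipschitz in $U$ just like $F'$, so Levy's lemma gives $O(1/d)$), but the Haar-to-$\mathcal{E}$ transfer is not. Lemma~\ref{lemma:twirl} transfers second moments with error $2\epsilon\operatorname{tr}^2(A)\operatorname{tr}^2(B)$; for $F'$ one has $A=2^m\rho_{\text{ext}}$ with $\operatorname{tr}(A)=2^m$, which is the $2^{2m}$ amplification you correctly identify and which $\epsilon=2^{-2m-\operatorname{polylog}(n)}$ cancels, giving \eqref{eq:errorbound_epsilon}. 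For $P_0$, however, the observable is $A=\Pi_0=I_n\otimes\ket{0_\ell}\bra{0_\ell}\otimes I_m$ with $\operatorname{tr}(\Pi_0)=2^{n+m}$, so the same transfer yields $\left|\E_{\mathcal{E}}P_0^2-\E_{\text{Haar}}P_0^2\right|\leq 2\epsilon\, 2^{2(n+m)}=2^{2n+1-\operatorname{polylog}(n)}$, which is exponentially \emph{large} for the theorem's $\epsilon$. No ``Weingarten bookkeeping'' can repair this: the design hypothesis supplies only the relative-error bound \eqref{eq:defn_approxdesign}, and the extractable moment error scales with the trace of the observable, not with the typical size $P_0\sim\mathcal{F}_\text{c}(\Gamma)$. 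Running your argument as stated would force $\epsilon=2^{-2(n+m)-\operatorname{polylog}(n)}$, a far stronger (and depth-costly) design assumption than the theorem grants.

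The paper's proof is structured precisely to avoid ever needing concentration of $P_0$. Its key observation is the pointwise inequality $F=F'/P_0\leq 1$, i.e.\ $X\coloneqq P_0-F'\geq 0$, whose mean is known exactly from the $2$-design: $\E X=\E P_0-\E F'=O(2^{-\ell})$ by \eqref{eq:haarval_fprime} and \eqref{eq:haarval_p0}. Markov's inequality on $X$ bounds $\Pr(1-F>\delta)=\Pr(X>\delta P_0)$ using first moments only, and the $4$-design enters solely to lower-bound $F'$---the one quantity whose trace scale the chosen $\epsilon$ actually controls. Your route is rescuable by the same observation: since $0\leq X\leq 1$ pointwise, $\E_{\mathcal{E}}X^2\leq\E_{\mathcal{E}}X=O(2^{-\ell})$, hence $\operatorname{Var}_{\mathcal{E}}(P_0)\leq 2\operatorname{Var}_{\mathcal{E}}(F')+2\,\E_{\mathcal{E}}X^2=\operatorname{negl}(n)$, after which your Chebyshev-plus-union-bound finish goes through with, e.g., $\delta=2^{-\ell/3}$. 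But the inequality $F'\leq P_0$ is exactly the paper's central idea, and it is absent from your proposal; without it, the $\operatorname{Var}(P_0)$ step fails and the proof does not close.
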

\begin{proof}
    Since $\mathcal{E}$ is an exact $2$-design,
    \begin{equation}
        \Eset{U\sim \mathcal{E}} F^\prime = \Eset{U\sim \text{Haar}}F^\prime
    \end{equation}
and
\begin{equation}
    \Eset{U\sim \mathcal{E}} P_0 = \Eset{U\sim \text{Haar}} P_0\,.
\end{equation}
This is fulfilled for $U$ being a $2$-design.
However, $F'$ is only the unnormalized fidelity. What we need to show is that the normalized fidelity $F=F'/P_0=1-\text{negl}(n)$ with probability $1-\text{negl}(n)$.
While both $P_0$ and $F'$ are close to their Haar values on average for a $2$-design, this is not sufficient. This is because $F'$ and $P_0$ can fluctuate independently, which could yield a non-negligible probability that $F$ is smaller than $1-\text{negl}(n)$.
To ensure this is not the case, we also demand that $U$ is an approximate $4$-design, which we find ensures $F$ being close to $1$.

We recall that the fidelity by definition is bounded as $F \leq1$, and thus $P_0\geq F'$.
Let us denote the non-negative random variable $X = P_0 - F^\prime \geq 0$, with $\E_{\mathcal{E}} X = \E_{\text{Haar}}X$. Using Markov's inequality,
\begin{equation}
    \Pr_{U \sim \mathcal{E}}(1-F > \delta) = \Pr_{U \sim \mathcal{E}}(X > \delta P_0) \leq \frac{\E_{\mathcal{E}} X}{\delta P_0} = \frac{\E_{\text{Haar}}X}{\delta P_0} \leq \frac{\E_{\text{Haar}} X}{\delta F^\prime}.
\label{eq:errorbound_fid}
\end{equation}
We now seek a lower bound for $F^\prime$. Using Levy's lemma for unitaries~\cite{low2009large},
\begin{equation}
    \Pr_{U \sim \text{Haar}}\left(|F^\prime - \E_{\text{Haar}}F^\prime|>s\right) \leq 4 \exp\left(-\frac{d s^2}{72 \pi^3}\right).
\label{eq:levylemma_fprime}
\end{equation}
This can be used to derive the moment bound
\begin{equation}
\begin{split}
     \text{Var}_{\text{Haar}}F^\prime &= \Eset{U \sim \text{Haar}}\left(F^\prime - \E_{\text{Haar}}F^\prime\right)^2 \\&= \int_0^{\infty} ds \Pr_{U \sim \text{Haar}}\left(|F^\prime - \E_{\text{Haar}}F^\prime|>s^{1/2}\right) \\&\leq \frac{288 \pi^3}{d}.
\end{split}
\end{equation}
From the definition of relative-error approximate designs, one can show that (see Lemma~\ref{lemma:twirl} below)
\begin{equation}
    \left| \E_{\mathcal{E}} F^{\prime 2} - \E_{\text{Haar}} F^{\prime 2} \right| \leq 2^{2m} 2\epsilon.
\label{eq:errorbound_epsilon}
\end{equation}
Therefore,
\begin{equation}
\begin{split}
    \text{Var}_{\mathcal{E}} F^\prime &= \E_{\mathcal{E}} F^{\prime 2} - (\E_{\mathcal{E}}F^\prime)^2 \\&\leq \E_{\text{Haar}}F^{\prime 2} + 2^{2m} 2\epsilon - (\E_{\text{Haar}}F^\prime)^2 \\&= \text{Var}_{\text{Haar}} F^\prime + 2^{2m} 2\epsilon \\&\leq \frac{288 \pi^3}{d} + 2^{2m} 2\epsilon.
\end{split}
\end{equation}
By Chebyshev's inequality,
\begin{equation}
    \Pr_{U\sim \mathcal{E}} \left(|F^{\prime} - \E_{\text{Haar}}F^\prime| > \beta \E_{\text{Haar}}F^\prime\right) \leq \frac{1}{(\beta \E_{\text{Haar}}F^\prime)^2} \left(\frac{288 \pi^3}{d} + 2^{2m} 2\epsilon\right) = \frac{2\epsilon}{\beta^2 \mathcal{F}_\text{c}(\Gamma)^2} 2^{2m} + O(2^{-(n+m+\ell)})\,.%
\end{equation}
where $0 < \beta < 1$ is a positive constant.
This implies that 
\begin{equation}\label{eq:Fprimeinequality}
    F^\prime \geq (1-\beta)\E_{\text{Haar}}F^\prime
\end{equation}
with a probability of at least $1-\frac{2\epsilon}{\beta^2 \mathcal{F}_\text{c}(\Gamma)^2} 2^{2m}$, where we have $\E_{\text{Haar}}F^\prime \approx \mathcal{F}_\text{c}(\Gamma) = \Omega(1/\text{poly}(n))$ by assumption.
Now, we demand $\frac{2\epsilon}{\beta^2 \mathcal{F}_\text{c}(\Gamma)^2} 2^{2m}=\text{negl}(n)$, such that~\eqref{eq:Fprimeinequality} is nearly always fulfilled with only a negligible failure probability. We can achieve this by choosing $\epsilon=2^{-2m-\text{polylog}(n)}$. 
Now, this choice implies that with probability at least $1 - \text{negl}(n)$, we have $F^\prime \geq (1-\beta)\E_{\text{Haar}}F^\prime$. Together with~\eqref{eq:errorbound_fid}, we find that
\begin{equation}
    \Pr_{U \sim \mathcal{E}}(1-F > \delta) \leq \frac{\E_{\text{Haar}}X}{\delta (1-\beta) \E_{\text{Haar}}F^{\prime}} = \frac{1}{\delta(1-\beta)} \left(\frac{\E_{\text{Haar}}P_0}{\E_{\text{Haar}}F^\prime} - 1\right) = \frac{1}{\delta(1-\beta)} O(2^{-\ell})\,,
\end{equation}
using the Haar averages~\eqref{eq:haarval_fprime} and~\eqref{eq:haarval_p0}. For $\delta = \Omega(2^{-\ell+\text{polylog}(n)})$ we have that $F \geq 1 - \delta$ with probability at least $1 - \text{negl}(n)$.
By choosing $\ell=\omega(\log n)$, we get $F \geq 1 - \text{negl}(n)$ with probability $1 - \text{negl}(n)$.
\end{proof}
In the above proof we used Levy's lemma in~\eqref{eq:levylemma_fprime} to show that $F^\prime$ concentrates about its Haar average value. This relies on the fact that $F^\prime$ is a Lipschitz function of $U$, which we will now prove.
\begin{lemma}[Lipschitz constant]
    $F^\prime(U)$ is a Lipschitz function of $U$ with Lipschitz constant at most $4$.
\end{lemma}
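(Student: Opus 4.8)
The plan is to bound the norm of the derivative of $F'$ along curves in the unitary group, since a bound of $\eta$ on the derivative along every unit-speed curve yields Lipschitz constant $\eta$ with respect to the Hilbert--Schmidt distance $\|U-V\|_2$, which is exactly what Levy's lemma in~\eqref{eq:levylemma_fprime} requires. Writing $\Gamma(\cdot)=\sum_i K_i(\cdot)K_i^\dagger$ and $\omega=U\rho_\text{ext}U^\dagger$, I expand
\begin{equation}
F'(U)=2^m\sum_i\tr\!\big(\rho_\text{ext}\,U^\dagger K_i U\,\rho_\text{ext}\,U^\dagger K_i^\dagger U\big),
\end{equation}
a smooth function in which $U$ and $U^\dagger$ each occur twice, for four unitary factors in total. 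Differentiating along a curve $U(t)$ with velocity $\dot U$ and applying the product rule produces exactly four terms $T_1,\dots,T_4$, one from each factor; this is the structural reason to expect the constant $4$.

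For each term I use cyclicity of the trace to isolate the perturbed factor and then Cauchy--Schwarz for the Hilbert--Schmidt inner product, $|\tr(\dot U^\dagger M)|\le\|\dot U\|_2\,\|M\|_2$. For example, differentiating the leftmost $U^\dagger$ gives $T_1=2^m\tr(\dot U^\dagger\,\Gamma(\omega)\,U\rho_\text{ext})$, whereas differentiating the inner factors produces terms containing the dual map $\Gamma^\ast(\omega)=\sum_i K_i^\dagger\omega K_i$. The leftover operator $M$ is then estimated by submultiplicativity of Schatten norms together with three facts: $\|\rho_\text{ext}\|_\infty=2^{-m}$ and $\|\rho_\text{ext}\|_2=2^{-m/2}$; that $\Gamma(\omega)$ is itself a density matrix, so $\|\Gamma(\omega)\|_2,\|\Gamma(\omega)\|_\infty\le1$; and that trace preservation $\sum_i K_i^\dagger K_i=I$ forces $\Gamma^\ast(\omega)\preceq\|\omega\|_\infty I=2^{-m}I$, hence $\|\Gamma^\ast(\omega)\|_\infty\le2^{-m}$. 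In each term the resulting factor of $2^{-m}$ (or smaller) cancels the prefactor $2^m$, leaving $|T_j|\le\|\dot U\|_2$. Summing, $|\tfrac{d}{dt}F'|\le 4\|\dot U\|_2$, so the Lipschitz constant is at most $4$; equivalently, one may run the same estimates as a four-step hybrid argument on $F'(U)-F'(V)$, replacing one unitary factor at a time, to obtain $|F'(U)-F'(V)|\le4\|U-V\|_2$ directly.

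The main obstacle is not the algebra but choosing the correct Schatten-norm split in each term so that the full factor $2^{-m}$ is recovered. A naive split fails: in $T_4=2^m\tr(\dot U\,\rho_\text{ext}U^\dagger\Gamma(\omega))$, bounding $\|\Gamma(\omega)\|_\infty\le1$ and pairing it with $\|\rho_\text{ext}\|_2=2^{-m/2}$ yields only $2^{m/2}\|\dot U\|_2$, which diverges with $m$. The cancellation works only if a channel output $\Gamma(\omega)$ (operator norm up to $1$) is always paired with $\|\rho_\text{ext}\|_\infty=2^{-m}$ via $\|XY\|_2\le\|X\|_\infty\|Y\|_2$, while a dual output $\Gamma^\ast(\omega)$ (already of operator norm $\le2^{-m}$) may instead be paired with $\|\rho_\text{ext}\|_2$. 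Thus the key bookkeeping is tracking, in each of the four terms, whether $\Gamma$ or $\Gamma^\ast$ appears and selecting the Hölder/submultiplicativity inequality accordingly; the sanity check $\Gamma=\mathrm{id}\Rightarrow F'\equiv1$ (zero gradient) confirms that such cancellation must occur.
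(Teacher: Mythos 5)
Your route is genuinely different from the paper's, and its core bookkeeping is sound. The paper never opens $\Gamma$ into Kraus operators and never touches the $2^m$-prefactor form: it writes $F'(U)=\tr\big(A\,U^\dagger\Gamma(U\rho_\text{ext}U^\dagger)U\big)$ with $A=\Pi_0\ket{\psi}\bra{\psi}\Pi_0$, $\norm{A}_\infty\le 1$, applies H\"older once to reduce to $\norm{U^\dagger\rho_U U-V^\dagger\rho_V V}_1$, peels off the outer unitaries (cost $2\norm{U-V}_2$, using $\norm{\rho_U}_2\le1$), passes the inner difference through $\Gamma$ by monotonicity of the trace distance under CPTP maps, and peels off the inner unitaries (another $2\norm{U-V}_2$). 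Because $\Gamma$ is only ever applied to genuine density matrices, no dual channel, no Kraus decomposition, and no $2^m$ cancellation ever appear. Your four-term product-rule/hybrid decomposition reproduces the same $2+2$ structure at the level of individual factors, and your Schatten pairing is exactly right: $\norm{\rho_\text{ext}}_\infty=2^{-m}$ against $\norm{\Gamma(\omega)}_2\le1$ for the channel terms, and $\Gamma^*(\omega)\preceq\norm{\omega}_\infty I$ from $\sum_iK_i^\dagger K_i=I$ for the dual terms; your observation that the naive split in $T_4$ diverges as $2^{m/2}$ is the correct danger point, and the sanity check $\Gamma=\mathrm{id}\Rightarrow F'\equiv1$ is apt. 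In fact your term-by-term constants give $2+2^{1-m/2}<4$, slightly sharper than the paper's.

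Two steps, however, need repair. First, your framing claim that a derivative bound $\eta$ along unit-speed curves yields Lipschitz constant $\eta$ with respect to the chordal Hilbert--Schmidt distance $\norm{U-V}_2$ is false as stated: integrating along curves controls $|F'(U)-F'(V)|$ by $\eta$ times the \emph{geodesic} distance on the unitary group, and $d_{\mathrm{geo}}(U,V)$ exceeds $\norm{U-V}_2$ by up to a factor $\pi/2$ (eigenphase $\theta$ contributes $|\theta|$ geodesically but only $2|\sin(\theta/2)|$ chordally). The derivative route alone therefore delivers constant $2\pi$, not $4$ --- harmless for the concentration bound, but not the stated lemma. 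Your fallback hybrid argument is what actually yields $4\norm{U-V}_2$, but there "the same estimates" do not literally transfer: replacing one unitary factor at a time produces, in two of the four steps, terms of the form $2^m\tr\big((U-V)\,\rho_\text{ext}U^\dagger\,\Gamma^*(U\rho_\text{ext}V^\dagger)\big)$, where the argument $U\rho_\text{ext}V^\dagger$ is neither Hermitian nor positive, so your fact $\Gamma^*(\omega)\preceq\norm{\omega}_\infty I$ for states $\omega$ does not apply verbatim. You need operator-norm contractivity of the positive unital map $\Gamma^*$ on arbitrary operators --- Russo--Dye, or the Kadison--Schwarz inequality for completely positive unital maps --- which is true but must be invoked explicitly; with it, each mixed term is bounded by $2^{-m/2}\norm{U-V}_2$ and the total is again at most $4\norm{U-V}_2$. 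The paper's trace-norm telescoping sidesteps both pitfalls by design, since its channel arguments are always states and it never leaves the chordal metric.
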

\begin{proof}
Let us denote $\rho_U \equiv \Gamma(U \rho U^\dag)$.
\begin{equation}
\begin{aligned}
    \left|F^\prime(U) - F^\prime(V)\right| &= \left| \tr \left(\Pi_0 \ket{\psi}\bra{\psi}\Pi_0 \left(U^\dag \rho_U U - V^\dag \rho_V V\right) \right)\right| \\&\leq \norm{\Pi_0}_\infty \norm{U^\dag \rho_U U - V^\dag \rho_V V}_1 & & \text{(H\"{o}lder's inequality)} \\&= \norm{U^\dag \rho_U (U-V) + (U^\dag \rho_U - V^\dag \rho_V)V}_1 \\&\leq \norm{U^\dag \rho_U (U-V)}_1 + \norm{U^\dag \rho_U - V^\dag \rho_V}_1 & & \text{(Triangle inequality)} \\&= \norm{\rho_U(U-V)}_1 + \norm{U^\dag(\rho_U-\rho_V)+(U^\dag - V^\dag)\rho_V}_1 \\&\leq \norm{\rho_U}_2 \norm{U-V}_2 + \norm{\rho_U - \rho_V}_1 + \norm{\rho_V}_2 \norm{U-V}_2 & & \text{(H\"{o}lder's inequality)} \\&\leq 2 \norm{U-V}_2 + \norm{\Gamma(U\rho U^\dag) - \Gamma(V\rho V^\dag)}_1 & & (\norm{\rho_U}_2 \leq 1) \\&\leq 2 \norm{U-V}_2 + \norm{U\rho U^\dag - V \rho V^\dag}_1 & & \text{(Monotonicity of trace distance)} \\&= 2\norm{U-V}_2 + \norm{U \rho (U^\dag - V^\dag) + (U-V) \rho V^\dag}_1 \\&\leq 4\norm{U-V}_2 & & \text{(Triangle and H\"{o}lder's inequality)}.
\end{aligned}
\end{equation}
\end{proof}
We will also now derive~\eqref{eq:errorbound_epsilon}. For that, we make use of Def.~\ref{def:unitarydesign} on approximate designs.

\begin{lemma}[Twirling error]
\label{lemma:twirl}
    Let $f(U) = \tr^k(A U^\dag \Gamma(U B U^\dag)U)$ for arbitrary positive semidefinite operators $A$ and $B$ and arbitrary quantum channel $\Gamma$. If $\mathcal{E}$ is an $\epsilon$-approximate unitary $2k$-design in relative error,
    \begin{equation}
    \normalfont
        \left| \Eset{U \sim \mathcal{E}}f(U) - \Eset{U \sim \text{Haar}}f(U)\right| \leq 2\epsilon\tr^k(A)\tr^k(B).
    \label{eq:twirlingerror}
    \end{equation}
\end{lemma}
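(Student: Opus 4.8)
The plan is to reduce the statement to a single application of the relative-error design guarantee for the $2k$-fold twirl $\Phi^{(2k)}$, evaluated on manifestly positive semidefinite operators. First I would write $g(U)\defeq\tr(A U^\dagger \Gamma(U B U^\dagger)U)$ and, by cyclicity of the trace, recast it as the overlap $g(U)=\tr[\sigma_U\,\Gamma(\rho_U)]$ with $\sigma_U\defeq U A U^\dagger$ and $\rho_U\defeq U B U^\dagger$; both are positive semidefinite because $A,B\succeq0$, and $\Gamma(\rho_U)\succeq 0$ since $\Gamma$ is a channel, so $g(U)\ge0$. Raising this to the $k$-th power and writing the product of $k$ scalars as a trace over $\mathcal{H}^{\otimes k}$ gives $f(U)=\tr[\sigma_U^{\otimes k}\,\Gamma^{\otimes k}(\rho_U^{\otimes k})]$. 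Since $\sigma_U^{\otimes k}=U^{\otimes k}A^{\otimes k}U^{\dagger\otimes k}$ and likewise for $\rho_U^{\otimes k}$, all unitaries can be collected, so $f$ is a degree-$(2k,2k)$ balanced polynomial in $U,U^\dagger$, which is exactly what a $2k$-design controls.

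Next I would put $f$ into twirl form. Using the swap operator $\mathbb{S}$ on $\mathcal{H}^{\otimes k}\otimes\mathcal{H}^{\otimes k}$ to turn the overlap into a single trace, and moving $\Gamma^{\otimes k}$ onto the swap via its adjoint—the unital completely positive map $(\Gamma^\dagger)^{\otimes k}$ with $\Gamma^\dagger(Y)=\sum_i K_i^\dagger Y K_i$—I obtain $f(U)=\tr\big[\tilde N\, U^{\otimes 2k}(A^{\otimes k}\otimes B^{\otimes k})\,U^{\dagger\otimes 2k}\big]$, where $\tilde N\defeq(\mathcal{I}\otimes(\Gamma^\dagger)^{\otimes k})(\mathbb{S})$ is a fixed operator. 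Taking the expectation identifies $\Eset{U\sim\mathcal{E}}f(U)=\tr[\tilde N\,\Phi_{\mathcal{E}}^{(2k)}(Q)]$ with $Q\defeq A^{\otimes k}\otimes B^{\otimes k}\succeq0$ and $\|Q\|_1=\tr^k(A)\tr^k(B)$, and identically for the Haar ensemble. The relative-error $\epsilon$ hypothesis supplies the additive diamond-norm bound $\|\Phi_{\mathcal{E}}^{(2k)}-\Phi_{\text{Haar}}^{(2k)}\|_\diamond\le2\epsilon$ (the remark after Def.~\ref{def:unitarydesign}), so with $\Delta\defeq\Phi_{\mathcal{E}}^{(2k)}-\Phi_{\text{Haar}}^{(2k)}$, Hölder's inequality gives $|\Eset{\mathcal{E}}f-\Eset{\text{Haar}}f|=|\tr[\tilde N\,\Delta(Q)]|\le\|\tilde N\|_\infty\,\|\Delta(Q)\|_1\le\|\tilde N\|_\infty\,2\epsilon\,\|Q\|_1$.

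The one nontrivial point—and the step I expect to be the main obstacle—is showing $\|\tilde N\|_\infty\le1$, since $\mathbb{S}$ is not positive and a generic channel does not contract the operator norm. The resolution is that the map appearing here is the \emph{adjoint} channel: because $\Gamma$ is trace preserving, $\Gamma^\dagger$ is unital, hence $(\Gamma^\dagger)^{\otimes k}$ and $\mathcal{I}\otimes(\Gamma^\dagger)^{\otimes k}$ are positive and unital. Applying a positive unital map to the Hermitian operator $\mathbb{S}$, which obeys $-I\preceq\mathbb{S}\preceq I$, preserves these order bounds, yielding $-I\preceq\tilde N\preceq I$ and therefore $\|\tilde N\|_\infty\le1$. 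Substituting into the previous display gives $|\Eset{\mathcal{E}}f-\Eset{\text{Haar}}f|\le2\epsilon\,\tr^k(A)\tr^k(B)$, which is the claim. The remaining work is pure bookkeeping: tracking the tensor-factor ordering so that $U^{\otimes k}\otimes U^{\otimes k}$ is correctly identified with the $U^{\otimes 2k}$ appearing in $\Phi^{(2k)}$.
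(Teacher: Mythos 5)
Your proof is correct, and after the shared opening move it takes a genuinely different route from the paper's. Both arguments begin the same way: the swap trick turns $f(U)$ into a linear functional of the $2k$-fold twirl applied to a product of $A$'s and $B$'s (your $A^{\otimes k}\otimes B^{\otimes k}$ and the paper's $(A\otimes B)^{\otimes k}$ differ only by a permutation of tensor factors commuting with $U^{\otimes 2k}$, as you note). The divergence is in how the design property is consumed. The paper stays in the Schr\"odinger picture: it applies H\"older with $\Vert S^{\otimes k}\Vert_\infty=1$, passes to $\Vert((I\otimes\Gamma)^{\otimes k}\circ(\Phi_{\mathcal{E}}^{(2k)}-\Phi_{\text{Haar}}^{(2k)}))(A\otimes B)^{\otimes k}\Vert_1$, rewrites this via the duality $\Vert X\Vert_1=2\sup_{0\preceq Q\preceq I}\tr(XQ)$ (valid because that operator is traceless Hermitian), and then uses the relative-error CP ordering $\Phi_{\mathcal{E}}^{(2k)}-\Phi_{\text{Haar}}^{(2k)}\preceq\epsilon\,\Phi_{\text{Haar}}^{(2k)}$ directly, together with positivity of $(I\otimes\Gamma)^{\otimes k}$ and trace preservation. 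You instead move $\Gamma$ to the Heisenberg side, bound the dressed swap $\tilde N=(\mathcal{I}\otimes(\Gamma^\dagger)^{\otimes k})(\mathbb{S})$ by $\Vert\tilde N\Vert_\infty\leq 1$ — and your argument for this, the step you rightly flag as the crux, is sound: $\Gamma^\dagger$ is completely positive and unital, and a positive unital map preserves $-I\preceq\cdot\preceq I$, which handles the non-positivity of $\mathbb{S}$ — and then consume the relative error only through its additive consequence $\Vert\Phi_{\mathcal{E}}^{(2k)}-\Phi_{\text{Haar}}^{(2k)}\Vert_\diamond\leq 2\epsilon$ (the remark after Def.~\ref{def:unitarydesign}) plus H\"older, since the induced trace norm on the fixed PSD input $Q$ is dominated by the diamond norm. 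The two factors of $2$ thus have different provenance — trace-norm duality in the paper versus the relative-to-additive conversion in yours — but amount to the same slack. What your route buys is that it makes literal the paper's own closing remark that the same bound follows from a mere additive-error assumption: your proof demonstrably uses nothing beyond the diamond-norm bound, whereas the paper's nominally invokes the CP ordering without extracting anything stronger from it; the paper's version, conversely, is self-contained in that it never needs the (separately proved) relative-to-additive implication.
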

\begin{proof}
First, we rewrite $f(U)$ as
\begin{equation}
    f(U) = \tr^k\left( (I \otimes \Gamma)  (U^{\otimes 2} (A \otimes B) U^{\dag \otimes 2}) S \right)=\tr\left( (I \otimes \Gamma)^{\otimes k}  (U^{\otimes 2} (A \otimes B) U^{\dag \otimes 2})^{\otimes k} S^{\otimes k} \right)\,,
\end{equation}
where $S$ swaps the tensor factors, i.e., $S(A \otimes B) = B \otimes A$. This gives
\begin{equation}
\begin{split}
    \left| \Eset{U \sim \mathcal{E}}f(U) - \Eset{U \sim \text{Haar}}f(U)\right| &= \left| \tr\left[ \left((I \otimes \Gamma)^{\otimes k} \circ \left(\Phi_{\mathcal{E}}^{(2k)} - \Phi_{\text{Haar}}^{(2k)}\right)\right)(A \otimes B)^{\otimes k} S^{\otimes k} \right]\right| \\&\leq \norm{\left((I \otimes \Gamma)^{\otimes k} \circ \left(\Phi_{\mathcal{E}}^{(2k)} - \Phi_{\text{Haar}}^{(2k)}\right)\right)(A \otimes B)^{\otimes k}}_1 \\&= 2 \sup_{0 \preceq Q \preceq I} \tr\left[ \left((I \otimes \Gamma)^{\otimes k} \circ \left(\Phi_{\mathcal{E}}^{(2k)} - \Phi_{\text{Haar}}^{(2k)}\right)\right)(A \otimes B)^{\otimes k} Q \right]
\end{split}
\end{equation}
using the duality of the trace norm. From~\eqref{eq:defn_approxdesign} we have
\begin{equation}
\begin{split}
    \left| \Eset{U \sim \mathcal{E}}f(U) - \Eset{U \sim \text{Haar}}f(U)\right| &\leq 2\epsilon \sup_{0 \preceq Q \preceq I} \tr\left[ \left((I \otimes \Gamma)^{\otimes k} \circ  \Phi_{\text{Haar}}^{(2k)}\right)(A \otimes B)^{\otimes k} Q \right] \\&= 2\epsilon \norm{\left((I \otimes \Gamma)^{\otimes k} \circ  \Phi_{\text{Haar}}^{(2k)}\right)(A \otimes B)^{\otimes k}}_1 \\&= 2\epsilon \tr^k(A)\tr^k(B),
\end{split}
\end{equation}
where in the final step we used the fact that $\Phi_{\text{Haar}}^{(k)}$ and $I \otimes \Gamma$ are trace preserving maps.
\end{proof}
Substituting $A = 2^m \rho$ and $B = \rho$ gives~\eqref{eq:errorbound_epsilon}. Note that the same bound can also be obtained if we assume $\mathcal{E}$ has an additive error of $\epsilon$. Hence, the stronger notion of relative-error approximation is not manifest in~\eqref{eq:twirlingerror}. This is not entirely surprising, since $f(U)$ can be thought of as querying both $U$ and $U^\dag$, while the relative error only ensures indistinguishability for (possibly sequential) queries of $U$.

\section{Chosen-plaintext attack}\label{sec:CPA}

The chosen-plaintext attack (CPA) is an important class of attacks where the adversary is given access to the encryption oracle~\cite{katz2020introduction}. 

Notably, CPA has been of real-world relevance, e.g. during world war II in the Battle of Midway~\cite{katz2020introduction}. For example, let us assume that the adversary has intercepted encrypted messages, and has some guess of what the plaintext could be (e.g. the destination of the next attack). Then, the adversary tricks the encrypter into encrypting the adversary's guess, and intercepts the newly encrypted message. 
For deterministic encryption (e.g. via Haar random states), the same plaintext yields the same ciphertext. Thus, the newly encrypted message and the previously intercepted encrypted message will be the same when the plaintext guess was correct, and else be different. This allows the adversary to verify their guess and learn the plaintext.

Formally, we consider the CPA for multiple encryptions~\cite{katz2020introduction}.
Here, the adversary prepares a list of $2t$ states $((\ket{\psi_1^0},\dots,\ket{\psi_t^0}),(\ket{\psi_1^1},\dots,\ket{\psi_t^1}))$, such that the Hilbert space dimension of $\ket{\psi_i^0}$ is same as that of $\ket{\psi_i^1}$ for $i\in \{1,2,\dots, t\}$.
The adversary is given access to a left-or-right oracle $\text{LR}_{k,b}(.,.)$, which for $b=0$ encrypts the left input message, or for $b=1$ the right input message.
In particular, we have 
$\text{LR}_{k,b}(\ket{\psi^0},\ket{\psi^1})=\text{Enc}_k(\ket{\psi^b})$.
Then, the adversary has to determine the correct $b$.  
The LR-oracle experiment
$\text{PrivK}^{\text{LR-CPA}}_{\mathcal{A},\Pi}$ for encryption scheme $\Pi=(\text{Gen}, \text{Enc},\text{Dec})$ runs as follows: 
\begin{enumerate}
\item A key $k\in\{0,1\}^{\lambda}$ is generated with security parameter $\lambda$ with $\text{Gen}(1^\lambda)$. 
\item A uniform bit $b\in\{0,1\}$ is chosen
\item The adversary $\mathcal{A}$ is given $\lambda$ and oracle access to the left-right encryption oracle $\text{LR}_{k,b}(.,.)$
\item The adversary $\mathcal{A}$ outputs bit $b'$
\item The output of the experiment is defined to be $1$ if $b'=b$, else $0$ otherwise. In the former case, we say that the adversary $\mathcal{A}$ succeeds.
\end{enumerate}

\begin{definition}
    A private-key encryption scheme $\Pi(\operatorname{Gen}, \operatorname{Enc},\operatorname{Dec})$ has indistinguishable multiple encryptions under a chosen-plaintext attack, or is CPA-secure, if for all probabilistic polynomial-time adversaries $\mathcal{A}$ there is a negligible function $\operatorname{negl}(\lambda)$ such that 
    \begin{equation}
    \Pr[\operatorname{PrivK}^{\operatorname{LR-CPA}}_{\mathcal{A},\Pi}(\lambda)=1 ] \leq \frac{1}{2}+\operatorname{negl}(\lambda)\,,
    \end{equation}
     where the probability is taken over the randomness used by $\mathcal{A}$, as well as the randomness used in the experiment.
\end{definition}

\subsection{CPA vulnerability of deterministic encryption }
First, we discuss how CPA can break deterministic quantum encryption.
First, let us assume a deterministic encryption protocol to encrypt a given state $\ket{\psi}$, e.g. $U_k\ket{\psi}$, where $U_k$ is a PRU. This renders the encrypted state QPT indistinguishable from Haar random states. Now, we perform the aforementioned LR-oracle experiment.
The adversary prepares as left states the same states $t$ times $\ket{\psi^0_i}=\ket{0}$. As right states, the adversary chooses $t$ mutually orthogonal states, e.g.  $\ket{\psi^1_i}=\ket{i}$, where $\ket{i}$ are computational basis states.
Now, the left-or-right oracle is applied to the states, which either encrypts all the left states $(b=0)$, or all the right states $(b=1)$, where $b$ is not known to the adversary. For $b=0$, the oracle returns $\{U_k\ket{0}\}_{i=1}^t$, while for $b=1$ the oracle returns $\{U_k\ket{i}\}_{i=1}^t$.

Now, the adversary must figure out what is $b$.
After the oracle, the adversary applies SWAP tests between pairs of the $t$ states, i.e. measure the fidelity between pairs of states. 
If $b=0$, then the SWAP test will always succeed. This is because the $t$ encrypted states are all the same $U_k\ket{\psi^0_i}=U_k\ket{0}$. 
If $b=1$, the SWAP test fails with $50\%$ probability as all the encrypted states are orthogonal $U_k\ket{\psi^1_i}=U_k\ket{i}$. Thus, the probability at least one SWAP test fails is $1-2^{-t}$, i.e. exponentially close to $1$. Thus, CPA attacks break encryption based on indistinguishability from Haar random states.

\subsection{Non-adaptive CPA-security of PQAS}

We now consider the PQAS where we encrypt via $\Phi_k(\rho)$ and decrypt via $\Phi_k^+(\rho)$, and the QPT adversary $\mathcal{A}$. For CPA, the adversary has non-adaptive oracle access, i.e. the adversary can use the oracle on an arbitrary set of initial states, but not choose the oracle adaptively on the measurement outcomes. 
We assume the adversary holds arbitrary state $\rho_\text{g}$ of $2t(n+q)$ qubits, and applies the left-right oracle on an arbitrary parts of the state $t=\text{poly}(n)$ times.
Note that we allow left and right input to the oracle to be entangled, and the adversary even holds the purification of the input states which do not interact with the oracle.

Now, we need to show that the case of $b=0$ is indistinguishable from the case $b=1$. To do this we compute
\begin{gather*}
    \Vert \Pr_{k\in\{0,1\}^{\lambda}}[\mathcal{A}((\text{LR}_{k,0}^{\otimes t}\otimes I)(\rho_\text{g}))=1]-\Pr_{k\in\{0,1\}^{\lambda}}[\mathcal{A}((\text{LR}_{k,1}^{\otimes t}\otimes I)(\rho_\text{g}))=1]\Vert_1\\
    =\Vert\Pr_{k\in\{0,1\}^{\lambda}}[\mathcal{A}((\text{LR}_{k,0}^{\otimes t}\otimes I)(\rho_\text{g}))=1]-
    \Pr [\mathcal{A}(\sigma_z^{\otimes t}\otimes\operatorname{tr}_{2nt}(\rho_\text{g})) = 1]\\
    +\Pr [\mathcal{A}(\sigma_z^{\otimes t}\otimes\operatorname{tr}_{2nt}(\rho_\text{g})) = 1]
    -\Pr_{k\in\{0,1\}^{\lambda}}[\mathcal{A}((\text{LR}_{k,1}^{\otimes t}\otimes I)(\rho_\text{g}))=1]\Vert_1\\
    \leq
    \Vert\Pr_{k\in\{0,1\}^{\lambda}}[\mathcal{A}((\text{LR}_{k,0}^{\otimes t}\otimes I)(\rho_\text{g}))=1]-
    \Pr [\mathcal{A}(\sigma_z^{\otimes t}\otimes\operatorname{tr}_{2nt}(\rho_\text{g})) = 1]\Vert_1\\
    +\Vert\Pr_{k\in\{0,1\}^{\lambda}}[\mathcal{A}((\text{LR}_{k,1}^{\otimes t}\otimes I)(\rho_\text{g}))=1]-\Pr [\mathcal{A}(\sigma_z^{\otimes t}\otimes\operatorname{tr}_{2nt}(\rho_\text{g})) = 1]\Vert_1
\end{gather*}
where in the last step we applied the triangle inequality.

Now, we regard the definition of the left-or-right oracle $\text{LR}_{k,0}(\rho_\text{in})=\text{tr}_{\bar{n}}((\Phi_k\otimes I_n)(\rho_\text{in}))$ and $\text{LR}_{k,1}(\rho_\text{in})=\text{tr}_n((I_n\otimes\Phi_k )(\rho_\text{in}))$ where $\text{tr}_n(.)$ traces out the first $n$ qubits, and $\text{tr}_{\bar{n}}(.)$ the last $n$ qubits. 
After applying either oracle, one gets a $n+m+\ell$ qubit state, which is indistinguishable from the maximally mixed state, which is in a product state with its remaining purification. Thus, we have
\begin{align*}
    &\left\Vert \Pr_{k\in\{0,1\}^{\lambda}}[\mathcal{A}((\text{LR}_{k,0}^{\otimes t}\otimes I)(\rho_\text{g}))=1]-\Pr_{k\in\{0,1\}^{\lambda}}[\mathcal{A}((\text{LR}_{k,1}^{\otimes t}\otimes I)(\rho_\text{g}))=1]\right\Vert_1\\
    &\leq
    \left\Vert\Pr_{k\in\{0,1\}^{\lambda}}[\mathcal{A}((\text{tr}_{\bar{n}}(\Phi_k\otimes I_n))^{\otimes t}(\rho_\text{g}))=1]-
    \Pr [\mathcal{A}(\sigma_z^{\otimes t}\otimes\operatorname{tr}_{2nt}(\rho_\text{g})) = 1]\right\Vert_1\\
    &\qquad +\left\Vert\Pr_{k\in\{0,1\}^{\lambda}}[\mathcal{A}((\text{tr}_{n}(I_{n}\otimes \Phi_k))^{\otimes t}(\rho_\text{g}))=1]-\Pr [\mathcal{A}(\sigma_z^{\otimes t}\otimes\operatorname{tr}_{2nt}(\rho_\text{g})) = 1]\right\Vert_1\\
    &=\text{negl}(\lambda),
\end{align*}
where, in the last step we applied Thm.~\ref{thm:PQAS_supp}.
Thus, the adversary cannot efficiently distinguish $b=0$ from $b=1$ case even with access to the encryption oracle, which makes the protocol CPA secure.

\section{Pseudo-resource attack}\label{sec:computeattack}
We show that the concept of pseudo-resources~\cite{bouland2022quantum,haug2023pseudorandom} is intimately linked to security in quantum encryption.
Let us assume Alice wants to share state $\ket{\psi}$ with Bob, where they want to hide any information about the capabilities of Alice's quantum computer from Eve. 

To understand this problem, we need some background on quantum resources and pseudoresources first. Quantum resources are essential to perform non-trivial quantum tasks, and can be thought as a fuel to run quantum information  processing. The resource content of a given state $\rho$ is measured by quantum resource monotones $Q(\rho)$~\cite{chitambar2019quantum}. They are characterized by free operations $F_Q$ which cannot increase the resource, i.e.\ $Q(F_Q(\rho))\leq Q(\rho)$, and a set of free states $\sigma\in S_Q$ with $Q(\sigma)=0$. Intuitively, free states and operations are `easy' in the context of the resource and are readily available. To perform non-trivial tasks, one needs `expensive' resource states $\rho\notin S_Q$ which have $Q(\rho)>0$, or non-free operations which can increase the resource. Prominent examples are entanglement~\cite{plenio2005introduction}, magic for running fault-tolerant quantum computers~\cite{bravyi2005universal,veitch2014resource} or coherence~\cite{baumgratz2014quantifying,streltsov2017colloquium}.
Often, quantum resources can be tested efficiently, which in a cryptographic setting would reveal crucial information about the state or the encryption process.

Recently, pseudoresources have been proposed as efficiently preparable ensembles which are computationally indistinguishable, yet possess substantially different amount of quantum resources~\cite{bouland2022quantum,gu2023little,haug2023pseudorandom,bansal2024pseudorandomdensitymatrices}:
\begin{definition}[Pseudoresources]\label{def:pseudoresource}
Let $Q$ be a quantum resource monotone. A pseudoresource pair with gap $f(n)$ vs. $g(n)$ (where $f(n)>g(n)$) consists of two efficiently preparable state ensembles with:
\begin{enumerate}
    \item a `high resource' ensemble of $n$-qubit quantum states $\{\rho_{k_1}\}$ such that $Q(\rho_{k_1})=
    f(n)$ with high probability over key ${k_1}$,
    \item a `low resource' ensemble of $n$-qubit quantum states $\{\sigma_{k_2}\}$ such that $Q(\sigma_{k_2})=
    g(n)$ with high probability over key ${k_2}$.
\end{enumerate}
The two ensembles are computationally indistinguishable when given $t=\mathrm{poly}(n)$ copies. %
\end{definition}
Pseudoresource ensembles can mask high resource states with $f(n)$ resource as low resource states with $g(n)$ resource.
Now, for encryption we would like to hide resources completely from the eavesdroppers, and have a large pseudoresource gap $f(n)$ vs $g(n)$ as possible.  

For pure states, the maximal gap is
$g(n)=\omega(\log n)$ vs $f(n)=\Theta(n)$  for pseudoentanglement~\cite{bouland2022quantum}, pseudomagic~\cite{gu2023little} and pseudocoherence~\cite{haug2023pseudorandom}. Thus,  quantum encryption with pure states (i.e. deterministic enryption) always reveals that the resources spent for the encryption lie between $g(n)=\omega(\log n)$ vs $f(n)=\Theta(n)$.

Can we improve this? In fact, for mixed states we can hide quantum resources much better: In particular, PRDMs can have near-maximal entanglement, coherence and magic, yet are computationally indistinguishable from the maximally mixed states which has trivial zero resources~\cite{bansal2024pseudorandomdensitymatrices}. As we will see, the same applies for PQAS with $m=\omega(\log n)$ (non-deterministic encryption), protecting information about quantum resources completely.

First, let us assume a deterministic encryption protocol via $U_k\ket{\psi}$, where $U_k$ is a PRU. This corresponds also to PQAS where one chooses $m=0$. Eve cannot learn $\ket{\psi}$ from encrypted state $U_k\ket{\psi}$ as it is indistinguishable from Haar random states, yet valuable meta-information is leaked. For example, Eve can efficiently measure the purity $\text{tr}(U_k\ket{\psi}\bra{\psi}U_k^\dagger)^2$ of the encrypted state via the SWAP tests~\cite{barenco1997stabilization}.
When the encrypter or state preparation routine is noisy, Eve can learn valuable information about the quality of Alice's quantum hardware and error correction protocols from the purity.
Further, from the pure pseudoresource gap is known that being computationally indistinguishable from Haar random states requires $\Omega(n)$ T-gates~\cite{grewal2023improved}, $\omega(\log n)$ magic~\cite{grewal2022low,gu2023little}, $\omega(\log n)$ circuit depth~\cite{hangleiter2023bell,wadhwa2024noise}, $\omega(\log n)$ entanglement~\cite{bouland2022quantum} and $\omega(\log n)$ coherence~\cite{haug2023pseudorandom}. By measuring the encrypted states, Eve can determine these lower bounds efficiently~\cite{haug2023pseudorandom}, allowing Eve to learn about the capabilities of Alice's quantum computer. Note that these attacks on quantum resources are uniquely a feature of quantum encryption, and do not appear in classical cryptography.

When using PQAS $\Phi_k(\ket{\psi})$ with $m=\omega(\log n)$, the encrypted state (which may have a lot of resources) is indistinguishable from the maximally mixed state, which can be trivially prepared. The pseudoresource gap for PQAS thus can be maximal $g(n)=0$ vs $f(n)=\Theta(n)$.  With PQAS, Eve is unable to establish any non-trivial bounds on Alice's quantum computing power, hiding all meta-information.

\section{Qubit number attack}\label{sec:qubitattack}
While being indistinguishable from Haar random states implies that one cannot learn the state, can one learn some other (meta-)information? 
Let us assume one encrypts a state $\ket{\psi}$ via $\ket{\phi_k}=U_k \ket{\psi}$, where $U_k$ is a Haar random unitary (i.e. deterministic encryption, corresponds to PQAS with $m=0$). $\ket{\phi_k}$ appears as a Haar random state to any observer, and learning the state itself is inefficient. However, can an attacker learn other information, e.g. the number of qubits $n$ of the encrypted state?
Here, we show that this is indeed possible.

Consider the following scenario:
Alice sends $t/s$ copies of $ns$-qubit state $\ket{\psi_s}$ to Bob, where Alice randomly has selected $s=1,\dots S_\text{max}$ with corresponding state $\ket{\psi_s}$. %
Eve knows $n$ and $S_\text{max}$ beforehand, and now wants to determine $s$.

Alice scrambles $\ket{\psi_s}$ , gaining the $ns$-qubit scrambled state $\ket{\xi_s}=U_k\ket{\psi_s}$. However, the information about $s$ is still accessible. In fact, Eve can efficiently determine $s$, where Eve has to intercept $O(n (S_\text{max}!))$ qubits, and use post-processing time $O(n(S_\text{max}!))$. This is efficient as long as $S_\text{max}=O(\log n/\log \log n)$, which follows from the Stirling approximation.

Eve can efficiently determine $s$ via the following protocol: 
Eve intercepts $2n (S_\text{max}!)$ qubits from Alice. 
Now, Eve holds the state $\ket{\xi_s}^{\otimes 2S_\text{max}!/s}$ and splits it into the equal bipartition $\ket{\xi_s}^{\otimes S_\text{max}!/s}\otimes\ket{\xi_s}^{\otimes S_\text{max}!/s}$. Note that each copy of the first bipartition is matched exactly with another copy on the second bipartition for any $s$, due to any $s$ being a divisor of $S_\text{max}!$.
Now, Eve performs Bell measurements between each pair of the first and second half of the qubits~\cite{bendersky2009general,garcia2013swap} which is depicted in Fig.~\ref{fig:Bell}.

As Bell measurements correspond to destructive SWAP tests, Eve can efficiently estimate the purity $Z_b=\text{tr}(\text{tr}_{\bar{b}}(\ket{\xi_s}^{\otimes 2S_\text{max}!/s})^2)$ for any $b$ in post-processing, where $\text{tr}_{\bar{b}}(\ket{\xi_s}^{\otimes 2S_\text{max}!/s})$ is the partial trace over all qubits except the first $b$ ones~\cite{garcia2013swap}. 
The protocol is as follows: First, one transforms into the Bell basis by applying CNOT gates between each pair of qubits of the first and second bipartition, and Hadamard gates on the qubits of the first bipartition. Then, measurements in the computational basis yields a bitstring of size $2nS_\text{max}!$. 
Now, we take the logical AND operation between each bit of the first and second bipartition, yielding a bitstring $\nu=\{0,1\}^{nS_\text{max}!}$.
Then, we compute the parity of the first $b$ bits of $\nu$, where we denote $P_\text{odd}(b)$ the probability of observing odd parity over the first $b$ bits.
One can show that $Z_b=1-2P_\text{odd}(b)$. In particular, when $\text{tr}_{\bar{b}}(\ket{\xi_s}^{\otimes S_\text{max}!/s})$ is pure we never observe odd parity.
Now, Eve estimates $Z_{ns'}$ for all $s'=1,\dots, S_\text{max}$.  We have $Z_{ns'}=1$ only when $s'$ is a divisor of $s$. If $s'$ is not a divisor of $s$, the partial trace over $\ket{\xi_s}^{\otimes 2S_\text{max}!/s}$ yields a highly mixed state with $Z_{ns'}=\text{negl}(n)$.

\begin{figure}[htbp]
	\centering	
\subfigimg[width=0.25\textwidth]{}{BellOperation.pdf}
	\caption{Bell measurement for state $\ket{\psi}$.
	}
	\label{fig:Bell}
\end{figure}

Eve repeats the protocol $M=O(S_\text{max})$ times, which allows Eve to determine the set $\mathcal{S}=\{s' : Z_{s'}=1\}$, i.e. the $s'$ with $Z_{s'}=1$. The failure probability of determining a wrong $s'$ is $O(\exp(-M))$ due to Hoeffding's inequality. Now,  Eve picks the largest element of $\mathcal{S}$, i.e. $s''=\text{max}(\mathcal{S})$, which with high probability corresponds to the correct $s''=s$.
With this, Eve determines the number of qubits of the encrypted state correctly as $ns$, by intercepting $O(n (S_\text{max}+1)!)$ qubits, and use post-processing time $O(n(S_\text{max}+1)!)$.
From the Stirling approximation, we have $x!=O(e^{x \log x})$, which gives implies our protocol requires  $O(n e^{S_\text{max} \log S_\text{max}})$ qubits and post-processing time, which is efficient for $S_\text{max}=\log n /\log\log n$.

In contrast, for encryption with PQAS, no such efficient protocol exists and the number of qubits of the encrypted state is secure: As it is indistinguishable from the maximally mixed state, no information about the number of qubits is revealed.

\section{Multi-state attack}\label{sec:multistate}
When encrypting, one may send the same state many times, or also different states. For deterministic encryption schemes, these two scenarios can be efficiently distinguished by eavesdroppers.

Here, we assume Alice wants to send $t/b$ copies of $b$ different $n$-qubit states $\ket{\psi_1},\dots,\ket{\psi_b}$, i.e. $(\bigotimes_{i=1}^b\ket{\psi_i)}^{\otimes t/b}$, where Alice picks $b=1,\dots,b_\text{max}$ randomly.

For a deterministic encryption protocol $U_k\ket{\psi}$ with PRU $U_k$, Eve can efficiently determine $b$ using $O(b_\text{max}^2)$ encrypted states. We demonstrate the attack for the case $b_\text{max}=2$: 
Eve intercepts two encrypted states, and uses the SWAP test to efficiently measure the overlap between the two states~\cite{barenco1997stabilization}. Now, for $b=1$, the two encrypted states $(U_k\ket{\psi})^{\otimes 2}$ have unit overlap. For $b=2$, we have $U_k\ket{\psi}\otimes U_k\ket{\phi}$ where the SWAP test measures a smaller overlap $\vert\braket{\psi}{\phi}\vert^2<1$. By intercepting $M=O(1)$ states, Eve can determine $b$ with a failure probability exponentially small in $M$.

This attack is thwarted by using PQAS with $m=\omega(\log n)$:  In fact, $b=1$ and $b=2$ are computationally indistinguishable as they appear to Eve as maximally mixed states for any $t=\text{poly}(n)$.

\section{Multi-channel attack}\label{sec:multichannel}
Often, not only what has been communicated should stay secret, but also the identity of the receiver. Here, PQAS can achieve that with considerable lower cost than deterministic protocols.

For example, let us assume that Alice has $M$ possible communication partners $\text{Bob}_s$ with $s=1,\dots,M$. 
Alice wants to send $\ket{\psi}$ to a specific $\text{Bob}_s$, but keep the fact that a state was sent hidden from Eve.
To do so, Alice sends encrypted state $\Phi_k(\ket{\psi})$ to $\text{Bob}_s$, and decoys which are indistinguishable from the encrypted state to all other Bobs.

For non-deterministic protocols this can be achieved by sending pseudorandom states as decoys to all Bobs, and only to $\text{Bob}_s$ the true encrypted states. However, this requires spending extensive resources on preparing pseuodrandom states.
For PQAS, the decoy state is simply the maximally mixed state, which can be generated with negligible cost, saving substantially on quantum computing resources.

\section{Verifiable Pseudorandom density matrix (VPRDM)}\label{sec:VPRDM}
PRDMs have been introduced in Ref.~\cite{bansal2024pseudorandomdensitymatrices} as a generalization of PRS to mixed states. In this section, we will extend them to VPRDMs (verifiable PRDMs) which are PRDMs that can be efficiently verified with key $k$.

PRDMs are efficiently preparable $n$-qubit mixed states which are indistinguishable for any QPT algorithm from the Generalized Hilbert-Schmidt ensemble (GHSE). 
The GHSE is the random ensemble of mixed states induced by the partial trace of $m$ qubits over $(n+m)$-qubit Haar random states~\cite{hall1998random,Zyczkowski_2001,hayden2006aspects,Zyczkowski_2011,sarkar2019bures}:
\begin{definition}[Generalized Hilbert-Schmidt ensemble (GHSE)]
The $(n,m)$ GHSE is an ensemble of $n$-qubit states 
    \begin{equation}\label{eq:eta}
        \eta_{n,m}=\{\operatorname{tr}_{m}(\ket{\psi}\bra{\psi})\}_{\psi\in\mu_{n+m}^{\mathcal{S}}}
    \end{equation}
    generated by tracing out $m$ qubits from $(n+m)$-qubit states drawn from the Haar measure $\mu_{n+m}^{\mathcal{S}}$ over states.
\end{definition}
The case $m=0$ corresponds to Haar random states, while $m=n$ corresponds to the Hilbert-Schmidt ensemble~\cite{Zyczkowski_2001}. 

\subsection{Definition}

PRDMs can for example be efficiently constructed by taking an $(n+m)$ qubit PRS $\ket{\psi_k}$ and tracing out $m$ qubits, i.e. $\rho_{k}=\text{tr}_m(\ket{\psi_k}\bra{\psi_k})$. 
However, in the original construction of PRDMs, the notion of verification is missing: Given two PRDMs $\rho_k$, $\rho_{k'}$ (with $k\neq k'$) constructed as above and key $k$, there may be no efficient algorithm that can tell apart $\rho_k$, $\rho_{k'}$. However, being able to verify that the PRDM has been correctly prepared is important for many applications.
For example, PRS (i.e. pure PRDMs $m=0$) can be easily verified using the projector $\ket{\psi_k}\bra{\psi_k}$ which is needed for applications such as quantum money or bit commitment. For such applications, it is beneficial to have an efficient verification algorithm.

Here, we propose $(n,m)$ VPRDMs $\rho_{k,m}$ which are PRDMs that can also be efficiently verified given access to key $k$. 
They can be efficiently constructed in analogy to PQASs, but have weaker requirements (i.e. only need PRUs and no design properties).
We formally define VPRDMs as follows:
\begin{definition}[VPRDM]\label{def:VPRDM_sup}
  Let $\lambda=\operatorname{poly}(n)$ be the security parameter with keyspace $\mathcal{K}=\{0,1\}^{\lambda}$. A keyed family of $n$-qubit density matrices $\{\rho_{k,m}\}_{k \in \mathcal{K}}$ is defined as a VPRDM with mixedness parameter $m$ if:
    \begin{enumerate}
        \item {Efficiently preparable}: There exists an efficient quantum algorithm $\mathcal{G}$ such that $\mathcal{G}(1^{\lambda}, k,m) = \rho_{k,m}$.
        \item {Computational Indistinguishability}: $t=\mathrm{poly}(n)$ copies of $\rho_{k,m}$ are computationally indistinguishable from the GHSE $\eta_{n,m}$. In particular, for any QPT algorithm $\mathcal{A}$ we have
        \begin{equation}
            \Big{|}\Pr_{k \leftarrow \mathcal{K}}[\mathcal{A}(\rho_{k,m}^{\otimes t}) = 1] - \Pr_{\rho \leftarrow \eta_{n,m}}[\mathcal{A}(\rho^{\otimes t}) = 1]\Big{|} = \operatorname{negl}(\lambda).
        \end{equation}
    \item {Efficient verification}: There is a QPT algorithm $\mathcal{V}(\rho, k,m)$ to verify that $\rho_{k,m}$ is indeed the VPRDM generated by key $k$. In particular, we have the completeness condition (i.e. correct states are accepted)
    \begin{equation}
    \mathcal{V}(\rho_{k,m}, k,m)=1
     \end{equation}
     and soundness condition (i.e. wrong states are rejected with high probability)
     \begin{equation}
            \Pr_{k' \leftarrow \mathcal{K}/\{k\}}[\mathcal{V}(\rho_{k,m}, k',m) = 1] = \operatorname{negl}(\lambda)\,.
    \end{equation}
    \end{enumerate}
\end{definition}
Note that the third condition of Def.~\ref{def:VPRDM_sup} is not present for PRDMs, while the first and second definitions are identical.

Depending on $m$, VPRDMs are indistinguishable from different notions of randomness:
For $m=0$, VPRDMs are computationally indistinguishable from Haar random states $\{\ket{\psi}\bra{\psi}\}_{\psi\in \mu_n}$ and thus equivalent to PRS. 
For $m=\omega(\log n)$, VPRDMs are indistinguishable from the maximally mixed state for any efficient observer:

\begin{corollary}[Computational indistinguishability of VPRDMs from maximally mixed state]\label{cor:PRDM_indisting}
VPRDMs $\rho_k\equiv \rho_{k,\omega(\log n)}$ with $t=\mathrm{poly}(n)$ copies and $m=\omega(\log n)$ are indistinguishable from the maximally mixed state $\sigma_n=I_n/2^{-n}$ for any efficient quantum algorithm $\mathcal{A}$,
  \begin{equation}
        \Big{|}\Pr_{k \leftarrow \mathcal{K}}[\mathcal{A}(\rho_k^{\otimes t}) = 1] - \Pr [\mathcal{A}((\sigma_n)^{\otimes t}) = 1]\Big{|} = \operatorname{negl}(\lambda).
    \end{equation}
\end{corollary}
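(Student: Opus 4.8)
The plan is to read Corollary~\ref{cor:PRDM_indisting} as the special case of Thm.~\ref{thm:PQAS_supp} in which the message register is absent. Indeed, $\rho_{k,m}=U_k(\ket{0_{n-m}}\bra{0_{n-m}}\otimes\sigma_m)U_k^\dagger$ is exactly the PQAS construction applied to an empty input $\rho$, with tag length $\ell=n-m$, mixing parameter $m$, and total width $z=0+(n-m)+m=n$. Since the indistinguishability half of Thm.~\ref{thm:PQAS_supp} uses only the PRU property of the scrambler — the $2$- and $4$-design structure enters solely in the authentication analysis — the VPRDM scrambler (a bare PRU) suffices. I would therefore run the same two stages: a statistical stage for a Haar-random scrambler, then a computational stage that exchanges Haar for the PRU, closed by the triangle inequality.

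For the statistical stage I would invoke the bound \eqref{eq:closeness} from the proof of Thm.~\ref{thm:PQAS_supp}, specialized to trivial input ($q=0$, $\ket{\chi}$ empty) with the above parameter identification. It gives, for the Haar-averaged $t$-copy state $\rho^{(t)}=\int_{U\in\mu}\mathrm{d}U\,[U(\ket{0_{n-m}}\bra{0_{n-m}}\otimes\sigma_m)U^\dagger]^{\otimes t}$, the estimate $\Vert\rho^{(t)}-\sigma_n^{\otimes t}\Vert_1=O(t^2/2^m)$, which is $\operatorname{negl}(n)$ for $t=\operatorname{poly}(n)$ and $m=\omega(\log n)$. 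It is worth stressing that the result is genuinely computational rather than statistical: for sublinear $m$ a single fixed-key VPRDM is a rank-$2^m$ state, so its trace distance from $\sigma_n$ is $2(1-2^{m-n})\approx 2$, and projecting onto its ($U_k$-rotated) support distinguishes it perfectly. What saves us is that a key-ignorant distinguisher averages over $k$, and since its acceptance probability is linear in the input density matrix, the state it effectively receives in the first term of the corollary is the key-averaged object $\mathbb{E}_{k}[\rho_{k,m}^{\otimes t}]$; the bound \eqref{eq:closeness} controls precisely the Haar-scrambler analogue of this averaged object.

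For the computational stage I would use pseudorandomness of $U_k$ via a standard oracle reduction. Suppose a QPT $\mathcal{A}$ distinguished $\rho_{k,m}^{\otimes t}$ from $\sigma_n^{\otimes t}$ with non-negligible advantage; by the statistical stage it would then distinguish $\mathbb{E}_k[\rho_{k,m}^{\otimes t}]$ from the Haar-averaged $\rho^{(t)}$. I would convert this into an oracle distinguisher $\mathcal{B}$: given oracle access to the scrambler, $\mathcal{B}$ prepares $t$ copies of the efficiently preparable seed $\ket{0_{n-m}}\bra{0_{n-m}}\otimes\sigma_m$ (the mixed factor from $m$ Bell pairs with half the qubits discarded), queries the oracle $t=\operatorname{poly}(n)$ times to scramble each copy, and runs $\mathcal{A}$ on the outputs. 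When the oracle is $U_k$ the input to $\mathcal{A}$ is $\rho_{k,m}^{\otimes t}$; when it is Haar it is distributed as $\rho^{(t)}$. Thus $\mathcal{B}$ inherits $\mathcal{A}$'s advantage, contradicting the PRU definition, and combining the two stages gives $|\Pr_k[\mathcal{A}(\rho_k^{\otimes t})=1]-\Pr[\mathcal{A}(\sigma_n^{\otimes t})=1]|=\operatorname{negl}(\lambda)$.

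The main obstacle is this oracle reduction: I must ensure that holding $t$ copies of the scrambled seed is no more powerful than making $t$ queries to the scrambler, which requires the seed (including its maximally mixed factor) to be QPT-preparable so that $\mathcal{B}$ is efficient, and that $\mathcal{A}$'s resulting query count stays polynomial. Both hold since the seed is a product of a computational-basis state and a constant-depth-preparable mixed state and $t=\operatorname{poly}(n)$. A secondary check is that \eqref{eq:closeness} specializes correctly when the message register is empty, which it does because the surviving contributions come entirely from the $\sigma_m$ factor and reproduce the $O(t^2/2^m)$ scaling.
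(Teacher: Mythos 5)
Your proposal is correct and follows essentially the same route as the paper: the paper likewise proves Corollary~\ref{cor:PRDM_indisting} by specializing Thm.~\ref{thm:PQAS_supp} (Haar-scrambler statistical closeness $O(t^2/2^m)$ from \eqref{eq:closeness}, then PRU substitution via the triangle inequality), noting explicitly that only the PRU property is needed since the $2$- and $4$-design structure enters only in authentication. Your spelled-out oracle reduction and the observation that the key-averaged state is the operationally relevant object are just explicit versions of steps the paper leaves implicit.
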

One can prove Corollary~\ref{cor:PRDM_indisting} directly from the fact that PQAS are indistinguishable from the maximally mixed state for $m=\omega(\log n)$ as shown in Thm.~\ref{thm:PQAS_supp}. Note that the original definition of PRDMs also fulfills Corollary~\ref{cor:PRDM_indisting}.

\subsection{Construction}

Finally, we give an explicit construction of a VPRDM.
\begin{lemma}[VPRDM construction]\label{lem:VPRDMconstr}
The $n$-qubit state ensemble
\begin{equation}\label{eq:VPRDM_construction}
\{\rho_{k,m}\}=\{U_k (\ket{0}\bra{0})^{\otimes n-m}\otimes \sigma_m U_k^\dagger\}_{k\in\mathcal{K}}\,,
\end{equation}
is a VPRDM, where $\sigma_m=I_m/2^m$ is the maximally mixed state, $\{U_k\}_{k\in\mathcal{K}}$ is PRU with keyspace $\mathcal{K}=\{0,1\}^{\mathrm{poly}(n)}$, and $m<n-\mathrm{polylog}(n)$. 
\end{lemma}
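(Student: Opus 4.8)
The plan is to verify the three defining properties of a VPRDM (Def.~\ref{def:VPRDM_sup}) for the ensemble $\{\rho_{k,m}\}$ in turn, reserving the bulk of the work for computational indistinguishability. Efficient preparability is immediate: the fixed state $(\ket{0}\bra{0})^{\otimes n-m}\otimes\sigma_m$ is producible in constant depth (prepare $m$ Bell pairs and discard one half to realize $\sigma_m$, as in Lemma~\ref{lem:depth_sup}), after which one applies the PRU $U_k$, which is efficient by definition. The verifier $\mathcal{V}(\rho,k,m)$ I would use simply applies $U_k^\dagger$ and measures the first $n-m$ qubits in the computational basis, accepting iff they read $0^{n-m}$; equivalently it projects with $\Pi_0=(\ket{0}\bra{0})^{\otimes n-m}\otimes I_m$.

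For the verification property, completeness is exact: $U_k^\dagger\rho_{k,m}U_k=(\ket{0}\bra{0})^{\otimes n-m}\otimes\sigma_m$, so the projection onto $\ket{0_{n-m}}$ succeeds with probability one and $\mathcal{V}(\rho_{k,m},k,m)=1$. For soundness I would bound the wrong-key acceptance probability $p=\operatorname{tr}(\Pi_0\,U_{k'}^\dagger\rho_{k,m}U_{k'})$ averaged over $k'$. The reduction is standard: a QPT algorithm that internally prepares $\rho_{k,m}$ (treating $U_k$ as a fixed efficient reference), applies $V^\dagger$ for a single query to an oracle $V$, and measures $\Pi_0$, outputs $1$ with probability $\mathbb{E}_{k'}[p]$ when $V=U_{k'}$ and with probability $\operatorname{tr}(\Pi_0\,\mathbb{E}_U[U^\dagger\rho_{k,m}U])=\operatorname{tr}(\Pi_0)/2^n=2^{m-n}$ when $V$ is Haar. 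The PRU property of $\{U_{k'}\}$ forces these to agree up to $\operatorname{negl}(\lambda)$, and since $m<n-\operatorname{polylog}(n)$ gives $2^{m-n}=\operatorname{negl}(n)$, soundness follows.

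The heart of the argument is computational indistinguishability from the GHSE $\eta_{n,m}$, which I would prove in two moves. First, a hybrid step replaces the PRU $U_k$ by a Haar-random $U$: since $t=\operatorname{poly}(n)$ copies of $\rho_{k,m}$ can be assembled with $t$ oracle calls to $U_k$ on the fixed input $(\ket{0}\bra{0})^{\otimes n-m}\otimes\sigma_m$, the PRU definition makes $\{\rho_{k,m}^{\otimes t}\}$ and $\{\rho_U^{\otimes t}\}$, with $\rho_U=U\,(\ket{0}\bra{0})^{\otimes n-m}\otimes\sigma_m\,U^\dagger$, QPT-indistinguishable up to $\operatorname{negl}(\lambda)$. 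Second, a purely statistical step compares $\{\rho_U\}$ with $\eta_{n,m}$. The key observation is that both ensembles have the form $V\Lambda V^\dagger$ with $V$ a Haar-random eigenbasis: for $\rho_U$ the spectral block is the fixed $\Lambda_0=\operatorname{diag}(2^{-m},\dots,2^{-m},0,\dots)$, while the Schmidt decomposition of the Haar state underlying $\eta_{n,m}$ makes its eigenbasis Haar-random and \emph{independent} of the random Schmidt spectrum $\Lambda$. Coupling the two ensembles through the same $V$, using unitary invariance of the trace norm together with $\lVert\rho^{\otimes t}-\sigma^{\otimes t}\rVert_1\le t\lVert\rho-\sigma\rVert_1$, the $t$-copy distinguishing advantage is at most $\tfrac{t}{2}\,\mathbb{E}_\Lambda\lVert\Lambda_0-\Lambda\rVert_1=\tfrac{t}{2}\,\mathbb{E}\lVert\rho_B-I/2^m\rVert_1$, where $\rho_B$ is the state of the traced-out $m$-qubit block.

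The main obstacle is therefore the concentration estimate for this last expectation. Since the kept system has dimension $2^n$ and the discarded one $2^m$ with $2^n\gg 2^m$, the reduced state $\rho_B$ is nearly maximally mixed; Page-type bounds give $\mathbb{E}\lVert\rho_B-I/2^m\rVert_1\le 2^{-(n-m)/2}$ via $\lVert\cdot\rVert_1\le 2^{m/2}\lVert\cdot\rVert_2$, Jensen, and $\mathbb{E}\lVert\rho_B-I/2^m\rVert_2^2=O(2^{-n})$. Hence the advantage is $O\!\big(t\,2^{-(n-m)/2}\big)$, which is $\operatorname{negl}(n)$ precisely because $m<n-\operatorname{polylog}(n)$ and $t=\operatorname{poly}(n)$; combining with the hybrid step via the triangle inequality closes the indistinguishability claim. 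I expect the only delicate points to be making the GHSE eigenbasis/spectrum independence explicit and tracking the Page-bound constants, both of which are routine.
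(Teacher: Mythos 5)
Your overall architecture matches the paper's: efficient preparation is immediate, computational indistinguishability is proved by a hybrid step (PRU $\to$ Haar) followed by a statistical comparison of $\zeta_{n,m}=\{U(\ket{0}\bra{0})^{\otimes n-m}\otimes\sigma_m U^\dagger\}_{U\in\mu}$ with the GHSE, and verification uses the same explicit verifier $\mathcal{V}(\rho,k,m)=\operatorname{tr}\big((\ket{0}\bra{0})^{\otimes n-m}\operatorname{tr}_m(U_k^\dagger\rho U_k)\big)$. Within that skeleton you diverge from the paper in two places, one an interesting alternative and one a genuine gap.

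The alternative is your statistical step. The paper (Lemma~\ref{lem:GHSEIndisting}) computes both $t$-copy moment operators explicitly via Weingarten calculus and bounds the difference by $O(t^2/2^n)$. You instead exploit the unitary invariance of both ensembles: couple them through a common Haar eigenframe, use the telescoping bound $\Vert\rho^{\otimes t}-\sigma^{\otimes t}\Vert_1\leq t\Vert\rho-\sigma\Vert_1$, and reduce everything to the Page-type estimate $\E\Vert\rho_B-I/2^m\Vert_1\leq 2^{m/2}\sqrt{\E\Vert\rho_B-I/2^m\Vert_2^2}=O(2^{-(n-m)/2})$. This is correct (the eigenframe/spectrum independence you flag is the standard disintegration of a unitarily invariant matrix distribution, and the flat spectrum of $\zeta_{n,m}$ removes any eigenvalue-ordering issue), and it is more elementary than the Weingarten computation. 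The trade-off: your bound $O(t\,2^{-(n-m)/2})$ degrades as $m\to n$ and is negligible only because of the hypothesis $m<n-\mathrm{polylog}(n)$, whereas the paper's $O(t^2/2^n)$ is uniform in $m$ (in the paper that rank constraint is needed only for soundness, not for indistinguishability; e.g.\ at $m=n$, where the GHSE is the Hilbert--Schmidt ensemble, your bound is vacuous while the paper's still holds). Within the lemma's stated hypotheses, both suffice.

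The gap is in your soundness reduction. You have the reduction query $V^\dagger$, but the paper's PRU definition grants the adversary oracle access to $U_k$ only --- forward queries. Pseudorandomness against forward queries does not in general imply pseudorandomness when the adversary can also apply the inverse; that is the strictly stronger ``strong PRU'' notion, which is not what is assumed here. So as written, the step ``the PRU property forces these to agree up to $\operatorname{negl}(\lambda)$'' is not licensed. Nor does the obvious forward-only repair work: preparing $V(\ket{0}\bra{0})^{\otimes n-m}\otimes\sigma_m V^\dagger$ with one forward query and SWAP-testing it against $\rho_{k,m}$ estimates $2^{-m}\cdot\operatorname{tr}(\Pi_0 V^\dagger\rho_{k,m}V)$, so the signal is attenuated by $2^{-m}$ and the contradiction evaporates whenever $m=\omega(\log n)$. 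The fix is the paper's route: derive soundness from the already-established property 2 rather than from PRU security directly. A distinguisher between $\{\rho_{k,m}\}$ and the GHSE samples its \emph{own} key $k'$ --- and since it knows $k'$ it can implement $U_{k'}^\dagger$ as an explicit efficient circuit, with no oracle inversion needed --- then runs $\mathcal{V}(\cdot,k',m)$. On a GHSE input the acceptance probability is $\operatorname{tr}(\Pi_0)/2^n=2^{m-n}=\operatorname{negl}(n)$ by the one-design property, so any non-negligible wrong-key acceptance probability would contradict computational indistinguishability. Everything else in your writeup (completeness, preparation, the hybrid step using only forward queries to assemble $\rho_{k,m}^{\otimes t}$) is sound.
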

 \begin{proof}
$\rho_{k,m}$ can be efficiently prepared in $\text{polylog}(n)$ depth with PRUs~\cite{schuster2024randomunitariesextremelylow,ma2024construct}. 
It is also QPT indistinguishable from the GHSE: 
First, let us replace $U_k$ with Haar random unitaries from Haar measure $\mu_n$, where one gets the random matrix ensemble
\begin{equation}
\zeta_{n,m}=\{U (\ket{0}\bra{0})^{\otimes n-m}\otimes \sigma_m U^\dagger\}_{U\in \mu_n}\,.
\end{equation}
By definition this ensemble is QPT indistinguishable from~\eqref{eq:VPRDM_construction}.
As we will show in Lemma~\ref{lem:GHSEIndisting}, $\zeta_{n,m}$ is also statistically indistinguishable from the GHSE. 
Then, from triangle inequality follows that~\eqref{eq:VPRDM_construction} is QPT indistinguishable from the GHSE.

Finally, the efficient verification algorithm can be explicitly constructed via 
\begin{equation}
\mathcal{V}(\rho, k,m)=\text{tr}((\ket{0}\bra{0})^{\otimes n-m}\operatorname{tr}_{m}(U_k^\dagger \rho U_k))\,,
\end{equation}
where $\operatorname{tr}_{m}(\cdot)$ traces out the last $m$ qubits. One can see that for $\mathcal{V}(\rho_{k,m}, k,m)=1$, and thus
$\mathcal{V}(\rho, k,m)$ fulfills the completeness condition by construction. The soundness condition follows directly from computationally indistinguishability, as else one could use the verification algorithm to distinguish VPRDMs from random mixed states. 
Thus, whenever we have $k\neq k'$, then with $1-\text{negl}(\lambda)$ probability we have $\mathcal{V}(\rho_{k,m}, k',m)=0$. Note here we need $m<n-\text{polylog}(n)$, to ensure that the projection register  $(\ket{0}\bra{0})^{\otimes n-m}$ spans a superpolynomial space such that we get $\mathcal{V}(\rho_{k,m}, k',m)=\text{negl}(n)$ on average for $k\neq k'$.

 \end{proof}

Now, we conclude by showing that the random ensemble of rank $2^m$ density matrices with uniform eigenvalues 
$\zeta_{n,m}$ 
\begin{equation}
\zeta_{n,m}=\{U (\ket{0}\bra{0})^{\otimes n-m}\otimes \sigma_m U^\dagger\}_{U\in \mu_n}\,,
\end{equation}
where $\mu$ is the Haar measure over $n$ qubits and $\sigma_m=I_m/2^m$ is the $m$-qubit maximally mixed state, is indeed statistically indistinguishable from the GHSE
\begin{equation}\label{eq:GHSEindisting}
\left\Vert \Eset{\rho \leftarrow \eta_{n,m}}[\rho^{\otimes t}] - \Eset{\rho' \leftarrow \zeta_{n,m}}[\rho'^{\otimes t}] \right\Vert_1 = O\left(\frac{t^2}{2^n}\right)=\text{negl}(n)
\end{equation}
where the last bound holds for $t=\text{poly}(n)$. Thus, for practical purposes both GHSE and $\eta_{n,m}$ can be used interchangeably. We show~\eqref{eq:GHSEindisting} in the following:
\begin{lemma}\label{lem:GHSEIndisting}
    The $t$-copy trace distance between $(n, m)$ GHSE ensemble $\eta_{n, m}$ and the ensemble $\zeta_{n,m} = \{U(|0\rangle \langle 0|)^{\otimes n-m} \otimes \sigma_m U\}_{U \leftarrow \mu}$ is given by
    \begin{equation}
        \left\Vert \Eset{\rho \leftarrow \eta_{n,m}}[\rho^{\otimes t}] - \Eset{\rho' \leftarrow \zeta_{n,m}}[\rho'^{\otimes t}] \right\Vert_1 = O\left(\frac{t^2}{2^n}\right)
    \end{equation}
\end{lemma}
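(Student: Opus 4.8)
The plan is to compute both $t$-copy averages in closed form as linear combinations of permutation operators and then bound the trace norm of their difference by the same Weingarten bookkeeping used in the proof of Thm~\ref{thm:PQAS_supp}, now with the GHSE rather than the maximally mixed state as the reference. Throughout write $d_A=2^n$, $d_B=2^m$ and $D=d_Ad_B=2^{n+m}$, and let $P_\pi^A$ denote the operator permuting the $t$ copies of the $n$-qubit system according to $\pi\in S_t$, so that $\tr(P_\pi^A)=\|P_\pi^A\|_1=d_A^{\#\mathrm{cycles}(\pi)}$. First I would evaluate the GHSE side exactly. By \eqref{eq:eta} a sample $\rho\leftarrow\eta_{n,m}$ is $\tr_B((\ket\psi\bra\psi))$ for $\ket\psi$ Haar on $D$ dimensions, so $\E_{\rho\leftarrow\eta_{n,m}}[\rho^{\otimes t}]=\tr_{B^t}\big(\E_\psi[(\ket\psi\bra\psi)^{\otimes t}]\big)$. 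Using $\E_\psi[(\ket\psi\bra\psi)^{\otimes t}]=\Pi_{\mathrm{sym}}/\binom{D+t-1}{t}$ with $\Pi_{\mathrm{sym}}=\tfrac1{t!}\sum_{\pi\in S_t}P_\pi$, the factorisation $P_\pi=P_\pi^A\otimes P_\pi^B$, the identity $\tr_{B^t}(P_\pi^B)=d_B^{\#\mathrm{cycles}(\pi)}$, and the normalisation $\sum_{\pi}D^{\#\mathrm{cycles}(\pi)}=\prod_{j=0}^{t-1}(D+j)$, this collapses to
\[
\E_{\rho\leftarrow\eta_{n,m}}\big[\rho^{\otimes t}\big]=\frac{1}{\prod_{j=0}^{t-1}(D+j)}\sum_{\pi\in S_t}d_B^{\#\mathrm{cycles}(\pi)}\,P_\pi^A .
\]

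Next I would treat $\zeta_{n,m}$: writing $M=(\ket0\bra0)^{\otimes(n-m)}\otimes\sigma_m$ for the flat rank-$d_B$ state, the Haar twirl over $U(d_A)$ gives, by Weingarten calculus, $\E_{\rho'\leftarrow\zeta_{n,m}}[\rho'^{\otimes t}]=\sum_{\pi,\eta\in S_t}W_{\mathrm g}(\pi^{-1}\eta,d_A)\,\tr(M^{\otimes t}P_\pi)\,P_\eta^A$, where $\tr(M^{\otimes t}P_\pi)=\prod_{\text{cycles }c}\tr(M^{|c|})=d_B^{\#\mathrm{cycles}(\pi)-t}$ since $\tr(M^{j})=\tr(\sigma_m^{j})=d_B^{1-j}$. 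Both averages are thus $U(d_A)$-invariant elements of $\mathrm{span}\{P_\eta^A\}$, so I would write each as $\sum_\eta c_\eta P_\eta^A$ and bound $\big\|\E_{\eta_{n,m}}-\E_{\zeta_{n,m}}\big\|_1\le\sum_{\eta\in S_t}\big|c_\eta^{\mathrm{GHSE}}-c_\eta^{\zeta}\big|\,d_A^{\#\mathrm{cycles}(\eta)}$ term by term, exactly as in the chain of inequalities leading to \eqref{eq:closeness}.

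The physical reason the gap is $O(t^2/2^n)$—and the cleanest way to organise the estimate—is that the two ensembles differ only in their eigenvalue spectrum (a deterministic flat spectrum $1/d_B$ versus the induced-measure spectrum), while both share a Haar-random eigenbasis. Hence the coefficients agree at leading order $D^{-t}d_B^{\#\mathrm{cycles}(\eta)}$, and the first genuine discrepancy is the purity mismatch $\E_{\eta_{n,m}}[\tr(\rho^{2})]-d_B^{-1}=\tfrac{d_A+d_B}{d_Ad_B+1}-\tfrac1{d_B}=\tfrac{d_B^2-1}{d_B(d_Ad_B+1)}=O(2^{-n})$, which is attached to the $\binom t2$ transposition terms $\pi$ (one $2$-cycle). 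Collecting these using the Weingarten sum bound \eqref{eq:sum_weingarten_coeff} and the cycle-counting identity \eqref{eq:sum_cycles} is expected to produce the claimed $O(t^2/2^n)$.

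The main obstacle is precisely the bookkeeping in this last step: one must verify that resumming the Weingarten series does not amplify the per-permutation $O(2^{-n})$ discrepancies beyond $O(t^2/2^n)$. In particular the terms with $\eta\ne e$, whose norms $d_A^{\#\mathrm{cycles}(\eta)}$ are large, must be matched by correspondingly small coefficient differences, so that after cancellation only the transposition contribution survives at order $t^2/2^n$. This is the same delicate cancellation handled on the way to \eqref{eq:closeness} in the proof of Thm~\ref{thm:PQAS_supp}; replacing the maximally mixed reference there by the GHSE here is exactly what upgrades the $2^{-m}$ suppression to the stronger $2^{-n}$ suppression, since the GHSE already absorbs all the $d_B$-dependent permutation structure that was the dominant error against $\sigma^{\otimes t}$.
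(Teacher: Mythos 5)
Your overall route is the same as the paper's: you expand both sides exactly in the permutation-operator basis (the GHSE moment via the symmetric-subspace formula, giving $\frac{(d_Ad_B-1)!}{(d_Ad_B+t-1)!}\sum_{\pi\in S_t} d_B^{\#\text{cycles}(\pi)}\pi_A$, and the $\zeta_{n,m}$ moment via Weingarten calculus with $\tr(M^{\otimes t}P_\pi)=d_B^{\#\text{cycles}(\pi)-t}$ --- both intermediate formulas are correct and match the paper's), and you then bound the trace distance term by term with the triangle inequality together with \eqref{eq:sum_weingarten_coeff} and \eqref{eq:sum_cycles}. Your purity cross-check $\frac{d_A+d_B}{d_Ad_B+1}-\frac{1}{d_B}=\frac{d_B^2-1}{d_B(d_Ad_B+1)}=O(2^{-n})$ is also a correct sanity check for why the suppression is $2^{-n}$ rather than $2^{-m}$.

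There is, however, one genuine error in the step where you pass to the norm bound: you assert $\tr(P_\pi^A)=\norm{P_\pi^A}_1=d_A^{\#\text{cycles}(\pi)}$, but only the trace equals $d_A^{\#\text{cycles}(\pi)}$. A permutation operator is unitary on the $d_A^t$-dimensional $t$-copy space, so all its singular values are $1$ and $\norm{P_\pi^A}_1=d_A^t$ for every $\pi$; e.g.\ for $t=2$ and $\pi$ the transposition, $\norm{P_\pi^A}_1=d_A^2$ while $d_A^{\#\text{cycles}(\pi)}=d_A$. Consequently your master inequality $\norm{\cdot}_1\le\sum_\eta\abs{c_\eta^{\mathrm{GHSE}}-c_\eta^{\zeta}}\,d_A^{\#\text{cycles}(\eta)}$ is not valid as stated. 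Fortunately the fix is immediate and is exactly what the paper does (it uses $\norm{\pi_A}_1=2^{nt}$): replace $d_A^{\#\text{cycles}(\eta)}$ by $d_A^t$ uniformly. The bound still closes because the Weingarten coefficients supply the compensating $d_A^{-t}$: the diagonal terms $\sigma_A=\pi_A$ cancel against the GHSE coefficients at leading order via $W_g(I,d_A)=d_A^{-t}(1+O(t^2/d_A))$ and the factorial expansion \eqref{eq:fact_approx}, leaving $O(t^2/2^n)$ after summing $\sum_\pi 2^{m\#\text{cycles}(\pi)}$ with \eqref{eq:sum_cycles}, while the off-diagonal terms are bounded bluntly using $\sum_{\pi'\neq I}\abs{W_g(\pi',d_A)}=O(t^2 d_A^{-t-1})$ from \eqref{eq:sum_weingarten_coeff}. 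In particular, the ``delicate cancellation'' across different $\eta\neq e$ that you flag as the main obstacle is not actually needed: once the correct norm $d_A^t$ is used, a plain triangle inequality over all $(\pi,\eta)$ pairs suffices, and the only cancellation exploited is the leading-order diagonal one.
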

\begin{proof} 
    We note that
    \begin{equation}
        \Eset{\rho \leftarrow \eta_{n,m}}[\rho^{\otimes t}] = \mathbb{E}_{\psi \in \text{Haar}(d_A d_B)} \left[ (\text{tr}_B(\ket{\psi}\bra{\psi}))^{\otimes t} \right] = \frac{(d_A d_B - 1)!}{(d_A d_B + t - 1)!} \sum_{\pi \in S_t} d_B^{\#\text{cycles}(\pi)} {\pi}_A
    \end{equation}
    where $d_A = 2^n, d_B = 2^m$ and ${\pi}_A$ (associated with the element ${\pi}$ of the symmetric group $S_t$ and having the same number of cycles) acts on $t$-copies of $n$ qubits. And, the average over $\zeta_{n, m}$ can be evaluated using a Haar integral~\cite{mele2024introduction} as
    \begin{equation}
        \Eset{\rho' \leftarrow \zeta_{n,m}}[\rho'^{\otimes t}] = \Eset{U \leftarrow \mu}[U^{\otimes t} O U^{\dagger, \otimes t}] = \sum_{\sigma_A, \pi_A} W_g(\pi_A^{-1}\sigma_A, d_A) \tr[{\sigma_A}^{\dagger} O] \pi_A
    \end{equation}
    with $O = (|0\rangle \langle0|^{\otimes n-m} \otimes \sigma_m)^{\otimes t}$. Note that, $\tr(\sigma_A^{\dagger} O) = 2^{-mt + m\# \text{cycles}(\sigma_A)}$ and $W_g(I, d) = O(1/d^t)$. Further,
    \begin{equation}\label{eq:fact_approx}
        \frac{(2^{n+m} - 1)!}{(2^{n+m} + t - 1)!} = \frac{1}{2^{t(n+m)}}\left(1 - \frac{t(t-1)}{2^{n+m+1}} - O\left(\frac{t^4}{2^{2(n+m)}}\right)\right)
    \end{equation}
    Now, we can write
    \begin{equation}
        \begin{split}
            &\left\Vert \Eset{\rho \leftarrow \eta_{n,m}}[\rho^{\otimes t}] - \Eset{\rho' \leftarrow \zeta_{n,m}}[\rho'^{\otimes t}] \right\Vert_1\\ &= \left\Vert \sum_{\sigma_A, \pi_A} W_g(\pi_A^{-1}\sigma_A, d_A) 2^{-mt + m\# \text{cycles}(\sigma_A)} \pi_A - \frac{(2^{n+m} - 1)!}{(2^{n+m} + t - 1)!} \sum_{\pi \in S_t} 2^{m\#\text{cycles}(\pi)} {\pi}_A\right\Vert_1 \\
             &= \left\Vert \sum_{\pi_A} W_g(I, d_A) 2^{-mt + m\# \text{cycles}(\pi_A)} \pi_A -  \frac{(2^{n+m} - 1)!}{(2^{n+m} + t - 1)!} \sum_{\pi \in S_t} 2^{m\#\text{cycles}(\pi)} {\pi}_A +  \sum_{\sigma_A \neq \pi_A} W_g(\pi_A^{-1}\sigma_A, d_A) 2^{-mt + m\# \text{cycles}(\sigma_A)} \pi_A   \right\Vert_1
        \end{split}
    \end{equation}
     Using the triangle inequality and~\eqref{eq:fact_approx}, we get
    \begin{equation}
        \begin{split}
            &\left\Vert \Eset{\rho \leftarrow \eta_{n,m}}[\rho^{\otimes t}] - \Eset{\rho' \leftarrow \zeta_{n,m}}[\rho'^{\otimes t}] \right\Vert_1\\ &\leq \sum_{\pi} \left| W_g(I, d_A)2^{-mt + m\# \text{cycles}(\pi_A)} - \frac{1}{2^{t(n+m)}} 2^{m\# \text{cycles}(\pi_A)} + \frac{1}{2^{t(n+m)}}\left(\frac{t(t-1)}{2^{n+m+1}} + O\left(\frac{t^4}{2^{2(n+m)}}\right) \right) 2^{m\# \text{cycles}(\pi_A)}\right| ||\pi_A||_1 \\
             &+ \sum_{\sigma_A \neq \pi_A} \left|  W_g(\pi_A^{-1}\sigma_A, d_A) 2^{-mt + m\# \text{cycles}(\sigma_A)} \right| \Vert \pi_A \Vert_1\\
        \end{split}
    \end{equation}
    As $\pi_A$ is a unitary on $nt$-qubits, we have $||\pi_A||_1 = 2^{nt}$
    \begin{equation}
        \begin{split}
            &\left\Vert \Eset{\rho \leftarrow \eta_{n,m}}[\rho^{\otimes t}] - \Eset{\rho' \leftarrow \zeta_{n,m}}[\rho'^{\otimes t}] \right\Vert_1\\ 
            & \leq \frac{1}{2^{mt}}\left(\frac{t(t-1)}{2^{n+m+1}} + O\left(\frac{t^4}{2^{2(n+m)}}\right) \right) \sum_{\pi_A} 2^{m\# \text{cycles}(\pi_A)} + \sum_{\sigma_A} 2^{-mt + nt+ m\# \text{cycles}(\sigma_A)} \sum_{\pi_A \neq \sigma_A} |W_g(\pi_A^{-1}\sigma_A, d_A)|
        \end{split}
    \end{equation}
    Now, we define $\pi_A^{\prime} = \pi^{-1}_A\sigma_A$. Then, using the bounds on $\sum_{\pi} 2^{m\# \text{cycles}(\pi)}$ and $\sum_{\pi} |W_g(I, d)|$ from~\eqref{eq:sum_cycles} and~\eqref{eq:sum_weingarten_coeff} respectively, we get
    \begin{equation}
        \begin{split}
            &\left\Vert \Eset{\rho \leftarrow \eta_{n,m}}[\rho^{\otimes t}] - \Eset{\rho' \leftarrow \zeta_{n,m}}[\rho'^{\otimes t}] \right\Vert_1\\ 
            & \leq \left(\frac{t(t-1)}{2^{n+m+1}} + O\left(\frac{t^4}{2^{2(n+m)}}\right) \right) \left(1+ \frac{t(t-1)}{2^{m+1}} + O\left(\frac{t^4}{2^{2m}}\right) \right) + \sum_{\sigma_A} 2^{-mt + nt+ m\# \text{cycles}(\sigma_A)} \sum_{\pi^{\prime}_A \neq I} |W_g(\pi^{\prime}_A, d_A)| \\
            &\leq  \left(\frac{t(t-1)}{2^{n+m+1}} + O\left(\frac{t^4}{2^{2(n+m)}}\right) \right) \left(1+ \frac{t(t-1)}{2^{m+1}} + O\left(\frac{t^4}{2^{2m}}\right) \right) + 2^{-mt +nt}\left(2^{-nt}\left(1 + \frac{t^2}{2^n}\right) - W_g(I, d_A)\right) \sum_{\sigma_A}  2^{m \# \text{cycles}(\sigma_A)} \\
            & =  \left(\frac{t(t-1)}{2^{n+m+1}} + O\left(\frac{t^4}{2^{2(n+m)}}\right) \right) \left(1+\frac{t(t-1)}{2^{m+1}} + O\left(\frac{t^4}{2^{2m }}\right) \right) + \frac{t^2}{2^n} \left(1+\frac{t(t-1)}{2^{m+1}} + O\left(\frac{t^4}{2^{2m }}\right) \right)
             = O\left(\frac{t^2}{2^n}\right)
        \end{split}
    \end{equation}
    \end{proof}

\section{One-way state generators (OWSG)}\label{sec:OWSG}
Here, we give further details on one-way state generators (OWSGs) and how to construct them via VPRDMs. 

First, we formally define OWSGs following Ref.~\cite{morimae2022one}:
\begin{definition}[One-way state generators (OWSGs)]
\label{def:OWSG}
One-way state generators (OWSGs) are a set of algorithms $(\mathrm{KeyGen},\mathrm{StateGen},\mathrm{Ver})$ with
\begin{itemize}
\item
$\mathrm{KeyGen}(1^\lambda) \rightarrow k:$
This is a QPT algorithm which outputs a classical key $k\in\bit^\kappa$ on input of security parameter $\lambda$ which usually is $\lambda=\mathrm{poly}(n)$.
    \item 
    $\mathrm{StateGen}(k) \rightarrow \phi_k:$ This is a QPT algorithm that, on input $k$, returns 
    an $n$-qubit quantum state $\phi_k$. 
    \item
    $\mathrm{Ver}(k',\phi_k)\rightarrow\top/\bot:$ It is a QPT algorithm that given $\phi_k$ and a bit string $k'$, outputs $\top$ or $\bot$. 
\end{itemize}
We demand the following correctness and security conditions:

{\bfseries Correctness:}
The verifier only accept states prepared with the correct corresponding key $k$ 
\begin{eqnarray*}
\Pr[\top\leftarrow\mathrm{Ver}(k,\phi_k):k\leftarrow\mathrm{KeyGen}(1^\lambda),\phi_k\leftarrow\mathrm{StateGen}(k)] \ge 1-\negl(\lambda).
\end{eqnarray*}

{\bfseries Security:}
For any QPT adversary $\mathcal{A}$ and any polynomial number of copies $t$, the verifier rejects when given non-matching state $\ket{\phi_k}$ and key $k'\ne k$:
\begin{eqnarray*}
\Pr[\top\leftarrow\mathrm{Ver}(k',\phi_k):k\leftarrow\mathrm{Ver}(1^\lambda),\phi_k\leftarrow\mathrm{StateGen}(k),k'\leftarrow\mathcal{A}(1^\lambda,\phi_k^{\otimes t})]\le\negl(\lambda).
\end{eqnarray*}

\end{definition}
It is now easy to see that VPRDMs are also OWSGs. 
The state generation is the same as for VPRDM. 
The verification  of VPRDMs is equivalent to the one  of OWSGs. The security condition follows from QPT indistinguishability condition of PRDMs: If one could learn the key $k$ by measuring $t=\text{poly}(n)$ copies of the state, then one could use the key to efficiently distinguish the state, contradicting the definition of VPRDM.

Thus, VPRDMs for any mixedness parameter $m<n-\text{polylog}(n)$ also imply OWSGs. Note that they can have arbitrarily small purity $\text{tr}(\rho^2)=2^{-m}$. For $m=\omega(\log n)$, OWSGs are QPT indistinguishable from the maximally mixed state. %

\section{Verifiable and noise-robust EFI pairs}\label{sec:EFI}
Here, we show that VPRDMs can be used to construct EFI pairs that are noise-robust and can be verified, i.e. one can check the correct preparation. The possibility of verification can be critical for applications of EFI pairs, such as bit commitments.

EFI pairs serve as fundamental cryptographic primitives for bit commitment~\cite{yan2022general}, quantum oblivious transfer~\cite{bartusek2021one}, multiparty quantum computation~\cite{ananth2022cryptography}, and zero knowledge proofs~\cite{ananth2021concurrent}.

An EFI pair is a pair of efficient quantum algorithms that prepare states which are statistically far but computationally indistinguishable from each other:
\begin{definition}[EFI pairs~\cite{brakerski2022computational}]
    $\nu= \left(\nu_{b, \lambda}\right)$ is a pair of EFI states whenever it satisfies the following properties:
    \begin{enumerate}
        \item \emph{Efficient generation}: There is an efficient quantum algorithm $\mathcal{G}$ that on input $\left(1^\lambda, b\right)$ for some integer security parameter $\lambda$ and $b \in\{0,1\}$, outputs the mixed state $\nu_{b, \lambda}$.
        \item \emph{Statistical far}: $\Vert\nu_{0, \lambda}- \nu_{1, \lambda}\Vert_1=\Omega(1/\text{poly}(\lambda))$.
        \item \emph{Computational indistinguishability}: $\left(\nu_{0, \lambda}\right)_\lambda$ is computationally indistinguishable from $\left(\nu_{1, \lambda}\right)_\lambda$.
    \end{enumerate}
\end{definition}
When we talk about verifiable EFI pairs, we mean that given key $k\in\{0,1\}^{\lambda}$ and $b$, we can efficiently verify the correct preparation of the state $\rho_k$ drawn from the mixture $\nu_{b,\lambda}=\mathbb{E}_{k\in\{0,1\}^{\lambda}}[\rho_k]$.

We now show that VPRDM with $m=\Theta(n)$ and VPRDMs with $m=\text{polylog}(n)$ form EFI pairs that can be verified. We note that the proof for noise-robustness follows similar to the approach of Ref.~\cite{bansal2024pseudorandomdensitymatrices}.

\begin{theorem}[Verifiable EFI pairs with VPRDM]\label{thm:EFIPQAS}
    VPRDM with $m=\text{polylog}(n)$, and VPRDM with $m=\gamma n$ form verifiable EFI pairs.
    Here, we have factor $0<\gamma<1$, and security parameter $\lambda=cn$ with constant $0<c<\gamma-1/\text{poly}(n)$.
\end{theorem}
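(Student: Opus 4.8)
The plan is to verify each of the three EFI properties for the pair consisting of a VPRDM with $m_0=\mathrm{polylog}(n)$ and a VPRDM with $m_1=\gamma n$, leveraging results already established in the excerpt. For \emph{efficient generation}, I would simply invoke Lemma~\ref{lem:VPRDMconstr}: both states are of the form $U_k(\ket{0}\bra{0})^{\otimes n-m}\otimes\sigma_m U_k^\dagger$ with a PRU $U_k$, and hence are preparable in $\mathrm{polylog}(n)$ depth; the bit $b$ selects which mixedness parameter to use. The key $k\in\{0,1\}^{\lambda}$ is shared so that the verification algorithm $\mathcal{V}(\rho,k,m)$ of Lemma~\ref{lem:VPRDMconstr} applies directly, giving verifiability for free (completeness and soundness follow from that lemma, with the constraint $m<n-\mathrm{polylog}(n)$ satisfied since $\gamma<1$).

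For \emph{statistical distance}, the two states have rank $2^{m_0}$ and $2^{m_1}$ respectively, so their eigenvalue spectra are supported on sets of size $2^{\mathrm{polylog}(n)}$ versus $2^{\gamma n}$. I would lower-bound the trace distance by a simple spectral argument: the purity of the $m_0$ state is $2^{-m_0}=2^{-\mathrm{polylog}(n)}$, whereas the purity of the $m_1$ state is $2^{-\gamma n}$, and more usefully the projector onto the support of the low-rank ($m_0$) state captures weight $1$ from that state but only $O(2^{m_0-m_1})$ from the high-rank state. Hence $\Vert\nu_{0,\lambda}-\nu_{1,\lambda}\Vert_1 \geq 1-O(2^{m_0-\gamma n})=1-\operatorname{negl}(n)=\Omega(1/\operatorname{poly}(\lambda))$, which is well more than required.

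For \emph{computational indistinguishability}, I would chain through the maximally mixed state as a common reference point. By Corollary~\ref{cor:PRDM_indisting}, a VPRDM with $m=\omega(\log n)$ is QPT-indistinguishable from $\sigma_n$; since both $m_0=\mathrm{polylog}(n)$ and $m_1=\gamma n$ are $\omega(\log n)$, each of $\nu_{0,\lambda}$ and $\nu_{1,\lambda}$ is individually QPT-indistinguishable from $\sigma_n^{\otimes t}$. A triangle inequality over these two statements then gives that $\nu_{0,\lambda}$ and $\nu_{1,\lambda}$ are mutually QPT-indistinguishable, establishing property~3. The main obstacle I anticipate is reconciling the two security parameters: the indistinguishability bounds are $\operatorname{negl}(\lambda)$ in the PRU key length, whereas the statistical-distance bound is phrased in $n$, and one must check that the condition $0<c<\gamma-1/\operatorname{poly}(n)$ on $\lambda=cn$ genuinely keeps the ranks separated (so the statistical gap survives) while leaving enough key entropy for the PRU indistinguishability to hold. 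Noise-robustness under unital channels would then follow by the argument of Ref.~\cite{bansal2024pseudorandomdensitymatrices}, since a unital channel preserves the maximally mixed state and hence preserves indistinguishability from $\sigma_n$ through the same triangle-inequality chain.
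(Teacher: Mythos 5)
Your efficient-generation, verifiability, and computational-indistinguishability steps match the paper: the paper also chains both ensembles through the maximally mixed state via Corollary~\ref{cor:PRDM_indisting} and the triangle inequality. The genuine gap is in your statistical-farness step, and it is exactly the subtlety you flagged but did not resolve. The EFI states $\nu_{b,\lambda}$ are \emph{not} the individual keyed states $\rho_{k,m}$; the generation algorithm samples $k$ internally, so the distinguisher (statistical or computational) sees the key-averaged mixtures $\nu_0=2^{-\lambda}\sum_k \rho_{k,m_0}$ and $\nu_1=2^{-\lambda}\sum_k \rho_{k,m_1}$. Your premise that ``the two states have rank $2^{m_0}$ and $2^{m_1}$'' is therefore false: the rank of $\nu_0$ can be as large as $2^{\lambda+m_0}=2^{cn+\mathrm{polylog}(n)}$, and your projector bound $\operatorname{tr}(\Pi\,\nu_1)=O(2^{m_0-m_1})$ does not apply as stated. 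This is not a technicality --- if $\lambda$ were allowed to be large, $\nu_0$ could be statistically close to maximally mixed and the pair would \emph{fail} to be statistically far, which is precisely why the theorem hypothesizes $\lambda=cn$ with $c<\gamma-1/\mathrm{poly}(n)$. The paper handles this with an entropy argument your proposal is missing: $S(\rho_{k,m})=m$, key-averaging increases entropy by at most $\lambda$, so $S(\nu_0)\le \lambda+m_0$ while $S(\nu_1)\ge \gamma n$, and the Fannes--Audenaert inequality converts the entropy gap into $T\ge 1-\frac{\lambda+m_0+1}{\gamma n}\ge 1/\mathrm{poly}(n)$ under the stated constraint on $c$. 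That constraint enters nowhere in your argument, which is the telltale sign of the gap.

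Your spectral idea is repairable, and the repair is worth recording because in one regime it beats the paper's bound. Use $\operatorname{rank}(\nu_0)\le 2^{\lambda+m_0}$ (the support of a mixture lies in the span of the supports) together with $\norm{\nu_1}_\infty\le 2^{-m_1}$ (each $\rho_{k,m_1}$ has flat spectrum $2^{-m_1}$, and operator norm is convex, so averaging over $k$ preserves the bound). Taking $\Pi$ to be the projector onto the support of $\nu_0$,
\begin{equation}
\norm{\nu_0-\nu_1}_1 \;\ge\; \operatorname{tr}(\Pi\nu_0)-\operatorname{tr}(\Pi\nu_1)\;\ge\; 1- 2^{\lambda+m_0}\,2^{-\gamma n}\;=\;1-2^{-(\gamma-c)n+\mathrm{polylog}(n)}\,.
\end{equation}
When $\gamma-c=\Omega(1)$ this gives trace distance $1-\operatorname{negl}(n)$, strictly stronger than the paper's Fannes-based constant bound $T\gtrsim 1-c/\gamma$. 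However, it degrades to a vacuous bound when $\gamma-c=O(\mathrm{polylog}(n)/n)$, a regime the theorem's hypothesis $c<\gamma-1/\mathrm{poly}(n)$ nominally permits and where the paper's entropy argument still yields $T\ge 1/\mathrm{poly}(n)$. So as submitted your proof has a real hole; with the rank-of-the-mixture correction it becomes a valid alternative route for constant $\gamma-c$, but the paper's Fannes--Audenaert argument is the one that covers the full stated parameter range. (Your final remark on noise robustness is tangential here: that is a separate theorem in the paper, proved via the same entropy machinery, and your unital-channel heuristic alone would again miss the statistical-farness half.)
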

\begin{proof}
We choose $\nu_0$ to be a $m_0=\text{polylog}(n)$ VPRDM, and $\nu_1$ a VPRDM with $m_1=\gamma n$. As they are VPRDM, they can be efficiently verified by definition. 

Here, we note that the distinguishing algorithm (both for the statistical and computational case) only knows the integer security parameter $\lambda$, however does not know the specific key $k\in\{0,1\}^{\lambda}$.
Thus, the distinguisher has the ensemble average over all keys $k$. The corresponding ensemble density matrix for the distinguisher is given by $\nu_0=2^{-\lambda}\sum_{k_0\in\{0,1\}^{\lambda}} \rho_{k,m_0}$, and $\nu_1=2^{-\lambda}\sum_{k_1\in\{0,1\}^{\lambda}} \rho_{k,m}$ where $\rho_{k,m}=U_k (\ket{0}\bra{0})^{\otimes n-m}\otimes \sigma_m U_k^\dagger$ is a VPRDM.

Efficient generation follows from definition of VPRDM.  Computationally indistinguishability of VPRDM with $m_0=\text{polylog}(n)$ and $m_1=\Theta(n)$ follows from the fact that both VPRDMs are QPT indistinguishable from the maximally mixed state.

Now, statistical distinguishability follows from the Fannes-Audenaert inequality~\cite{audenaert2007sharp} which gives us an upper bound on $m_0$:
\begin{equation}
    |S(\nu_1) - S(\nu_0)| \leq T \log(2^n - 1) + H(\{T,1-T\}),
\label{eq:fannes}
\end{equation}
where $S(\rho)$ is the von Neumann entropy of $\rho$,
\begin{equation}
    T = \text{TD}(\nu_1, \nu_0)=\frac{1}{2}\Vert \nu_1-\nu_0\Vert_1
\end{equation}
is the trace distance between $\nu_1$ and $\nu_0$, and $H(\{T,1-T\})$ is the binary entropy of the probability distribution $\{T, 1-T\}$. 

As we have $H(\{T,1-T\}) \leq 1$, rearranging~\eqref{eq:fannes} yields
\begin{equation}\label{eq:td_lowerbound_fannes}
    T \geq 1 - \frac{S(\nu_0)+1}{S(\nu_1)}.
\end{equation}
To render the ensembles statistically far, we have to get a lower bound of $T \geq 1/\text{poly}(n)$.

Now, let us compute $S(\nu_0)$ and $S(\nu_1)$. 
Remember we have
$\nu_0=2^{-\lambda}\sum_{k\in\{0,1\}^{\lambda}}\rho_{k_0,m_0}$ and $\nu_1=2^{-\lambda}\sum_{k\in\{0,1\}^{\lambda}}\rho_{k_1,m_1}$. 
First, we bound the von-Neumann entropy of each state of the ensemble 
\begin{equation}
    S(\rho_{k,m})= m
\end{equation}
as the von-Neumann entropy is given by the $m$-qubit maximally mixed state within the VPRDM. 
Now, this immediately gives us a lower bound on the entropy of $\nu_1$ as
\begin{equation}
    S(\nu_1)\geq m_1=\gamma n\,.
\end{equation}
Next, we get an upper bound on the entropy of $\nu_0$.
The ensemble average over the key space can increase the entropy by at most $\lambda$~\cite{morimae2022quantum}, and we finally get
\begin{equation}\label{eq:constraint_Svn_EFI}
S(\nu_0)\leq \lambda +S(\rho_{k_0,m_0}) \leq \lambda +m \,.
\end{equation}
Inserting into~\eqref{eq:td_lowerbound_fannes} we get
\begin{equation}
    T \geq 1 - \frac{\lambda +m_0+1}{\gamma n}.
\end{equation}
Let us now assume a security parameter scaling as $\lambda=O(n)$. Then, we make the ansatz $\lambda=cn$ with some constant $0<c<\gamma$ and set $m_0=\text{polylog}(n)$~\cite{morimae2022quantum}
\begin{equation}
    T \geq 1 - \frac{cn+\text{polylog}(n)+1}{\gamma n}\,.
\end{equation}
We now demand $T\geq 1/\text{poly}(n)$, which is fulfilled when
\begin{equation}
     1 - \frac{cn+\text{polylog}(n)+1}{\gamma n} \geq 1/\text{poly}(n)\,,
\end{equation}
which is satisfied when $c<\gamma -1/\text{poly}(n)$.

\end{proof}

\subsection{Proof of noise-robustness}
EFI pairs constructed from VPRDMs can be noise robust. In particular, let us consider a general mixed unitary noise channel
\begin{equation}
    \Gamma(\rho) = \sum_{i=1}^{r} p_i U_i \rho U_i^\dag,
\label{eq:mixed_unitary_channel}
\end{equation}
where $p_i$ are probabilities, $U_i$ unitary operators and $r$ is the mixed-unitary rank of the channel $\Gamma$.
\begin{theorem}[Noise-robust EFI pairs]
    EFI pairs with parameters of Thm.~\ref{thm:EFIPQAS} remain EFI pars after application of efficient mixed unitary channel $\Gamma(\rho)= \sum_{i=1}^{r} p_i U_i \rho U_i^\dag$ whenever we have for the Shannon entropy
    \begin{equation}
        H(\{p_i\}_i)\leq n(1-c)-m-2\,.
    \end{equation}
\label{thm:noise_robust_EFI_theorem}
\end{theorem}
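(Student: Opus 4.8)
The plan is to verify the three defining EFI properties for the noisy ensembles $\Gamma(\nu_0)$ and $\Gamma(\nu_1)$, where (following the proof of Thm.~\ref{thm:EFIPQAS}) $\nu_0$ is the $m_0=\text{polylog}(n)$ VPRDM branch and $\nu_1$ the $m_1=\gamma n$ branch, each averaged over the keyspace. Efficient generation is immediate, since $\Gamma$ is efficient and each $\nu_b$ is efficiently preparable by Lemma~\ref{lem:VPRDMconstr}. For computational indistinguishability I would exploit that a mixed unitary channel is unital and hence fixes the maximally mixed state, $\Gamma(\sigma_n)=\sum_i p_i U_i\sigma_n U_i^\dagger=\sigma_n$. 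By Corollary~\ref{cor:PRDM_indisting} each $\nu_b$ is QPT-indistinguishable from $\sigma_n$, so precomposing any QPT distinguisher with the efficient channel $\Gamma$ shows $\Gamma(\nu_b)$ is QPT-indistinguishable from $\Gamma(\sigma_n)=\sigma_n$; the triangle inequality then gives indistinguishability of $\Gamma(\nu_0)$ from $\Gamma(\nu_1)$.

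The substance of the proof is statistical farness, for which I would re-run the entropic strategy of Thm.~\ref{thm:EFIPQAS} after controlling how the noise moves the two von Neumann entropies. For the mixed branch I would use that $\Gamma(\nu_1)=\sum_i p_i U_i\nu_1 U_i^\dagger$ is majorized by $\nu_1$ (Uhlmann's theorem), so Schur-concavity of the entropy yields $S(\Gamma(\nu_1))\ge S(\nu_1)\ge m_1=\gamma n$, the last bound being concavity of entropy over the key-average of rank-$2^{m_1}$ states; thus the noisy mixed branch retains near-maximal entropy. For the low-mixedness branch I would bound the entropy increase from above using concavity, $S(\Gamma(\nu_0))\le \sum_i p_i S(U_i\nu_0 U_i^\dagger)+H(\{p_i\}_i)=S(\nu_0)+H(\{p_i\}_i)\le \lambda+m_0+H(\{p_i\}_i)$, invoking~\eqref{eq:constraint_Svn_EFI}. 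With $\lambda=cn$ and $m=m_0$, the hypothesis $H(\{p_i\}_i)\le n(1-c)-m-2$ is precisely the statement $S(\nu_0)+H(\{p_i\}_i)\le n-2$, i.e. it keeps $S(\Gamma(\nu_0))\le n-2$, a constant below the maximal value and below the entropy of the noisy mixed branch, so that a positive entropy gap survives the noise.

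Feeding this gap into the Fannes--Audenaert inequality~\eqref{eq:fannes}, exactly as in Thm.~\ref{thm:EFIPQAS}, I would set $T=\text{TD}(\Gamma(\nu_1),\Gamma(\nu_0))$, use $H(\{T,1-T\})\le 1$, and rearrange $|S(\Gamma(\nu_1))-S(\Gamma(\nu_0))|\le T\log(2^n-1)+1$ into a lower bound $T\ge 1/\text{poly}(n)$, the direct analogue of~\eqref{eq:td_lowerbound_fannes}. This establishes that $\Gamma(\nu_0)$ and $\Gamma(\nu_1)$ are statistically far and, together with efficient generation and computational indistinguishability, completes all three EFI conditions; efficient verification carries over unchanged, since the verifier of Lemma~\ref{lem:VPRDMconstr} acts on the original key $k$.

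The step I expect to be the main obstacle is controlling the entropy inflation of the low-mixedness branch under adversarial noise. A mixed unitary channel can raise entropy by as much as $H(\{p_i\}_i)$, and if this were unbounded then both $\Gamma(\nu_0)$ and $\Gamma(\nu_1)$ could be driven toward $\sigma_n$, become nearly identical in entropy, and collapse statistical farness; the concavity upper bound $S(\Gamma(\nu_0))\le S(\nu_0)+H(\{p_i\}_i)$ is the crucial quantitative input that prevents this. The delicate bookkeeping lies in making the upper bound on $S(\Gamma(\nu_0))$, the lower bound $S(\Gamma(\nu_1))\ge\gamma n$, and the resulting $T\ge 1/\text{poly}(n)$ mutually consistent for the worst-case choice of $\{p_i\}$ and $\{U_i\}$, which is exactly what the constant $n(1-c)-m-2$ in the hypothesis is calibrated to guarantee.
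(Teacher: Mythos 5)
Your overall route coincides with the paper's: computational indistinguishability via a data-processing argument (the paper invokes contractivity of channels; your version, precomposing the distinguisher with the efficient unital $\Gamma$ and using $\Gamma(\sigma_n)=\sigma_n$, is a valid instantiation), the concavity bound $S(\Gamma(\nu_0))\le S(\nu_0)+H(\{p_i\}_i)\le \lambda+m_0+H(\{p_i\}_i)$, the lower bound $S(\Gamma(\nu_1))\ge \gamma n$ (your majorization/Schur-concavity justification is in fact more explicit than the paper's one-line claim), and Fannes--Audenaert to convert an entropy gap into a trace-distance lower bound. Up to that point the proposal is sound.

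The genuine gap is in your final calibration. You argue that the hypothesis forces $S(\Gamma(\nu_0))\le n-2$ and assert this is ``below the entropy of the noisy mixed branch.'' But the mixed branch is only guaranteed $S(\Gamma(\nu_1))\ge \gamma n$ with a constant $\gamma<1$, and $n-2\ge \gamma n$ for all $n\ge 2/(1-\gamma)$, so the claimed gap does not exist. Concretely, the rearranged Fannes--Audenaert bound gives $T\ge 1-\frac{S(\Gamma(\nu_0))+1}{S(\Gamma(\nu_1))}\ge 1-\frac{n-1}{\gamma n}$, which is negative, hence vacuous, for any constant $\gamma<1-1/n$; no $T\ge 1/\mathrm{poly}(n)$ follows from your bounds. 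What the argument actually requires is $S(\nu_0)+H(\{p_i\}_i)+1\le \gamma n\bigl(1-1/\mathrm{poly}(n)\bigr)$, i.e.\ $H(\{p_i\}_i)< n(\gamma-c)-m_0-2$ up to inverse-polynomial corrections --- and this is precisely the condition the paper's own proof derives (and which its depolarizing-noise corollary uses: there $\gamma=1-c$, giving $n(\gamma-c)=n(1-2c)$, not $n(1-c)$). The ``$n(1-c)-m-2$'' in the theorem statement matches the proof body only in the limit $\gamma\to 1$; you took it at face value and built a comparison against the maximal entropy $n$ that cannot deliver statistical farness. The repair is mechanical: keep your entropy bounds but demand the gap relative to $\gamma n$ rather than $n$, recovering the paper's condition (and, ideally, flagging the discrepancy in the stated bound rather than arguing around it).
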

\begin{proof}
First, we note that VPRDMs $\nu_0$ and $\nu_1$ are  computationally indistinguishable from each other even after application of efficiently implementable unital channels, which follows from the contractive properties of channels. 
Now, what remains to show is that $\nu_0$ and $\nu_1$ remain statistically distinguishable under noise
.
Denoting $\nu_0^\prime = \Gamma(\nu_0)$, we can bound the von Neumann entropy of the noisy state $\nu_0^\prime$ by
\begin{equation}
    S(\nu_0^\prime) = S\left(\sum_{i=1}^{r} p_i U_i \nu_0 U_i^\dag \right) \leq \sum_{i=1}^{r} p_i S(U_i \nu_0 U_i^\dag) + H(\{p_i\}_i) = S(\nu_0) + H(\{p_i\}_i),
\end{equation}
where $H(\{p_i\}_i)$ is the Shannon entropy for the probability distribution $\{p_i\}_i$. Similarly, we have again the lower bound $S(\nu_1^\prime)\geq \gamma n $.
Via~\eqref{eq:td_lowerbound_fannes} we find for the trace distance after applying the channel $T^\prime = \text{TD}(\nu_0^\prime ,\nu_1)$ as
\begin{equation}
    T^\prime \geq 1 - \frac{S(\nu_0) + H(\{p_i\}_i) + 1}{\gamma n}.
\end{equation}
We need $T^\prime=\Omega(1/\text{poly}(n))$ to be statistically far. Thus, it is sufficient to have
\begin{equation}
    S(\nu_0) + H(\{p_i\}_i) \leq \gamma n - 1 - \frac{1}{\text{poly}(n)}.
\end{equation}
The probability distribution for the Kraus operators becomes
\begin{equation}
    H(\{p_i\}_i) \leq \gamma n - S(\nu_0) - 1 - \frac{1}{\text{poly}(n)}.
\end{equation}
Since $S(\nu_0) \leq m_0+\lambda$, the above condition can be achieved when
\begin{equation}
    H(\{p_i\}_i) < \gamma n -m_0-\lambda- 2\,.
\end{equation}
We now choose $\lambda=c n$ with $0<c<\gamma-1/\text{poly}(n)$, where we find
\begin{equation}
    H(\{p_i\}_i) < n(\gamma-c) -m_0- 1 -1/\text{poly}(n)\,.
\end{equation}

\end{proof}
For $m_0 = \text{polylog}(n)$, the Shannon entropy $H(\{p_i\}_i)$ can be asymptotically linear with the number of qubits $n$, i.e. $H(\{p_i\}_i)=O(n)$.

EFI pairs from PQAS have high robustness to  local depolarizing noise $\Lambda_p^{\otimes n}(\rho)$ acting on all $n$ qubits.
This is a noise model commonly used for near-term and quantum error correction models is
It consists of the local depolarizing channel $\Lambda_p(\rho)=p/4\sum_{\alpha\in\{x,y,z\}}\sigma^\alpha \rho \sigma^\alpha+(1-3p/4)\rho$, Pauli operators $\sigma^\alpha$ with $\alpha\in\{x,y,z\}$, and the depolarizing probability $p$.
Following the proof of Ref.~\cite{bansal2024pseudorandomdensitymatrices}, one can show 
\begin{corollary}[EFI pairs robust against local depolarizing noise]
EFI pairs according to Thm.~\ref{thm:EFIPQAS} are robust to local depolarizing channel on all $n$ qubits $\Lambda_p^{\otimes n}(\rho)$, when $H(\{1-3p/4,p/4,p/4,p/4\}) \leq \gamma-c -m_0/n-1/\text{poly}(n)$. For $m 
= \operatorname{polylog}(n)$, $c=\frac{1}{2}10^{-4}$ and $\gamma=1-c$, one has noise-robustness up to $p < \frac{1}{4} -O(\operatorname{polylog}(n)/n)$.
\end{corollary}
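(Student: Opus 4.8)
The plan is to treat this as a direct specialization of Theorem~\ref{thm:noise_robust_EFI_theorem}: I would recognize that the $n$-fold local depolarizing channel $\Lambda_p^{\otimes n}$ is itself a mixed unitary channel of exactly the form $\Gamma(\rho)=\sum_i p_i U_i \rho U_i^\dagger$ covered there, compute its Shannon entropy in closed form, and then read off the induced threshold on $p$. The only genuinely new content is the entropy computation and the calibration of constants; everything else is inherited from the earlier theorem.

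First I would write the single-qubit depolarizing channel in mixed-unitary (Pauli) form, $\Lambda_p(\rho)=(1-3p/4)\rho+(p/4)\sum_{\alpha\in\{x,y,z\}}\sigma^\alpha\rho\,\sigma^\alpha$, so it is a mixed unitary channel over $\{I,\sigma^x,\sigma^y,\sigma^z\}$ with single-qubit distribution $q=\{1-3p/4,p/4,p/4,p/4\}$. The tensor power $\Lambda_p^{\otimes n}$ is then a mixed unitary channel whose unitaries are the $n$-fold tensor products of Paulis and whose probability distribution is the product $q^{\otimes n}$. Since Shannon entropy is additive over product distributions, the entropy of the full noise distribution is $H(\{p_i\}_i)=H(q^{\otimes n})=n\,H(q)=n\,H(\{1-3p/4,p/4,p/4,p/4\})$.

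Next I would invoke Theorem~\ref{thm:noise_robust_EFI_theorem}, whose sufficient condition for noise-robustness is $H(\{p_i\}_i)<n(\gamma-c)-m_0-1-1/\operatorname{poly}(n)$. Substituting the computed entropy and dividing through by $n$, while bundling the $O(1/n)$ remainder into the $1/\operatorname{poly}(n)$ slack, yields $H(\{1-3p/4,p/4,p/4,p/4\})\leq(\gamma-c)-m_0/n-1/\operatorname{poly}(n)$, which is precisely the stated condition. Specializing to $\gamma=1-c$ with $c=\tfrac12\cdot 10^{-4}$ gives $\gamma-c=1-2c=1-10^{-4}$, while $m_0=\operatorname{polylog}(n)$ contributes the vanishing term $m_0/n=O(\operatorname{polylog}(n)/n)$, so the condition becomes $H(\{1-3p/4,p/4,p/4,p/4\})\leq 1-10^{-4}-O(\operatorname{polylog}(n)/n)$.

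The final and most delicate step is to convert this entropy bound into the clean endpoint $p<\tfrac14-O(\operatorname{polylog}(n)/n)$. Here I would use that $H(q)$ is continuous and strictly increasing in $p$ on $[0,1]$ (the distribution flows monotonically from deterministic at $p=0$ to uniform at $p=1$), and estimate it at $p=\tfrac14$: a direct evaluation of $H(\{13/16,1/16,1/16,1/16\})\approx 0.993$ shows it sits strictly below $1-10^{-4}$ with a positive constant margin. Monotonicity then propagates the inequality to all $p$ up to $\tfrac14$, and the residual $\operatorname{polylog}(n)/n$ correction absorbs the $m_0/n$ and $1/\operatorname{poly}(n)$ terms. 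I expect the main obstacle to be exactly this calibration: the choice $c=\tfrac12\cdot 10^{-4}$, $\gamma=1-c$ is engineered so that the constant threshold $1-10^{-4}$ lands just above the quaternary entropy at $p=\tfrac14$, and the crux of the argument is verifying analytically (not merely numerically) that $H(q)$ remains below the threshold with room to spare, which is what makes the clean endpoint $p=\tfrac14$ rigorous.
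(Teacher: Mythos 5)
Your proposal is correct and is essentially the proof the paper intends: the paper itself omits the argument (deferring to the analogous computation in Ref.~\cite{bansal2024pseudorandomdensitymatrices}), and the natural route is exactly yours --- write $\Lambda_p^{\otimes n}$ as a mixed-unitary Pauli channel with product distribution $q^{\otimes n}$, use additivity $H(q^{\otimes n})=nH(q)$ to specialize Thm.~\ref{thm:noise_robust_EFI_theorem}, and calibrate the constants. Your endpoint check is also sound and can be made fully rigorous via $\log_2 13>3.7$, giving $H(\{13/16,1/16,1/16,1/16\})<\tfrac{13}{16}\cdot 0.3+0.75=0.99375<1-10^{-4}$, together with the monotonicity of $H(q)$ in $p$ on $[0,1]$ that you verified.
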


\end{document}